\newcommand{\vecx}[1]{\MM{#1}}
\newtheorem{theorem}{Theorem}
\newtheorem{assumption}{A}
\newtheorem{definition}[theorem]{Definition}
\newtheorem{lemma}[theorem]{Lemma}
\newtheorem{prop}{Proposition}
\newtheorem{remark}{Remark}
\newtheorem*{problem}{Problem}
\newcommand{\ShowFigures}[2]{\def\tempvara{#1}\ifnum\tempvara=1 #2\fi}
\providecommand{\conditionname}{Condition}
\def\MODIFY#1{{ #1}}
\def\MODIFYrev#1{{{ #1}}}
\def\MM#1{\boldsymbol{#1}}
\newcommand{\pp}[2]{\frac{\partial #1}{\partial #2}} 
\newcommand{\dd}[2]{\frac{\diff#1}{\diff#2}}
\def\MM#1{\boldsymbol{#1}}
\DeclareMathOperator{\curl}{curl}
\DeclareMathOperator{\vecu}{\bf u}
\DeclareMathOperator{\vecv}{\bf v}
\DeclareMathOperator{\reals}{\mathbb{R}}
\DeclareMathOperator{\gradperp}{\nabla^{\perp}}
\DeclareMathOperator{\vortproj}{\mathcal{V}}
\DeclareMathOperator{\card}{card}
\DeclareMathOperator{\identity}{Id}
\DeclareMathOperator{\ess}{ess}
\def\rmse{\mathop{\mathrm{rmse}}}
\def\sprd{\mathop{\mathrm{sprd}}}
\def\pp{{\mathrm{p}}}
\def\ppi{{\mathrm{\pi}}}
\def\qq{{\mathrm{q}}}
\def\dd{{\mathsf d}}
\def\sigmay{{\mathcal{Y}}}
\def\expect{{\mathbb{E}}}
\def\statespace{{\mathbb{S}}}
\def\statespacepde{{\mathbb{S}_{\text{PDE}}}}
\def\statespacespde{{\mathbb{S}_{\text{SPDE}}}}
\def\statespaceobs{{\mathbb{S}_{\text{obs}}}}
\def\prob{\mathbb{P}}
\def\borel{\mathcal{B}}
\begin{document}

\title{A Particle Filter for Stochastic Advection by Lie Transport (SALT):
A case study for the damped and forced incompressible 2D Euler equation\footnote{This work was partially supported by the EPSRC Standard Grant EP/N023781/1.}}

\author{Colin Cotter}
\author{Dan Crisan}
\author{Darryl D. Holm}
\author{Wei Pan\thanks{Corresponding author, email wei.pan@imperial.ac.uk}}
\author{Igor Shevchenko}
\affil[]{Department of Mathematics, Imperial College London, London, SW7 2AZ, UK}


\date{\today}

\maketitle

\begin{abstract}
  In this work, we \MODIFY{combine a stochastic model reduction with a particle filter augmented with tempering and jittering, and apply the combined algorithm} to a damped and forced incompressible 2D Euler dynamics defined on a simply connected bounded domain. 
  
  \MODIFY{We show that using the combined algorithm, we are able to assimilate data from a reference system state (the ``truth") modelled by a highly resolved numerical solution of the flow that has roughly $3.1\times10^6$ degrees of freedom, into a stochastic system having two orders of magnitude less degrees of freedom, which is able to approximate the true state reasonably accurately for $5$ large scale eddy turnover times, using modest computational hardware. }
 
 The model reduction is performed through the introduction of a stochastic advection by Lie transport (SALT) model as the signal on a coarser resolution. The SALT approach was introduced as a general theory using a geometric mechanics framework from Holm, Proc. Roy. Soc. A (2015). This work follows on the numerical implementation for SALT presented by Cotter et al, SIAM Multiscale Model. Sim. (2019) for the flow in consideration. The model reduction is substantial: The reduced SALT model has $4.9\times 10^4$ degrees of freedom.  

\MODIFY{Results from reliability tests on the assimilated system are also presented.}
\end{abstract}

\tableofcontents

\section{Introduction}

Data assimilation is the process by which observations (data) are integrated with mathematical models so that inference or prediction of the evolving state of the system can be made. 
For geoscience applications such as numerical weather prediction, it is an active area of research. There the typical global-scale state space dimension is of order $O(10^9)$, and observation data of dimension $O(10^7)$ are assimilated every $6$ -- $12$ hours. 
Current established methods used in operation centres include  4DVar, (various extended versions of) ensemble Kalman filter (EnKF) and variational assimilation methods. However, for fully nonlinear systems and complex observation operators these approaches are unsatisfactory. Our work presented in this paper is part of the wider effort to tackle high dimensional nonlinear geoscience problems using particle filters, as can be seen from the survey paper \citep{VanLeeuwen2019} and the references therein. 

The idea of modelling uncertainty using stochasticity in geophysical fluid applications is well established, see \citet{buizza1999stochastic, majda1999models, majda2001mathematical}. In this paper we work with the stochastic advection by Lie transport (SALT) approach, first formulated in \citet{Holm20140963}. It can be thought of as a framework for deriving stochastic partial differential equation (SPDE) models for geophysical fluid dynamics (GFD). The stochasticity is introduced into the advection part of the dynamics via a constrained variational principle called the Hamilton-Pontryagin principle. What results is a stochastic Euler-Poincar\'e equation, in which the local acceleration part of the transport operator is in the geometric form represented by the Lie derivative of velocity one-form in the direction of a stochastic vector field (in the form of a Stratonovich semimartingale). This approach for adding stochasticity into GFD models is different from the current state-of-the-art in numerical weather prediction (NWP), where stochastic models of uncertainty is introduced into the forcing; for example \emph{stochastic perturbation by physical tendencies} (SPPT) methodology, see \citet{palmer2018ecmwf}. 

 By adding stochasticity into the advection operator, one can model uncertain transport behaviour. In particular, the SALT stochastic term can be thought of as a model on the resolvable scale for the subgrid unresolvable fluid scales for transport. The main advantage of the SALT stochastic term is that it preserves the Kelvin's Circulation Theorem (KCT). However energy is not conserved by SALT because as one can show, an extra term called {line stretching} results from the application of the Reynolds transport theorem to the time differential of the energy; and the extra term contributes positively to the rate of change of the energy. An alternative, energy conserving stochastic approach called Location Uncertainty (LU) has also been developed \citet{Me2014}, but LU models do not preserve circulation.

A fundamental ingredient in SPDEs with SALT noise is the spectrum of the \emph{velocity-velocity correlation tensor}, \MODIFY{which is prescribed in the form of scaled eigenvectors that are denoted by $\vecx{\xi_i},\ i\in \mathbb{N}$. These appear in the following Stratonovich stochastic differential equation for the} Eulerian velocity field
\begin{equation}\label{eq:eulerian_svf}
\dd \vecx{x} = \vecv(\vecx{x},t)dt + \sum_{i} \vecx{\xi}_i (\vecx{x})\circ \dd W_t^i.
\end{equation} 
\MODIFY{Here $\vecx{x}$ denotes Eulerian position, $\vecv$ denotes large-scale mean velocity field, and $W^i$ are independent 1D Brownian motions. The $\circ$ symbol denotes Stratonovich stochastic integration.}
\citet{CoGaHo2017} showed that taking the diffusive limit of a flow with two timescales leads to a stochastic differential equation in the form of \eqref{eq:eulerian_svf}, where $\vecx{\xi}_i$ should be rigorously understood as empirical orthogonal functions corresponding to the different modes of the fast flow.

For applications, the vector fields $\vecx{\xi}_i$ need to be supplied \emph{a priori}. A \emph{data driven} calibration methodology for obtaining $\vecx{\xi}_i$ is described in \citet{cotter2018modelling,cotter2018numerically}, in which the authors numerically investigated two example fluid systems: a damped and forced 2D Euler model with no-penetration boundary condition, and a two-layer 2D quasi-geostrophic model prescribed on a channel. In those works, the SPDE model is interpreted as a parameterisation for the antecedent partial differential equation (PDE) model. Using statistical uncertainty quantification tests, it is shown that by conditioning on a suitable initial prior, an ensemble of SPDE solutions is able to effectively capture the large scale behaviour of the deterministic system on a coarser resolution. It is important to stress the fact that the deterministic system has $O(10^6)$ degrees of freedom, whilst the coarse stochastic system has $O(10^4)$ degrees of freedom. Capturing large scale dynamics on coarse scales enables the reduction of the high resolution PDE system to the coarse scale as an SPDE. This motivates further investigation of the performance of SALT SPDEs using ensemble data assimilation algorithms, where the forecast model is the SPDE prescribed on coarse scales. This is the theme of the present work, where we utilise the calibrated $\vecx{\xi}_i$ described in \citet{cotter2018numerically} in a data assimilation set-up for the damped and forced 2D Euler dynamics.

For us, sequential data assimilation is mathematically formulated as a nonlinear filtering problem, which can be tackled using a particle filter, see \citet{bain2009fundamentals,reich2015probabilistic}. 
A particle filter proceeds by alternating between forecast and analysis cycles. In each analysis cycle, observations of the current (and possibly past) state of a system are combined with the results from a prediction model (the forecast) to produce an analysis. The analysis step is typically performed either in the form of a ``best estimate" or in terms of approximating conditional distributions. The model is then advanced in time and its result becomes the forecast in the next analysis cycle.
However when applied to problems in high dimensions, without additional techniques a basic particle filter algorithm would almost certainly fail. This is due to the fact that in high dimensions the data are too informative.

\MODIFYrev{Beside the model reduction described above, in this paper, we describe two additional techniques:} tempering and jittering, which are incorporated into the basic bootstrap particle filter. The combined algorithm is applied to the damped and forced 2D Euler dynamics. These techniques are all necessary for the successful assimilation of data obtained from the true state of the system, which is modelled using a highly resolved numerical solution of $3.1\times 10^6$ degrees of freedom. 

\MODIFYrev{The theoretical justification of the tempering and jittering procedures is covered in \cite{beskosCrisanjasra}. Here, we offer some intuition to provide a better understanding of their advantages. The success of particle filters can be understood by the fact that they process data in a sequential manner. In other words, at each data assimilation time, only the new observations are assimilated, without the need to (re-)assimilate all past observations. This is advantageous as it is much harder (if not impossible) to assimilate the entire set of data available from initial time up to the current time. For high dimensional problems, a large amount of data (observations) become available at \emph{each} data assimilation step. This makes each individual data assimilation step (almost) as hard as processing data in a non-sequential manner. The tempering procedure alleviates this problem by incorporating the data gradually. It does so through several steps (in a sense mimicking the sequentiality of the particle filter as a whole). The number of steps is chosen adaptively so that at each step, only the "right amount" data is incorporated. Not too much as this would make the procedure inefficient, and not too little as this would introduce too many tempering steps. The "amount of data" is measured by the effective sample size statistic (ESS). The criteria used is to choose the so-called "temperature" parameter so that the ESS does not fall below a chosen threshold (we have used 80\% in our experiments). Through subsequent tempering steps, the entire set of data is eventually incorporated: At each tempering step the temperature is gradually increased from an initial value of 0 (this means no data is incorporated) to the final value of 1 (all data is incorporated). That is why, from a heuristic perspective, the tempering procedure succeeds when the standard particle filter would normally fail. }

	\MODIFYrev{Let us describe the intuition of the jittering step. Each tempering step is followed by a resampling procedure that eliminates particles with small weights, and multiplies particles with large weights.
	As a result, an "impoverishment" of the particle sample can occur. To counteract this, we use a jittering procedure that spreads the particles around. The procedure is chosen so that, asymptotically, no additional error or bias is added (in other words, should the particles be distributed according to the target posterior, they would remain distributed according to it). Outside the asymptotic regime, there is an inherent local error which controlled by the "jittering parameter". On one hand, the jittering parameter has to be chosen so that the local error remains small. On the other hand, it should be sufficiently large to introduce a reasonable spread in the sample in order to improve the quality of the sample. We explain the jittering procedure in section 3.3.}

The rest of the paper is structured as follows.
In section \ref{sec:models} we describe the damped and forced deterministic system and its SALT version. The deterministic system resolved on a fine resolution spatial grid is viewed as the simulated truth. The reduction to the SALT version is done via the variational approach formulated in \citet{Holm20140963}. The numerical calibration of the subgrid parameters $\vecx{\xi}_i$ and the numerical methodologies for solving the two systems are described in \citet{cotter2018numerically}.

In section \ref{sec:filtering} we formulate sequential data assimilation as a nonlinear filtering problem, in which the SALT equations are used as the signal. We describe in detail each  algorithm: bootstrap particle filter, tempering and jittering, which are all required to tackle the high dimensional nonlinear filtering problem.

In section \ref{sec:numericalresults} we present and discuss the numerical experiments and results. Two main sets of experiments are considered. In the first set, which we call the
\emph{perfect model scenario}, the true underlying state is a realisation of the signal. In the second set, which we call the \emph{imperfect model scenario}, data from the fine resolution true state is assimilated.
All experiments were run on a modest workstation which has two Intel Xeon processors totalling $32$ logical cores and $64$Gb memory. Additionally, an effective method for generating initial ensembles for SALT models is discussed.

Finally, section \ref{sec:conclusion} concludes the present work and discusses the outlook for future research.

The following is a summary of the main numerical experiments \MODIFY{and findings} contained in this paper:
\begin{itemize}
    \item \MODIFY{Using $100$ particles, we ran the particle filter over a period of $5$ eddy turnover times (ett, see \eqref{eq:ett} for definition) separately for 
    two different initial ensembles and
    observation dimensions $d_y = 289$ and  {$d_y=1089$}. {We chose an assimilation interval of $0.008$ eddy turnover times} for the imperfect model scenario, and $0.04$ eddy turnover times for the perfect model scenario. Here $1$ eddy turnover time is 1000 fine resolution time steps.
    The {root} mean square error ({rmse}), ensemble spread ($\sprd$), effective sample size (ess) and computational cost (measured in terms of number of equation evaluations) are shown, in figures \ref{figure: rmse and spread initial ensemble set 1}, \ref{figure: rmse and spread initial ensemble set 2} and \ref{figure:spde_statistics} for the perfect model scenario, and in figures \ref{figure: pde rmse and spread initial ensemble set 1}, \ref{figure: pde rmse and spread initial ensemble set 2} and \ref{figure:pde_statistics} for the imperfect model scenario.}
    
    \item \MODIFY{Forecast reliability is tested for our assimilated ensemble systems, by comparing forecast rmse with forecast ensemble spread, as well as 
    	rank histograms. The results are shown in figures \ref{figure: rmse vs spread initial ensemble set 1}, \ref{figure:spde_rankhistorgram_ux} and \ref{figure:spde_rankhistorgram_ux_msh8} for the perfect model scenario, and in figures \ref{figure: pde rmse vs spread initial ensemble set 1}, \ref{figure:pde_rankhistorgram_ux} and \ref{figure:pde_rankhistorgram_ux_msh8} for the imperfect model scenario.}
    
    \item \MODIFY{Eulerian trajectories at four individual grid points are shown for the truth, the truth plus realised observation errors, the posterior ensemble mean, the prior ensemble mean and individual ensemble members. These are shown in figures \ref{figure:spde trajectory and spread initial ensemble 1} and \ref{figure:spde trajectory spread initial ensemble set 2} for the perfect model scenario, and in figures \ref{figure:pde trajectory and spread initial ensemble 1} and \ref{figure:pde trajectory spread initial ensemble set 2} for the imperfect model scenario.}
    
    \item \MODIFY{In the perfect model scenario, using the tempering based particle filter, the numerical results show that, the data we assimilated, albeit low dimensional relative to the SPDE degree of freedom, gave sufficient information, to allow a reasonably accurate approximation of the true state. Further the posterior ensemble rmse is stable given the size of the observation noise and the data dimension. Reliability tests showed no features of bias, under-dispersion or over-dispersion. }
    
    \item \MODIFY{In the imperfect model scenario, the SALT base model reduction was combined with the the tempering based particle filter. The numerical results show that, despite the discrepancy between model and data, the data we assimilated, albeit very low dimensional relative to the PDE degree of freedom, still provides sufficient information to control the posterior ensemble. We judge that the posterior ensemble mean offers a reasonably accurately approximation of the signal, with rmse errors in the same order as those in the perfect model scenario. Further, the combined algorithm is sufficiently robust, despite slight features of bias and skew shown by the reliability tests on the assimilated ensemble system.}

\end{itemize}

\section{Deterministic and stochastic advection by Lie transport GFD models}\label{sec:models}

In this section, we describe the PDE and the SPDE models with Lie transport type stochastic terms. For the theory on SALT SPDEs see for example \citet{Holm20140963, crisan2017solution}. We follow \citet{cotter2018numerically} (also \citet{cotter2018modelling}) and use a data-driven approach to numerically model the $\boldsymbol{\xi}_i$'s. Thus information regarding the stochastic dynamics is complete except for initial and boundary conditions. Viewed as a parameterisation of the subgrid scales, numerically the SPDE shall be prescribed on a coarse resolution grid and the PDE prescribed on a fine resolution grid.

The spread of the SPDE dynamics from using $\boldsymbol{\xi}_i$ parameters calibrated with the data-driven approach described in \citet{cotter2018numerically} adequately captures the large scale features of the PDE dynamics. Those results indicate the feasibility of the calibrated SPDE as model reduction, thus providing the foundation for the present work where we utilise the SPDE as the \emph{signal} process
in a nonlinear filtering formulation. Nonlinear filtering will be the topic of discussion in section \ref{sec:filtering}.

In the following, the domain $D=[0,1]^2$ is assumed for both deterministic and stochastic models.


\subsection{Deterministic model}

We consider the vorticity version of an incompressible Euler flow with forcing and damping.
Let $\vecu:D\times\left[0,\infty\right)\rightarrow\mathbb{R}^{2},$ $\vecu\left(x,y,t\right)=\left(u_{1}\left(x,y,t\right),u_{2}\left(x,y,t\right)\right)$ denote the velocity field. Let $\omega=\hat{z}\cdot \curl{\bf u}$ denote the vorticity of $\vecu$, $\hat{z}$ denotes the $z$-axis. Note that for incompressible flows in two dimensions, $\omega$ is a scalar field. For a scalar field $g:D\rightarrow\mathbb{R},$ we write $\nabla^{\perp}g=\left(-\partial_{y}g,\partial_{x}g\right)=\hat{z}\times\nabla g.$ Let $\psi:{D}\times\left[0,\infty\right)\rightarrow\mathbb{R}$ denote the \emph{stream function}. The stream function is related to the fluid velocity and vorticity by ${\bf u}=\nabla^{\perp}\psi$ and $\omega=\Delta\psi$ respectively, where $\Delta = \partial_x^2+\partial_y^2$ is the Laplacian operator in $\mathbb{R}^2$. The existence of the stream function is guaranteed by the incompressibility assumption.

We now write down the deterministic model, 
\begin{eqnarray}
\partial_{t}\omega+\mathcal{L}_{\vecu}\omega & = & Q-r\omega\label{eq:2DEulerVorticity}\\
{\bf u} & = & \nabla^{\perp}\psi\label{eq:u_is_grad_perp_stream}\\
\Delta\psi & = & \omega\label{eq:laplace_stream_is_vorticity}.
\end{eqnarray}
We choose the forcing $Q$ to be given by 
\begin{equation}
Q\left(x,y\right)=\MODIFY{a}\sin\left(\MODIFY{b}\pi x\right),\qquad (x,y)\in D\label{eq:forcing}
\end{equation}
where \MODIFY{$a,$ $b$} and $r$ are constants having the following roles: \MODIFY{$a \geq 0$} controls the strength of the forcing; \MODIFY{$b$} is an integer interpreted as the number of gyres in the external forcing; and $r>0$ can be seen as the damping rate.  $\mathcal{L}_{\vecu} \omega$ denotes the Lie derivative of $\omega$ with respect to the vector field $\vecu$. When applied to scalar fields, $\mathcal{L}_{\vecu}$ is simply the directional derivative with respect to $\vecu$, see \citet{chernchenlam}
\[
\mathcal{L}_{\vecu}  = \vecu\cdot\nabla .
\]
We consider a no-penetration spatial boundary condition 
\begin{equation}
\left.\psi\right|_{\partial\mathcal{D}}=0\label{eq:boundary_condition_streamfunction}
\end{equation} to close the system.
This system is a special case of a nonlinear, one-layer quasi-geostrophic (QG) model that is driven by winds above.

\subsection{Stochastic model}

Consider the space $\Omega = C_0([0, \infty), \reals^m)$ of continuous function whose value at $0$ is zero. It is equipped with the  Wiener measure $\prob$ and its natural filtration $\{ \mathcal{F}_* \}$.  Let $\{W_t: t \in [0, \infty) \}$ be the canonical Brownian motion on $\reals^m$, that is for $\gamma\in \Omega$, $W_t(\gamma)=\gamma(t)$ is the evaluation map. We write $W_{t}^{i}$ to denote the $i$'th component of $W_t$. The SALT version of the Euler fluid equation (\ref{eq:2DEulerVorticity}) as derived in \citet{Holm20140963}. \citet{cotter2018numerically} introduced damping and forcing to facilitate statistical equilibrium in the underlying resolved system, leading to the following stochastic partial differential equation (SPDE),
\begin{equation}
\dd q+\mathcal{L}_{\vecv}q\ dt+\sum_{i=1}^{m}\mathcal{L}_{\boldsymbol{\xi}_{i}}q\circ \dd W_{t}^{i}=\left(Q-rq\right)dt\label{eq:stochastic2dEuler}
\end{equation}
where the vector fields $\boldsymbol{\xi}_{i}$ represent scaled eigenvectors of the velocity-velocity correlation tensor $C_{ij}=\boldsymbol{\xi}_{i}\boldsymbol{\xi}_{j}^{T}$.

Equation (\ref{eq:stochastic2dEuler}) arises from a time-scale separation assumption for the deterministic Eulerian transport velocity $\vecu$, leading to the following Stratonovich stochastic differential equation
\begin{equation}\label{eq:StochVF}
\dd \tilde{\vecx{x}}_t (\vecx{x}) = 
\vecv(\vecx{x},t) dt + \sum_{i=1}^m\boldsymbol{\xi}_{i}(\vecx{x})\circ \dd W_{t}^{i}
\end{equation}
where $\vecv$ and $\vecx{\xi}_i$ are divergence free vector fields, from which \eqref{eq:stochastic2dEuler} may be derived.
Here one can intuitively think of $\vecv$ as the ``large'' scale mean part of $\vecu$. In this present work since we are interested in the practicality of \eqref{eq:stochastic2dEuler} for data assimilation, we follow \citet{cotter2018numerically} and make the approximation that the sum in \eqref{eq:StochVF} is finite. Hence the stochastic term in \eqref{eq:stochastic2dEuler} also consists of $m$ terms.

Let $\tilde{\psi}$ denote the stream function of $\vecv$, and let $\zeta_i$ denote the stream function of $\vecx{\xi}_i$, i.e.
\[
\boldsymbol{\xi}_{i}=\nabla^{\perp}\zeta_{i}.
\]
Note that $\zeta_i$ is constant in time. The $\zeta_i$ can be solved for and stored on the computer after the $\vecx{\xi}_i$ are obtained. For this the boundary condition
\begin{equation}
    \quad \left.\zeta_i\right|_{\partial\mathcal{D}}=0
\end{equation}
is enforced for each $i=1,\dots,m.$
Then \eqref{eq:StochVF} can be expressed in terms of $\tilde{\psi}$ and $\zeta_i$,
\begin{equation}
\dd \tilde{\vecx{x}} =\gradperp\big(\tilde{\psi} dt+\sum_{i=1}^{m}\zeta_{i}\circ \dd W_{t}^{i}\big).\label{eq:stochastic_transport_velocity_streamfunction_form}
\end{equation}
Expressing the transport velocity in this form is useful because it allows us to introduce stochastic perturbation (i.e. terms with $\circ \ \dd W_{t}^{i}$) via the stream function when solving the SPDE system numerically, thereby keeping the discretisation of (\ref{eq:stochastic2dEuler}) the same as the deterministic equation (\ref{eq:2DEulerVorticity}), see \citet{cotter2018numerically}. 

The full set of stochastic equations is
\begin{eqnarray}
\dd q+\mathcal{L}_{\vecv}qdt+\sum_{i=1}^{m}\mathcal{L}_{\boldsymbol{\xi}_{i}}q\circ \dd W_{t}^{i}&=&\left(Q-rq\right)dt \label{eq:spdeeuler}\\
\vecv & = & \nabla^{\perp}\tilde{\psi} \\
\Delta\tilde{\psi} & = & q 
\end{eqnarray}
with boundary condition
\begin{equation}\label{eq:spdeboundary}
\tilde{\psi}|_{\partial\mathcal{D}}=0.
\end{equation}
The forcing term is the same as the deterministic case.
\begin{remark}
        The It\^o form of (\ref{eq:stochastic2dEuler}) is obtained from an application of the identity
        \begin{equation}
        \int_{0}^{t}\mathcal{L}_{\boldsymbol{\xi}_{i}}q\left(s\right)\circ \dd W_{s}^{i}=\int_{0}^{t}\mathcal{L}_{\boldsymbol{\xi}_{i}}q\left(s\right)\dd W_{s}^{i}+\frac{1}{2}\left\langle \mathcal{L}_{\boldsymbol{\xi}_{i}}q,W^{i}\right\rangle _{t}\label{eq:ito_stochastic2deuler}
        \end{equation}
        where $\left\langle .,.\right\rangle _{t}$ is the cross-variation bracket and 
        \begin{align*}
        \left\langle \mathcal{L}_{\boldsymbol{\xi}_{i}}q,W^{i}\right\rangle _{t} & =\mathcal{L}_{\boldsymbol{\xi}_{i}}\left\langle q,W^{i}\right\rangle _{t}\\
        & =\mathcal{L}_{\boldsymbol{\xi}_{i}}\left\langle \int\{(Q-rq)dt-\mathcal{L}_{\vecv}qdt-\sum_{j=1}^{\infty}\mathcal{L}_{\boldsymbol{\xi}_{j}}q\circ \dd W_{t}^{j}\},W^{i}\right\rangle _{t}\\
        & =\mathcal{L}_{\boldsymbol{\xi}_{i}}\left\langle -\int_{0}^{.}\mathcal{L}_{\boldsymbol{\xi}_{i}}q\circ \dd W_{s}^{i},W^{i}\right\rangle _{t}\\
        & =\mathcal{L}_{\boldsymbol{\xi}_{i}}\left(-\int_{0}^{t}\mathcal{L}_{\boldsymbol{\xi}_{i}}q\left(s\right)ds\right)
         =-\int_{0}^{t}\mathcal{L}_{\boldsymbol{\xi}_{i}}^{2}q(s)ds
        \end{align*}
        Hence
        \[
        \int_{0}^{t}\mathcal{L}_{\boldsymbol{\xi}_{i}}q\left(s\right)\circ \dd W_{s}^{i}=\int_{0}^{t}\mathcal{L}_{\boldsymbol{\xi}_{i}}q\left(s\right)\dd W_{s}^{i}-\frac{1}{2}\int_{0}^{t}\mathcal{L}_{\boldsymbol{\xi}_{i}}^{2}q(s)ds
        \]
        and (\ref{eq:ito_stochastic2deuler}) is thus
        \begin{equation}
        \dd q+\mathcal{L}_{\vecv}qdt+\sum_{i=1}^{m}\mathcal{L}_{\boldsymbol{\xi}_{i}}q\ \dd W_{t}^{i}=\frac{1}{2}\sum_{i=1}^{m}\mathcal{L}_{\boldsymbol{\xi}_{i}}^{2}q\ dt+(Q-rq)dt\label{eq:ito_spde_euler}
        \end{equation}
        where $\mathcal{L}_{\boldsymbol{\xi}_{i}}^{2}q=\mathcal{L}_{\boldsymbol{\xi}_{i}}\left(\mathcal{L}_{\boldsymbol{\xi}_{i}}q\right)=\left[\boldsymbol{\xi}_{i},\left[\boldsymbol{\xi}_{i},q\right]\right]$
        is the double Lie derivative of $q$ with respect to the divergence free vector field $\boldsymbol{\xi}_{i}$.
\end{remark}
For the damped and forced stochastic system considered in this section, on the torus $\mathbb{T}^2$ a global wellposedness theorem with the solution space $W^{2,2}\left(\mathbb{T}^2\right)$ is proved in \citet{crisan2019well}. In a forthcoming sequel to this work we also show the the wellposedness of the solution on the bounded domain $D$ with no-penetration boundary conditions. We make the following important assumption.
\begin{assumption}\label{ass:spdewellposedness}
The stochastic system \eqref{eq:spdeeuler} -- \eqref{eq:spdeboundary} is wellposed in some solution space denoted by $\statespacespde$ in the sense that a unique global in time, pathwise distributional solution\footnote{We refer to \citet{crisan2019well} for the precise definition of these terms.} exists. 
\end{assumption}
Under assumption \ref{ass:spdewellposedness}, it is useful to introduce the following. Let $G: \statespacespde \times \Omega \rightarrow \statespacespde$ denote the It\^o solution map of the stochastic system \eqref{eq:spdeeuler} -- \eqref{eq:spdeboundary} so that
\begin{equation}\label{eq:spdesolutionop}
	G(q_0, \gamma)_t = q_t, \quad q_0 \in \statespacespde.
\end{equation}
The solution map $G$ is used in the next section where probability measures on the solution space are defined as the push-forward of $\prob$ using $G$.

\section{Nonlinear Filtering }\label{sec:filtering}
        
In this section, we formulate data assimilation as a nonlinear filtering problem in which the aim is to utilise observed data to correct the distribution of predictive dynamics. We describe a particle filter methodology which incorporates three additional techniques that are required to effectively tackle this {high dimensional} data assimilation problem.

In nonlinear filtering terminology the predictive dynamics is often called the \emph{signal }\footnote{ Also known as the forecast model in statistics and meteorology literature, \citep[see][]{reich2015probabilistic}.}. The signal in our setting corresponds to the SALT SPDE. Data is obtained via an \emph{observation process} which represents noisy partial measurements of the underlying true system state. The goal is to determine the posterior distribution $\pi_t$ of the signal at time $t$ given the information accumulated from observations. This is known as the \emph{filtering problem}. This is different to \emph{inversion} problems (also called smoothing problems),  where one is interested in obtaining the posterior distribution of the system's initial condition, see for example \citet{StuartAM2010IpAB}. 

The stochastic filtering framework enables us not just to provide a solution to the data assimilation problem, but also offer a clear language in which to explain the details and the intricacies of the problem. We detail below an elementary introduction to the filtering problem. 

Let $\statespace$ denote a given state space, and let $\mathcal{P}(\statespace)$ denote the set of probability measures on the state space. In what follows the state space will be $\statespace=\mathbb R^{d_x}$, where $d_x$ is the dimension of the space. To avoid technical complications we will assume in the following that time runs discretely $t=0,1,\ldots$ .
We shall work in a Bayesian setting, in other words we will assume that we know the distribution of the signal $X_t$ for $t=0,1,\ldots$, which will be denoted by $p_t$ for $t=0,1,\ldots$. We also assume that partial observations, denoted by $Y_t$, of dimension $\statespaceobs = \reals^{d_y}$ with $d_y \leq d_x$ are available to us at times $t=0,1,\ldots$ and we wish to approximate the signal  $X_t$ given the accumulated observations $Y_1, \ldots , Y_t$. Of course we could aim to approximate $X_t$ using an arbitrary 
$\mathcal{Y}_t$-adapted estimator $\bar{X}_t$, where $\mathcal{Y}_t$ is the $\sigma$-algebra $\mathcal{Y}_t=\sigma(Y_1,\ldots, Y_t)$.  However, the \emph{best} estimator $\hat X_t$ is the conditional expectation of $X_t$ given $\mathcal{Y}_t$, $\hat X _t=\mathbb E[X_t|\mathcal Y _ t]$. In this context, by the best estimator, we mean the minimiser of the mean square error $MSE(\bar X _t)=  \mathbb E[\|X_t-\bar X_t\|^2|\mathcal Y _ t]$, where $\|\cdot\|$ is the standard Euclidian norm on $\mathbb R^{d_x}$. Of course we would not just want to compute/estimate $\hat X _t=\mathbb E[X_t|\mathcal Y _ t]$, but also the error that we would make if we approximate $X_t$ with $\hat X_t$, i.e., for $t=0,1,\ldots$. 
\[
\expect[\|X_t-\hat X_t\|^2|\mathcal Y _ t]=\expect[\|X_t\|^2|\mathcal Y _ t]-\expect[\|X_t\||\mathcal Y _ t]^2.
\]
The quantiles of the approximation error will also be of interest. Therefore, in general, the filtering problem consists in determining the condition distribution of the signal given given $\mathcal{Y}_t$ denoted by $\pi_t$. Once $\pi_t$ is determined, then its first moment (the mean vector) will give us $\hat X _t$, its covariance matrix can be used to compute the mean square error $MSE(\hat X _t)$, etc.
So one can adopt one of two different approaches of estimation the signal given partial observations.

\begin{itemize}

\item Develop a data assimilation algorithm that results in a a point approximation $\bar X_t$ of the signal using the data $Y_1, \ldots , Y_t$. The approximation may or may not be optimal and only, on rare occasions, an estimate of the error $X_t-\bar X_t$ will be available. 

\item Develop a data assimilation algorithm that results in an approximation of $\pi_t$ the conditional distribution of the signal $X_t$ using the data $Y_1, \ldots , Y_t$. This in turn will offer an approximation of the optimal estimator $\hat X_t$ as well as the approximation of the error, quantiles, occupation mesures, etc.   

\end{itemize}

Of course, algorithmically we expect the first problem to be a lot easier than the second. The computation, of an estimator $\bar X_t$ that is an element of $\mathbb R^{d_x}$ would be expected to be a lot easier that that of a probability measure over $\mathbb R^{d_x}.$ The first one is a finite dimensional object the latter is an infinite dimensional one. However, in the exceptional case when the signal is a linear time-series and the observation has linear dependence on the signal and they are driven by Gaussian noise the two approaches more or less coincide. The reason is that, in this case $\pi_t$ is Guassian and one can explicitly write the recurrence formula for the pair $(\hat X _t, P_t)$, where $P_t$ is the covariance matrix of $\pi_t$. So on one hand one can compute directly the optimal estimator $\hat X_t$ and on the other hand the Gaussianity ensure that $\pi_t$ is fully described by 
$(\hat X_t, P_t)$. This is the so-called Kalman-Filter. There are numerous extensions of this method to non-linear filter that attempt a similar methodology for the non-Gaussian conditional distribution. Such approaches are not optimal in the sense that they don't offer a point estimator that is the optimal one and the corresponding ``covariance" matrix that is produced is not the covariance matrix of $\pi_t$. The existing literature in this direction is vast, we cite here \citep{reich2015probabilistic, evensen2009data, ljung1979asymptotic}.

Particle filters are a class of numerical methods that can be used to implement the second approach. They have been highly successful for problems in which the dimension of the state space $d_x$ has been low to medium. However, in recent works \citep{kantas2014sequential, beskos2017stable, beskos2014error} they have been shown to also work in high dimensions $d_x$. In this paper, we tackle a state space with dimension of order $O(10^6)$. For a filtering perspective, we overcome here one other hurdle as we explain below.

Let us denote by $p_t\in \mathcal P (\underset{\left( t+1\right) \mathrm{-times}}
{\underbrace{\mathbb R^{d_x}\times\ldots\times\mathbb R^{d_x}}})$, $t=0,1,\ldots$ the (prior) distribution of the signal on the path space $(X_0,X_1,\ldots, X_t)$. The prior distribution of the signal $p_t$ and the observations $Y_s$, $s=0,1,\ldots,t$ are the building blocks of $\pi_t$, $t=0,1,\ldots$. To be more precise, one can show that there exists a mapping     
\begin{equation}\label{eq:filtering_density}
(p,y_{0},\ldots y_t)\mapsto\Xi(p,y_{0},\ldots y_t):\mathcal P (\underset{\left( t+1\right) \mathrm{-times}}
{\underbrace{\mathbb R^{d_x}\times\ldots\times\mathbb R^{d_x}}})\times \underset{\left( t+1\right) \mathrm{-times}}
{\underbrace{\mathbb R^{d_y}\times\ldots\times\mathbb R^{d_y}}} \mapsto \mathcal P (\mathbb R ^{d_x}),
\end{equation}
such that $\pi_t=\Xi(p_t,Y_{0},\ldots Y_t)$. Under very general \MODIFY{conditions (for example, it is enough to assume that the likelihood functions are continuous in the $y$ variable and apply Lemma 2.4 from \citep{crisan2018stable})} on the signal and the observation, this mapping is jointly continuous on the product space $\mathcal P (\underset{\left( t+1\right) \mathrm{-times}}
{\underbrace{\mathbb R^{d_x}\times\ldots\times\mathbb R^{d_x}}})\times \underset{\left( t+1\right) \mathrm{-times}}
{\underbrace{\mathbb R^{d_y}\times\ldots\times\mathbb R^{d_y}}}.$ 
This would mean that $\pi_t$ will give a reasonable approximation of the conditional distribution of the signal as long as the distribution $(X_0,X_1,\ldots, X_t)$ does not differ significantly from the one used to construct $\pi_t$. The same will happen when the true law of the observation does not differ significantly from the chosen model. This property of the posterior distribution is crucial, see \MODIFYrev{\emph{imperfect model} in section \ref{sec:numericalresults} for details}.

In the rest of this section we consider only the space-time discretised SPDE signal, of spatial dimension $d_x$. The observation process is given by noisy spatial evaluations of an underlying true system state at discrete time steps. We consider two scenarios for the underlying true system state, henceforth called the \emph{truth}.

In the first scenario, we aim to compute the conditional distribution of the signal given partial observations of a single realised trajectory of the SPDE system. In this case the predictive dynamics and the truth are from the same dynamical system.
We call this the \emph{perfect model scenario} (or twin experiment, see \citet{reich2015probabilistic}). 

In the second scenario, we use instead noisy spatial evaluations of a space-time discretised solution corresponding to the PDE system \eqref{eq:2DEulerVorticity} -- \eqref{eq:boundary_condition_streamfunction}. We call this the \emph{imperfect model scenario}. The truth in this case is computed on a more refined grid than solutions of the SPDE. Nevertheless the solution of the SPDE converges to that of the PDE as the coarser grid converge, see \citet{cotter2018numerically}. Similarly the corresponding observations will converge (provided the observation noise does not change).  This ensures the  successful assimilation of PDE data into the SPDE model, assurance from the uncertainty quantification tests shown in \citet{cotter2018numerically} is necessary to numerically guarantee that the mis-match between state spaces remains small.

To our knowledge, this is the first application of particle filters to the case where the signal is described by a SALT SPDE system. As we explain below a straight application of the classical bootstrap particle filter algorithm fails. To succeed we implement and incorporate the following procedures.
\begin{itemize}
  \item Model reduction -- approximate a high dimensional system using a low dimensional system via stochastic modelling, the result of which can be further reduced by choosing a projection of the noise process onto a submanifold. This was accomplished in \citet{cotter2018numerically}.

  \item Tempering -- compute a sequence of intermediate measures $\pi^k_t$ parameterised by a finite number of temperatures that control the smoothness of the density of $\pi^k_t$. This procedure eases the problem of highly singular posteriors in high dimensions, which come from the fact that high dimensional observations are too informative. 

  \item Jittering -- a Markov chain Monte Carlo (MCMC) based technique for recovering lost population diversity in particle filter algorithms.
\end{itemize}
These techniques are added to the basic bootstrap particle filter, and are demonstratively necessary, theoretically consistent and rigorously justified.
In addition, we shall pay particular attention to the initialisation of the particle filter, though this is discussed in section \ref{sec:initialdist}. 

Before proceeding to the problem formulation, we insert an important remark.
\begin{remark}\label{rem:ignore_functionspace}
    Our spatial discretisations for the PDE and SPDE fields are defined on appropriate finite element spaces, see \citet{cotter2018numerically} for details of the numerical methods we use for the models under consideration. Under assumption \ref{ass:spdewellposedness}, it is important to understand that instead of the finite state space $\statespace = \reals^{d_x}$, the actual problem involves measures defined on infinite dimensional function spaces, in particular it is highly plausible that in theory the state space for the SPDE is Sobolev $W^{k,2}(D)$ for $k\geq 2$. Discussions of these technical complications are not the focus of this work. And since in practise we work with numerical solutions anyhow, we setup our filtering problem in a finite dimensional setting. However, the methods we use are all theoretically consistent in the limit, see \citet{StuartAM2010IpAB,dashti2017bayesian}.
\end{remark}
In light of remark \ref{rem:ignore_functionspace}, henceforth we drop the word ``discretised" when describing the state space, signal and observation processes.

\subsection{Filtering problem formulation}\label{secfilterproblem}

    Consider discrete times $\Lambda = \lbrace t_0, t_1, \dots, t_n,\dots\rbrace$. Let $X:\Lambda\times\Omega \rightarrow \statespace$ be a discrete  time Markov process called the \emph{signal}.
    Let $Y:\Lambda\rightarrow \statespaceobs$ be a discrete time process called the \emph{observation} process. We assume $Y(t_0)=0$ almost surely (a.s.). We consider \emph{Eulerian data assimilation} where the observations correspond to fixed spatial points $Y_t = (Y_t^1,\dots, Y_t^{d_y}) \in D$, for all $t\in\Lambda.$ As already mentioned in the \MODIFY{ introduction}, we denote the dimensions of $\statespace$ and $\statespaceobs$ by $d_x$ and $d_y$ respectively.
        
    We take $X$ and $Y$ to correspond to the \emph{velocity} {vector field}.  Mathematically we could also consider the vorticity field or the stream function, but in real world scenarios those fields may be difficult to observe directly.
    We denote by $X_{i:j}$ and $Y_{i:j}$ the path of the signal and of the observation process from time $t_i$ to time $t_j$,
    \[
    X_{i:j} = (X_{t_i},X_{t_{i+1}},\dots ,X_{t_j}),\quad Y_{i:j}=(Y_{t_i},Y_{t_{i+1}},\dots ,Y_{t_j}).
    \]
    Let $x_{i:j}$ and $y_{i:j}$ denote particular trajectories of $X_{i:j}$ and $Y_{i:j}$. For notational convenience, we may write in the subscripts $i$ to mean $t_i$. 
    
    It is useful to introduce the following standard notation in the case when $\mu$ is a measure and $f$ is a measurable function, and $K$ is a Markov kernel
    \[
    \mu f \triangleq \int f d\mu,\ \ \ 
    \mu K(A)\triangleq \int K(x, A)\mu(dx),
    \ \ \ Kf(x)\triangleq \int f(z) K(x, dz).
    \] 
    The marginal distribution of the signal changes according to 
    \begin{equation}
    \prob( \left.X_t \in A \ \right| X_{t-1}=x_{t-1} )=\int_{A} k_t(x_{t-1}, dx_t)\label{eq:signalmarginaldistribution}
    \end{equation}
    for $A \in \borel(\statespace)$, and $k_t$ is a probability transition kernel defined by the push-forward of $\prob$ using the (discretised) SPDE solution map $G$ from assumption \ref{ass:spdewellposedness}. 
    
    In standard filtering theory the observation process is defined by
    \begin{equation}\label{eq:obs}
            Y_t = h(X_t) +\epsilon_t ,\quad t\in \Lambda
    \end{equation}
    where $h:\statespace \rightarrow \statespaceobs$ is a Borel-measurable function, and for $t\in \Lambda$, $\epsilon_t:\Omega \rightarrow \statespaceobs$ are mutually independent Gaussian distributed random vectors with mean zero and covariance matrix $\gamma$. Thus
    
    \begin{equation}
    \prob( \left.Y_t \in B \right| X_{t}=x_{t})=\int_{B} g_t(y_t - h(x_{t}))dy_t \label{eq:likelihood}
    \end{equation}
    for $B \in \borel({\statespaceobs})$ and Gaussian density $g_t$. For convenience, define $g_t^{y_t}(x)\triangleq g_t(y_t - h(x_t))$ which is commonly referred to as the \emph{likelihood} function. 
    
    We can now define the filtering problem.
    \begin{problem}[Filtering Problem] \label{prob:filtering_problem}
        For $t \in \Lambda$, we wish to determine the conditional distribution of the signal given the information accumulated from observations, i.e.
        \begin{equation}\label{eq:postcondexpspde}
        \pi_t \varphi \triangleq \mathbb{E}\left[ \left. \varphi(X_t) \right| \sigmay_t \right],\quad \sigmay_t = \sigma(Y_{0:t})
        \end{equation}
        for all bounded measurable functions $\varphi \in B\left(\statespace\right)$, with $\pi_0$ being the given initial probability distribution on the state space $(\statespace,\borel(\statespace))$. In particular when $\varphi = \boldsymbol{1}_{A}$ for $A \in \borel(\statespace)$ we have $\pi_t \boldsymbol{1}_A =\pi_t(A)= \prob(X_t\in A \,| \sigmay_t)$. 
    \end{problem}
    In statistics and engineering literature, $\pi_t$ is often called the Bayesian posterior distribution.  
    Note that $\pi_t$ is a random probability measure. For arbitrary $y_{0:t}$, denote
    \[
    \pi_t^{y_{0:t}} \varphi \triangleq \expect\left[ \left. \varphi(X_t) \right| Y_{0:t}=y_{0:t} \right],\quad
    \pi_t^{y_{0:t}}(A) = \prob(X_t\in A \,| Y_{0:t}=y_{0:t}).
    \]
    We also introduce predicted conditional probability measures $p_t$ and $p_t^{y_{0:t}}$ defined by
    \[
    p_t^{y_{0:t-1}} \varphi \triangleq \expect\left[ \left. \varphi(X_t) \right| Y_{0:t-1}=y_{0:t-1} \right],\quad
    p_t^{y_{0:t-1}}(A) = \prob(X_t\in A \,| Y_{0:t-1}=y_{0:t-1}).
    \]    
    We have $\prob$-almost surely the following \emph{Bayes recurrence relation}, see \citet{bain2009fundamentals}. For $t\in\Lambda$ and $A \in \borel(\statespace)$,
    \begin{align}
           & p_{t}(A) \triangleq \pi_{t-1}k_t(A) = 
           \int k_t(x_{t-1},A) \pi_{t-1}(dx_{t-1})  && \text{prediction} \label{eq:prediction}\\
           & \pi_{t}(A) = C_t^{-1} p_t \, g_t^{Y_t}(A) =  C_t^{-1}\int_{A} g_t^{Y_{t}}(x_t) p_t(dx_{t})  && \text{update} \label{eq:updating}
    \end{align}
    where 
    \[
    C_t \triangleq p_t \, g_t^{Y_t} = \int_{\statespace} g_t^{Y_{t}}(x_t) p_t(dx_{t})
    \] is a normalising constant. Due to \eqref{eq:updating}, we may also write $\frac{d\pi_t}{dp_t} \propto g^{Y_t}_{t}$, thus $\pi_t = p_t \frac{d\pi_t}{dp_t}$.
    
        In the general case for any bounded measurable function $\varphi \in \borel( \statespace)$, we have for problem \ref{prob:filtering_problem} the recurrence relation
    \begin{align}
                & p_t\, \varphi = \pi_{t-1}\, k_t \, \varphi && \text{prediction} \\
                & \pi_t \, \varphi = C_t^{-1}p_t \, g_t^{Y_t}\,\varphi && \text{update}
    \end{align}
        Except for a few rare examples of the signal, it is extremely difficult to directly evaluate $\pi_t$ because there are no ``simple'' expressions. In section \ref{subsec:smc} we shall describe the particle filter methodology that we employ to tackle the filtering problem. Note that in statistics and engineering literature, particle filters are often called sequential monte carlo (SMC) methods.

\subsection{Particle filter}\label{subsec:smc}
 Particle filter methods are among the most successful and versatile methods for numerically tackling the filtering problem. A basic algorithm implements the Bayes recurrence relation by approximating  the measure valued processes $\pi_t$ and $p_t$ by $N$-particle empirical distributions. The position of each particle is updated using the signal's transition kernel. At the same time, individual weights are kept up-to-date in accordance with the updated particle positions. It is in the weights updating step that we take into account the information provided by the observations: particles are reweighted using the likelihood function. A new set of particle positions can be sampled based on the updated weights and the procedure iterates. 
 
 Due to the high dimensional nature of the systems in consideration, additional techniques are necessary in order to make the basic algorithm work effectively. We provide a concise presentation of the algorithms employed, and note that these methods are all mathematically rigorous. For more thorough discussions we refer the reader to \citet{bain2009fundamentals,reich2015probabilistic,dashti2017bayesian,kantas2014sequential,beskos2014error}.
 
 \subsubsection{Bootstrap particle filter}
 The basic algorithm, called the \emph{bootstrap particle filter} or the \emph{sampling importance resampling} (SIR) algorithm, proceeds in accordance with the Bayes recurrence relation \eqref{eq:prediction} -- \eqref{eq:updating} by repeating \emph{prediction} and \emph{update} steps. To define the method, we write an $N$-particle empirical approximation of $\pi_i$. \MODIFY{Thus we have}
 \begin{equation}\label{eq:smc}
  \pi_i \approx \pi_i^N \triangleq \frac{1}{ \sum_{m=1}^{N}w_i^{(m)}} \sum_{n=1}^{N} w_i^{(n)} \delta(x_i^{(n)}) = \sum_{n=1}^{N} \bar{w}_i^{(n)} \delta(x_i^{(n)})
 \end{equation}
 where $\delta$ denotes Dirac measure. The discrete measure $\pi_i^N$ is completely determined by particle positions $x_i^{(n)}\in \statespace$ and weights $w_i^{(n)}\in \reals$, $n=1,\dots,N$.
 We define the update rule \[\{x_i^{(n)},w_i^{(n)}\}_{n=1}^{N} \rightarrow \{x_{i+1}^{(n)},w_{i}^{(n)}\}_{n=1}^{N},\] for advancing $\pi_i^N$ to $p_{i+1}^N$ to be given by the numerical implementation of the SPDE solution map G, see \eqref{eq:spdesolutionop},
  \begin{equation}
  x_{i+1}^{(n)} = G(x_i^{(n)},\omega^{(n)})_{i+1}, \quad \omega^{(n)}\in \Omega.
  \end{equation} Note that each particle position $x_i^{(n)}$ is updated independently.

For the weights, suppose the particles $x_i^{(n)}$, $n=1,\dots,N$ are independent samples from $\pi_i$ then we have equal weighting for each particle
\[
     \pi_{i}^N = \frac{1}{N}\sum_{n=1}^{N} \delta(x_i^{(n)}).
\]
This does not change in the prediction step, thus
 \begin{equation}\label{eq:pfprediction}
         p_{i+1}^N = \frac{1}{N}\sum_{n=1}^{N} \delta(G(x_{i}^{(n)}, \omega^{(n)})_{i+1}) = \frac{1}{N}\sum_{n=1}^{N} \delta(x_{i+1}^{(n)}).
 \end{equation}
To go from $p_{i+1}^N$ to $\pi_{i+1}^N$, the weights $w_{i+1}^{(n)}$ need to be updated to take into account the observation data $y_{i+1}$ at time $i+1$. This is done using the likelihood function \eqref{eq:likelihood},
\begin{equation}
\bar{w}_{i+1}^{(n)} \propto g_{i+1}^{y_{i+1}}(x_{i+1}^{(n)}), \quad \sum_n \bar{w}_{i+1}^{(n)} = 1. \label{eq:pfupdate}
\end{equation} 
Using \eqref{eq:smc} but with the collection of updated particle positions and normalised weights $\lbrace x_{i+1}^{(n)}$, $\bar{w}_{i+1}^{(n)} \rbrace_{n=1}^{N}$ we obtain $\pi_{i+1}^{N}$.

In the above we assumed to have started with independent samples from $\pi_{i}$ before proceeding with prediction and update. Thus after we obtain $\pi_{i+1}^N$ we have to generate independent (approximate) samples from $\pi_{i+1}$ in order to iterate the above prediction and update steps for future times. This is done via \emph{selection} and \emph{mutation} steps. Otherwise the non-uniform weights are carried into future iterations until resampling is required.
 \begin{description}
        \item[Selection]  
        In high dimensions, $\pi_{i+1}^N$ can easily become \emph{singular} due to the observations being too informative. This means after the update step, most of the normalised weights are very small. Thus with a finite support, $\pi_{i+1}^N$ does not have enough particle positions in around the concentration of the true distribution $\pi_{i+1}$. Therefore it is desirable to add a \emph{resampling} step so that particles with low weights are discarded, and replaced with (possibly multiple copies of) higher weighted particles. This selection is done probabilistically; for example, one could draw uniform random numbers in the unit interval and select particles based on the size of $\bar{w}_{i+1}^{(n)}$, see \citet{bain2009fundamentals,reich2015probabilistic}.
        
        \item[Mutation] Since the resampling step can introduce duplicate particle positions into the ensemble, without reintroducing the lost diversity, repeated iterations of resampling will eventually lead to a degenerate distribution (i.e. measures whose support are singletons). To tackle this issue we apply \emph{jittering} after every resampling step. Jittering is based on Markov Chain Monte Carlo (MCMC) whose invariant measure is the target $\pi_{i+1}^N$. The jittering step shifts duplicate particle positions whilst preserving the target distribution. We discuss this in section \ref{subsec:mcmc}.
 \end{description}
 After resampling is applied, we obtain a new ensemble $\hat{x}_{i+1}^{(n)},\ n=1,\dots,N$ with equal weights $1/N$, i.e.
 \[
\pi_{i+1}^N = \frac{1}{N}\sum_{n=1}^{N} \delta(\hat{x}_{i+1}^{(n)}).
 \] When we do not resample, then the particles in the ensemble keep the weights given by $\bar{w}_{i+1}^{(n)}$, and use \eqref{eq:smc} for $\pi_{i+1}^{N}$. 
 
 The resampling step should be done only when necessary to reduce computational cost, because the jittering step requires evaluating the solution map $G$. Therefore we employ a test statistic to quantify the non-uniformity in the weights and only resample when the non-uniformity becomes unacceptable. For this we use the \emph{effective sample size} (ess) statistic. It is defined by the inverse $l^2$-norm of the normalised weights $\bar{\boldsymbol{w}} = (\bar{w}^{(1)},\dots,\bar{w}^{(N)})$,
 \begin{equation}\label{eq:ess}
         \ess\left(\bar{\boldsymbol{w}}\right)\triangleq\left\Vert \bar{\boldsymbol{w}}\right\Vert_{l^{2}}^{-2} = \frac{1}{\sum_n (\bar{w}^{(n)})^2}.
 \end{equation}
 The ess statistic measures the variance of the weights. If the particles have near uniform weights then the $\text{ess}$ value is close to $N.$ On the other hand if only a few particles have large weights then the $\text{ess}$ value is close to $1.$ In practice we resample whenever \eqref{eq:ess} falls below a given threshold 
 \[
         \text{ess} < N_\text{thresh}.
 \]
 Algorithm \ref{alg:smc_part1} summarises the bootstrap particle filter.
 The algorithm starts with an empirical approximation of the initial prior $\pi_{t_0}$ and steps forward in time, assimilates observation data in repeating cycles of prediction-update steps. The ess statistic is employed. When resampling is required, selection-mutation steps are applied.

 \begin{algorithm}\caption{Bootstrap particle filter}\label{alg:smc_part1}
        \MODIFY{Let total number of iterations $M$ be given.} Draw independent samples $x_0^{(n)}\sim \pi_0, \ i=1,\dots,N$ and set weights $\bar{w}_0^{(n)} = 1/N$
        \begin{algorithmic}[1]
                \FOR{$j=1,2,\dots,\MODIFY{M}$}

                \STATE Compute $x_{j}^{(n)} = G(x_{j-1}^{(n)}, \omega^{(n)})_j, \ n=1,\dots,N$ with $t_j,\,t_{j-1} \in \Lambda.$
                
                \STATE Obtain observation data $y_j$ and compute weights $\bar{w}_j^{(n)} \propto \bar{w}_{j-1}^{(n)} g_j^{y_j} (x_{j}^{(n)}), \ n=1,\dots,N$ using \eqref{eq:pfupdate}.
            
                \IF{$\ess < N_{\text{threshold}}$}
                        \STATE Sample $\hat{x}_j^{(n)},\ n=1,\dots,N$ according to the weights $\bar{w}_j^{(n)}$. 
                        
                        \STATE Set the weights to be $\bar{w}_j^{(n)}=1/N$. 
                        
                        \STATE \MODIFY{Apply jittering to the set  $\hat{x}_j^{(n)}, \ n=1,\dots,N$, if there are duplicates. }
                                                
                            \STATE Set $x_j^{(n)} = \hat{x}_j^{(n)}$, $n=1,\dots,N$.
                \ENDIF

                \ENDFOR
        \end{algorithmic}
 \end{algorithm}

\subsection{MCMC and jittering}\label{subsec:mcmc}

In this section we describe an effective Metropolis-Hastings MCMC based method called \emph{jittering} with the proposal step chosen specifically for our signal. Jittering reintroduces lost diversity due to resampling by replacing an ensemble of samples that contain duplicates $x_{t}^{(n)}\sim \pi_t, \ n=1,\dots,N$ with a new ensemble $\hat{x}_{t}^{(n)}, \ n=1,\dots,N$ without duplicates, such that the distribution $\pi_t$ is preserved. 

MCMC is a general iterative method for constructing ergodic time-homogeneous Markov chains $u(m), \ m\geq 0$ with transition kernel $K(u,\cdot)$, that are invariant with respect to some target distribution $\pi$, i.e.
\[ 
\pi K (\cdot) = \int K(u,\cdot) \pi(du)=\pi(\cdot).
\] 
By the Birkhoff's ergodic theorem, we have the following identity
\[
\int f(u)\pi\left(du\right)=\lim_{n\rightarrow\infty}\frac{1}{n}\sum_{k=1}^{n}f\left(u_{k}\right)\qquad{\rm a.s.}
\]
for any integrable and measurable function $f.$ Practically, this means starting from an initial $u(0)$, each $u(m)$ with $m \in \mathbb{N}$ can be treated as samples from the target distribution $\pi$.

A generic Metropolis-Hastings MCMC algorithm is described in algorithm \ref{alg:generic_mcmc}. A Markov transition kernel $K$ defined on the state space is used to generate proposals. Together with the right conditions on the acceptance probability function $a$ to guarantee \emph{detailed-balance}, the algorithm produces a Markov chain with kernel that is reversible with respect to the target measure $\mu$, see \citet{dashti2017bayesian}. In the Gaussian case, a classic and widely used choice for $K$ and $a$ is
\begin{align}\label{eq:mcmcgaussiankernel}
\begin{split}
 \MODIFY{K(u(m),dv)} &= \rho u(m) + \sqrt{1-\rho^2} \zeta,\quad \zeta \sim \mathcal{N}(0,\mathcal{C}) \\
 a(u,v) &= 1\wedge \frac{\exp(-\Phi(v))}{\exp(-\Phi(u))}
\end{split}
\end{align}
for any appropriate covariance operator $\mathcal{C}$ and log likelihood function $\Phi$, see \citet{kantas2014sequential}. The parameter $\rho$ controls the local exploration size of the Markov chain. In practise for high dimensional problems $\rho$ needs to be very close to 1 in order to achieve a reasonable average acceptance probability. For bad choices of $\rho$ the MCMC chain may mix very slowly and would require a burn-in step size that makes the whole algorithm computationally unattractive.

\begin{algorithm}[!ht]\caption{Generic Metropolis-Hastings MCMC, see \citet{dashti2017bayesian}}\label{alg:generic_mcmc}
        Let $\mu$ be a given measure on the state space. Let $u(0) \sim \mu$. Generate a $\mu$-invariant Markov chain $u(m), \ m>0$ as follows
        \begin{algorithmic}[1]
                \STATE Propose
                \begin{equation}
                 \tilde{u} \sim K(u(m),du)
                \end{equation}
                
                \STATE Accept $u(m+1)=\tilde{u}$ with probability \begin{equation} a(u(m), \tilde{u}), \end{equation} otherwise $u(m+1) = u(m)$.   
                
                \STATE $m \rightarrow m+1$ and repeat.  
        \end{algorithmic}
\end{algorithm}

With \eqref{eq:mcmcgaussiankernel}, algorithm \ref{alg:generic_mcmc} is known as the Preconditioned Crank Nicolson (pCN) and is wellposed in the mesh refinement limit, see \citet{dashti2017bayesian,kantas2014sequential}. Thus when applied to discretised problems the algorithm is robust under mesh-refinement. It is commonly applied in Bayesian inverse problems where the posterior is absolutely continuous with respect to a Gaussian prior on Banach spaces. It is important to note that here the design of the algorithm is important because in high dimensions measures tend to be mutually singular, but for Metropolis-Hastings algorithms the acceptance probability is defined as the Radon-Nikodym derivative given by the stationary Markov chain transitions.
 
Our choice \eqref{eq:prior} for the prior is not Gaussian. The distribution \eqref{eq:signalmarginaldistribution} is also not Gaussian for any $t\in(0,T]$. The distribution of the SPDE solution is investigated numerically in \citet{cotter2018numerically}, in which it is noted that non-Gaussian scaling is interpreted as intermittency in turbulence theory.  Therefore it is important to choose $K$ and $a$ such that the following properties hold.
\begin{enumerate}[(i)]
        \item Robustness under mesh refinement. Although we are considering finite dimensional state spaces, in the limit the state spaces under assumption \ref{ass:spdewellposedness} are infinite dimensional function spaces. 
        
        \item The chain should mix and stablise sufficiently quickly so that the number of burn-in steps required is reasonable.
\end{enumerate}
Then with the appropriately chosen $K$ and $a$, we apply algorithm \ref{alg:generic_mcmc} as a jittering step to shift apart duplicate particles introduced into the ensemble by the resampling step.

Given these considerations, we use directly the SPDE solution map $G$ \eqref{eq:spdesolutionop} to define the transition kernel $K$. Let the target distribution be the posterior distribution $\pi_{t_k}$, $t_k \in \Lambda$.
With a slight abuse of the notation introduced in \eqref{eq:spdesolutionop}, we write 
\[
G(u,W_{k-1:k})_{t_k}
\] to mean the solution of the SPDE at time $t_k$ along a realised Brownian trajectory over the time interval $[t_{k-1},t_{k}]$ starting from position $u\in \statespace$. When we consider $u\in\statespace$ and interval $\Delta_k:=t_k-t_{k-1}$ as fixed, then we view $G_{u,\Delta_k }(W) := G(u, W_{k-1:k})$ as a function on $\Omega$.

Let $\pi_{t_k}^N$ be the empirical approximation of $\pi_{t_k}$ with $N$ particles $u^{(n)}_{k}, \ n=1,\dots,N$. We consider each particle $u^{(n)}_{k}$ a child of some parent $u^{(n)}_{{k-1}}$ at time $t_{k-1}\in \Lambda$ for a realised Brownian trajectory $W$ over the interval $[ t_{k-1}, t_k]$, i.e.
\[
u^{(n)}_{{k}} = G_{u^{(n)}_{{k-1}},\Delta_k }(W)_k.
\]
To jitter $u_{k}^{(n)}$, set $W^{(0)} = W$ and $u(0) = u_{k}^{(n)}$ (see algorithm \ref{alg:generic_mcmc}). 
At the $m$-th MCMC iteration, $m\geq 1$, propose
\begin{equation}\label{eq:proposal}
\tilde{u} = G_{u_{k-1}^{(n)}, \Delta_k}(\rho W^{(m-1)} + \sqrt{1 - \rho^2} Z^{(m-1)})_k
\end{equation}
where $Z^{(m-1)}$ is a Brownian trajectory over $[t_{k-1}, t_k]$ generated independently from $W^{(m-1)}$.

We use the canonical Metropolis-Hastings accept-reject probability function
\begin{equation}\label{eq:acceptreject}
a(u(m-1),\tilde{u}) = 1\wedge \frac{g^{y}(\tilde{u} ) }{g^{y}(u(m-1))}
\end{equation}where $g^y$ is the likelihood function, see \eqref{eq:updating}.
The proposal \eqref{eq:proposal} is accepted with probability \eqref{eq:acceptreject} independently of $(\tilde{u}, u(m-1))$. In this case set
\[
 u(m) = \tilde{u}
\quad \text{ and }\quad
W^{(m)} = \rho W^{(m-1)} + \sqrt{(1 - \rho^2)} Z^{(m-1)}. 
\] 
Otherwise the proposal is rejected, in which case set
\[u(m) = u(m-1) \quad \text{ and }\quad W^{(m)} = W^{(m-1)}\]and go to the next iteration in algorithm \ref{alg:generic_mcmc}.

Algorithm \ref{alg:mcmc_jittering} summarises our MCMC procedure. The algorithm includes tempering scaling $\phi_{k}$ of the accept-reject function \eqref{eq:acceptreject}. \emph{Tempering} is explained in the next subsection. Practically, to save computation, we may apply jittering to just the duplicated particles after resampling, and run each jittering procedure for a fixed number of steps. 

\begin{prop}\label{thm:reversible}
        With the proposal \eqref{eq:proposal} and accept-reject function \eqref{eq:acceptreject}, the Markov chain generated by algorithm \ref{alg:generic_mcmc} is reversible with respect to $\pi_t^{y_{0:t}}$.
\end{prop}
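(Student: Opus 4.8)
The plan is to verify that the Markov chain on the state space $\statespace$ defined by the proposal \eqref{eq:proposal} and acceptance probability \eqref{eq:acceptreject} satisfies detailed balance with respect to $\pi_{t_k}^{y_{0:t_k}}$. The key structural observation is that, although the chain is described through the auxiliary Brownian path variable $W^{(m)}$, the object whose invariance we must check lives on $\statespace$, and the Brownian path is really an internal randomisation. So first I would lift the chain to the product space $\statespace \times \Omega$ (state $\times$ driving-noise path over $[t_{k-1},t_k]$), where the natural target is $\nu(du\,dW) := \pi_{t_{k-1}}^{y_{0:t_{k-1}}}(du)\,\prob_{\Delta_k}(dW)$ with $\prob_{\Delta_k}$ the Wiener measure on paths over the interval. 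The map $\Psi: (u,W) \mapsto G_{u,\Delta_k}(W)_k$ pushes $\nu$ forward to $p_{t_k}$, and reweighting by the likelihood $g^{y}$ turns $p_{t_k}$ into $\pi_{t_k}$ (this is precisely \eqref{eq:prediction}--\eqref{eq:updating}). Thus it suffices to show that the lifted chain on $\statespace\times\Omega$, with the parent position $u_{k-1}^{(n)}$ held fixed, is reversible with respect to the measure $\propto g^{y}\!\bigl(\Psi(u_{k-1}^{(n)},W)\bigr)\,\prob_{\Delta_k}(dW)$ on the fibre of driving paths, and that the child positions are deterministic images of those paths.

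The second step is the reversibility of the $W$-update alone. Here I would invoke the well-known fact that the autoregressive (pCN-type) proposal $W \mapsto \rho W + \sqrt{1-\rho^2}\,Z$, with $Z$ an independent Wiener path, is reversible with respect to the Wiener measure $\prob_{\Delta_k}$ itself — this is the Ornstein–Uhlenbeck / Gaussian invariance underlying the pCN algorithm cited via \citet{dashti2017bayesian,kantas2014sequential}. Consequently the proposal kernel on paths is $\prob_{\Delta_k}$-reversible, i.e. it has symmetric density against $\prob_{\Delta_k}\otimes\prob_{\Delta_k}$. Adding the Metropolis–Hastings accept-reject factor \eqref{eq:acceptreject}, which is exactly the ratio of the target density $g^{y}(\Psi(u_{k-1}^{(n)},\cdot))$ at proposed versus current path, gives the standard Metropolis correction: the combined transition kernel satisfies detailed balance with respect to $g^{y}(\Psi(u_{k-1}^{(n)},W))\,\prob_{\Delta_k}(dW)$ normalised. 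This is the generic computation showing that for a $\mu_0$-reversible proposal $Q$, the Metropolis chain with acceptance $1\wedge \frac{\pi(v)/\mu_0(v)}{\pi(u)/\mu_0(u)}$ is $\pi$-reversible; in our case $\mu_0 = \prob_{\Delta_k}$, $\pi \propto g^{y}\Psi\cdot\prob_{\Delta_k}$, and the Radon–Nikodym ratio collapses to $g^{y}(\tilde u)/g^{y}(u(m-1))$ because the proposal already leaves $\prob_{\Delta_k}$ invariant — which is why no Jacobian of $\Psi$ appears in \eqref{eq:acceptreject}.

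The third step is to push the reversibility of the lifted chain back down to $\statespace$. Since the child position $u(m)$ is the deterministic function $\Psi(u_{k-1}^{(n)},W^{(m)})$ of the current path, and the path-level target pushes forward under $\Psi$ (restricted to the fixed parent) to $\pi_{t_k}^{y_{0:t_k}}$ conditioned on having that parent, the marginal chain in $u$ is reversible with respect to $\pi_{t_k}^{y_{0:t_k}}$ on that component; averaging over parents (equivalently, noting the construction is applied simultaneously across the ensemble with parents drawn from $\pi_{t_{k-1}}^{y_{0:t_{k-1}}}$) yields the claim for the full posterior $\pi_{t_k}^{y_{0:t_k}}$. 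I would close by remarking that tempering, i.e. replacing $g^{y}$ by $(g^{y})^{\phi_k}$ in \eqref{eq:acceptreject} as in Algorithm \ref{alg:mcmc_jittering}, changes nothing in the argument: it simply replaces the target fibre density by its tempered version, for which the same Metropolis computation applies verbatim.

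The main obstacle I anticipate is making the path-space argument rigorous rather than formal: the Wiener measure has no density, so "symmetric density of the proposal against $\prob_{\Delta_k}\otimes\prob_{\Delta_k}$" must be phrased via the joint law $\mathcal{L}(W, \rho W + \sqrt{1-\rho^2}Z)$ being an exchangeable (symmetric) Gaussian measure on $\Omega\times\Omega$ — this is true because the covariance structure of $(W, \rho W+\sqrt{1-\rho^2}Z)$ is invariant under swapping the two coordinates — and then the detailed balance identity should be written in integrated form, $\int_A\int_B \cdots = \int_B\int_A\cdots$ against this symmetric reference measure, exactly as in the infinite-dimensional Metropolis–Hastings framework of \citet{dashti2017bayesian}. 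Once that exchangeability is in hand, the rest is the routine Metropolis–Hastings verification, and the only care needed is that $g^{y}$ is strictly positive and bounded (which holds since $g_t$ is a Gaussian density and $h$ is measurable, so the acceptance ratio is well defined).
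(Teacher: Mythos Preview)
Your approach is essentially the same as the paper's: both reduce to showing that the pCN-type noise proposal is reversible with respect to Wiener measure, so that the induced state-space proposal $K$ satisfies detailed balance with respect to $p_t$, and then the standard Metropolis--Hastings argument (with acceptance ratio equal to the likelihood ratio $g^{y_t}(v)/g^{y_t}(u)$) yields $\pi_t$-reversibility. Your treatment is in fact more careful on one technical point: you correctly identify exchangeability of the joint Gaussian law of $(W,W')$ as the precise symmetry needed, whereas the paper writes the stronger identity $p_t(du)\,K(u,dv)=p_t(du)\,p_t(dv)$, which taken literally would require $v$ to be independent of $u$ (i.e.\ $\rho=0$); the intended content---symmetry of the joint law of $(u,v)$ under swapping---is exactly the exchangeability you invoke.
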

\begin{proof} The generic Metropolis-Hastings algorithm \ref{alg:generic_mcmc} defines the following Markov transition kernel
	\begin{equation}
	Q(u,dv) = K(u,dv)a(u,v) + \delta_u (dv) \left(\int (1 - a(u,w))K(u,dw)\right).
	\end{equation} Since $\pi_t = p_t \frac{d\pi_t}{dp_t},$ if $K$ is such that it satisfies the detailed balance condition with respect to $p_t$
	\begin{equation}
	p_t(du)K(u,dv) = p_t(dv)K(v,du), \label{eq:detailebalance}
	\end{equation} then using the accept-reject function \eqref{eq:acceptreject} 
	\[
	a(u,v) = 1\wedge \frac{g_t^{y_t}(v)}{g_t^{y_t}(u)}
	\]
	we have $Q(u,\cdot)$ is a Markov kernel that is $\pi_t$-invariant, see \citet{dashti2017bayesian}.

	Let $\gamma$ denote the Wiener measure. Note that for a Brownian path $W\sim \gamma$ it is standard to show that the noise proposal in \eqref{eq:proposal}
	\[
	W' := \rho W + \sqrt{1-\rho^2} Z \sim \gamma 
	\] for $Z \sim \gamma$ independent of $W$. Thus due to the prediction formula \eqref{eq:prediction} and Markov transition \eqref{eq:signalmarginaldistribution} for the signal, conditioned on the value $u_{t-1}\in \statespace$, we have for $u = G_{u_{t-1}, \Delta}(W)_t\sim \pi_{t-1}k_t = p_t$, a sample $v$ obtained using the proposal \eqref{eq:proposal} is thus \[
	v = G_{u_{t-1},\Delta}(	W')_t\sim \pi_{t-1}k_t = p_t, \quad \text{ with } u_{t-1} \sim \pi_{t-1}
	\] i.e. conditioned on $u_{t-1}$, we have
	\[
		p_t(du) K(u, dv ) = p_t(du) p_t(dv) 
	\] which is symmetric in the pair $(u, v)$ giving us detailed balance \eqref{eq:detailebalance}.	
	
\end{proof}

\begin{remark}
The purpose of the jittering procedure is to introduce diversity in the samples, through a small perturbation, that is achieved via the Metropolis-Hastings algorithm. The perturbation is small as it is controlled by the parameter $\rho$ which we take it to be 0.9995 (if $\rho$ is equal to $1$ then no \MODIFYrev{bias} occurs). Therefore the error incurred in this step will be small, and this is the intention here rather than trying to preserve the underlying posterior distribution.
\MODIFYrev{
The reason for this error is as follows. If one operates directly on the target posterior distribution, the jittering step preserves the posterior distribution and therefore no bias occurs. However, the jittering procedure is applied to the approximating measure (the one given by the particle filter combined with the tempering methodology). As a result, there will be a ``local" error induced by the jittering step proportional to $(1-\rho)$. In the rigorous analysis of the rate of convergence of the particle filter, this appears as a separate term. This is covered, for example in \citet[lemma~4]{crisanDoucet2002}, and in \citet[section~4.2]{crisanmiguez2018} (albeit in a different context). 
}
\end{remark}

\begin{algorithm}[H]\caption{MCMC jittering for the 2D damped and forced SALT Euler dynamics}\label{alg:mcmc_jittering}
        Let $t_i\in \Lambda$, $\Delta_i = t_i - t_{i-1}$. \MODIFY{Let the number of MCMC steps $M$ be fixed.} Given the ensemble of equal weighted particle positions $ x^{(n), k}_{t_i},\ n=1,\dots,N$, corresponding to the $k$'th tempering step with temperature $\phi_k$, and proposal step size $\rho\in\left[0,1\right]$, repeat the following steps.       
        \begin{algorithmic}[1]
	        	\FOR {$n = 1,\dots,N$}

	        	\STATE Let particle $x^{(n),k}_{t_i}$ be such that $x^{(n),k}_{t_i}=G_{x^{(n)}_{i-1},\Delta_i} (W),$ for an initial condition $x^{(n)}_{t_{i-1}}\in \statespace$ and  a realised Brownian path $W(t_{i-1}:t_i)$ over the time interval $[t_{i-1},t_i]$.
	        	
	        	\STATE Set $u^0 = x_{t_i}^{(n),k}$ and $W^0(t_{i-1}:t_i) = W(t_{i-1}:t_i)$.

	        	\FOR {\MODIFY{$m =1,2,\dots,M$}}
	        	
                \STATE Propose $v \sim K(u{(m-1)}, \cdot)$ given by
                \[
                 v = G_{x^{(n)}_{{i-1}},\Delta_i}(W')
                \]
                where \[W' =\rho W^{m-1}\left({t_{i-1}}:t_i\right)+\sqrt{1-\rho^{2}}Z\left(t_{i-1}:t_i\right)\] for $Z$ a Brownian path independent of $W^{m-1}$.
                
                \STATE Accept $v$ with probability
                \begin{equation} \label{eq:tempered_acceptreject}
                  a(u(m-1), v) = 1\wedge \left(\frac{g_{t_i}^{y_{t_i}}(v)}{g_{t_i}^{y_{t_i}}(u(m-1))}\right)^{\phi_k}
                \end{equation}
                where $g_{t_i}^{y_{t_i}}$ is the likelihood function and $y_{t_i}$ is the observation at time $t_i$, in which case set $u^m = v$ and $W^m = W'$. Otherwise set $u^m = u^{m-1}$ and $W^m = W^{m-1}$.
                \ENDFOR
                
                \ENDFOR
        \end{algorithmic}
\end{algorithm}

\subsection{Tempering}

Empirical approximations of $\pi_t$ defined on high dimensional space can very quickly become degenerate, which is indicated by low effective sample size ($\text{ess}$) statistic. In order to facilitate smoother transitions between posteriors, so that ensemble diversity is improved, we employ the \emph{tempering} technique, see \citet{neal2001annealed,kantas2014sequential,beskos2017stable,beskos2014error}. Use of other techniques such as \emph{nudging} and \emph{space-time particle filter} (see \citet{beskos2017stable}) will be explored in future research work. 

We employ tempering when the $\text{ess}$ value for an posterior ensemble, falls below an apriori threshold $N_{\text{threshold}}$. The idea of tempering is to artificially scale the log likelihoods by a number $\phi \in(0,1]$ called the \emph{temperature}, which in effect increases the variance of the distribution so that the apriori $\text{ess}$ threshold is attained. Once this done resampling can be applied (with MCMC if required) which leads to a more diverse ensemble. Of course particles in this new ensemble are samples of the altered distribution which is not what we desire, therefore the procedure is repeated by finding the next temperature value in the range $(\phi,1]$. This is repeated until the temperature scaling is 1 so that the original distribution is recovered. 

More precisely, let
\begin{equation}
        0=\phi_0 < \phi_1 <\dots < \phi_R = 1
\end{equation}
be a sequence of \emph{temperatures}. Let
\begin{equation}\label{eq:tempering_transition}
        \pi_{t,r}(A) \triangleq C_{t,r}^{-1}p_{t} (g_t^{Y_{t}})^{\phi_r}(A)
\end{equation}
be called the tempered posterior at the $r$-th tempering step or simply the $r$-th tempered posterior, where $C_{t,r} = p_t (g_t^{Y_t})^{\phi_r}$ (compare with the recurrence formula \eqref{eq:updating}). Note that $\pi_{t,R} = \pi_{t}$ and $\pi_{t,0} = p_t$.
Thus with
\[
\frac{d\pi_{t,r}}{d\pi_{t,r-1}} \propto (g_t^{Y_t})^{\phi_r - \phi_{r-1}}
\] we have
\[
\pi_t= p_t \frac{d\pi_{t, 1}}{d\pi_{t,0}} \dots \frac{d\pi_{t,R}}{d\pi_{t,R-1}}
\] which suggests the iterative procedure,
\begin{equation}
	\pi_{t,r-1} \longmapsto \pi_{t,r} \propto \pi_{t,r-1} (g_t^{Y_t})^{\phi_r - \phi_{r-1}}, \quad r=1,\dots,R. \label{eq:iterative_tempering}
\end{equation}
Empirically this means, for each $r = 1, \dots, R$, assume we have equal weighted particle positions $\lbrace x_t^{(n)}\rbrace_{n=1,\dots,N}$ that give us the empirical $(r-1)$-th tempered posterior \[\pi_{t,r}^N = \frac{1}{N}\sum_{n=1}^{N}\delta(x_t^{(n)}),  \]
we compute unnormalised tempered weights 
\begin{equation}\label{eq:tempered_weights}
w^{(n),r}_t\left(\phi_r\right)\triangleq (g_t^{y_t}(x^{(n)}_t))^{\phi_r - \phi_{r-1}}, \quad n=1,\dots,N
\end{equation}
to obtain the empirical $r$-th tempered posterior \[
\pi_{t,r}^N = \sum_{n=1}^{N} \bar{w}_{t}^{(n),r}(\phi_r) \delta(x_t^{(n)}).
\] Then we resample according to $\pi_{t,r}^N$ and apply the MCMC jittering  algorithm \ref{alg:mcmc_jittering} (see remark \ref{rem:tempered_jittering}) to separate apart any duplicated particles before going to the $r+1$'th iteration step.

 The sequence of temperatures $\phi_r$ is chosen so that at each tempering iteration $r$, the empirical tempered distribution $\pi_{t,r}^N$ attains the apriori $\text{ESS}$ threshold $N_{\text{threshold}}$, i.e.
\begin{equation}
\ess(\bar{\boldsymbol{w}}^{r}(\phi_r))\geq N_{\text{threshold}} \label{eq:ess_threshold}
\end{equation} where $\bar{\boldsymbol{w}}^{r}(\phi_r) = (\bar{w}^{(1),r},\dots,\bar{w}^{(N),r})(\phi_r)$ are the normalised weights corresponding to \eqref{eq:tempered_weights}.
This way the choice for the temperatures can be made on-the-fly by using search algorithms such as \emph{binary search} at each tempering iteration. 

\begin{remark}\label{rem:tempered_jittering}
Proposition \ref{thm:reversible} shows the MCMC jittering algorithm preserves the target distribution $\pi_t$ with the accept-reject function \eqref{eq:acceptreject}. The same argument shows the algorithm preserves the tempered posteriors as long as the accept-reject function is chosen to be \eqref{eq:tempered_acceptreject}. The Markov transition kernel $K$ satisfies the detailed balance condition with respect to $p_t$ independent of tempering.
\end{remark}

Using tempering to smooth out the transition between consecutive filtering measures (i.e. from $\pi_{t_k}$ to $\pi_{t_{k+1}}$) ensures that the importance weights in \eqref{eq:tempering_transition} exhibit low variance, so that no small group of particles are favoured much more than the rest when resampling, see \citet{kantas2014sequential}, thus leading to a more diverse population. 

In algorithm \ref{alg:pf_temperingjittering} we summarise the complete procedure for one filtering step, i.e. from $\pi_{t_{i-1}}^N$ to $\pi_{t_{i}}^N$, incorporating adaptive tempering and MCMC jittering for SALT into the bootstrap particle filter.

\begin{algorithm}[H]\caption{One step particle filter for SALT with adaptive tempering and MCMC jittering}\label{alg:pf_temperingjittering}
        
        Consider the $i$'th filtering step corresponding to $t_i \in \Lambda$. Given the ensemble of equal weighted particle positions $\lbrace x^{(n)}_{i-1}\rbrace _{n=1,\dots,N}$ that define the empirical posterior $\pi_{i-1}^N$, we wish to assimilate observation data $y_i$ at time $t_{i}$ to obtain a new equally weighted ensemble $\lbrace x^{(n)}_i \rbrace_{n=1,\dots,N}$ that define $\pi_i^N$. Define \[
        \ess(\phi_{\cdot}, \{x_i^{(n)}\}_{n=1,\dots,N}) \triangleq \vert\bar{w}^{(n),\cdot}_i(\phi_{\cdot}) \vert_{l^2}^{-1}
        \] for $\phi \in (0,1]$, and \MODIFY{$\bar{w}_i^{(n),\cdot}$} are the normalised values of the unnormalised tempered weights \eqref{eq:tempered_weights}.
        
        \begin{algorithmic}[1]
            \STATE Compute $x^{(n)}_i = G(x_{i-1}^{(n)}, W)_{i-1},$ $n = 1, \dots,N$.
            
            \STATE Set $\MODIFY{\phi_0=0}, $  $r=1$, $x_i^{(n),0} = x_i^{(n)}$ each $n$.
            
            \WHILE{$\MODIFY{\ess}(\phi_{r-1},\{x_i^{(n),r-1}\})<N_{\text{threshold}}$}
            
            \STATE  Find (using e.g. binary search) the largest $\phi_{r}\in(\phi_{r-1},1)$ such that $\MODIFY{\ess}(\phi_r,\{x_i^{(n),r-1}\})\geq N_{\text{threshold}}.$ 
            
            \STATE Resample according to $\bar{w}^{(n),r}\left(\phi_r\right)$ to obtain a new set $\{x^{(n), r}_i\}$. 
            
            \STATE Apply jittering algorithm \ref{alg:mcmc_jittering} to any duplicated particles.
            

            \STATE r =  1
            \ENDWHILE
			
			\STATE Set $R = r$. Do steps 5. and 6. with $\phi_R = 1$ and set $x^{(n)}_i = x_i^{(n),r}$, $n=1,\dots,N$ to obtain $\pi_{t_i}^N$.
            
        \end{algorithmic}
\end{algorithm}


\section{Numerical setup and experiment results}\label{sec:numericalresults}

The setup for the numerical experiments follow on from the $\boldsymbol{\xi}$ calibration and uncertainty quantification work presented in \citet{cotter2018numerically}. Thus the parameter choices for the models are as follows: forcing strength \MODIFY{$a = 0.1$, number of gyres $b = 8$} and damping rate $r = 0.01$. 

The PDE \eqref{eq:2DEulerVorticity} and SPDE \eqref{eq:stochastic2dEuler} are prescribed on mesh of size $512\times512$ cells and $64\times64$ cells respectively for the spatial domain. We use a Galerkin finite element discretisation for the spatial variable and a third order stability preserving Runge-Kutta for the time stepping, see \citet{cotter2018numerically} for details.  This means spatially each mesh cell contains six grid points. Thus the PDE and SPDE velocity fields are of $3145728$ and $49152$ degrees of freedom respectively. Henceforth we shall refer to the PDE spatial dimension as \emph{fine resolution} and the SPDE spatial dimension as \emph{coarse resolution}\footnote{However since we are using an explicit in time method for solving the SPDEs, the coarse time step may need to be smaller to accommodate the fact that Brownian increments are unbounded.}.

The time step for the fine resolution is chosen in accordance with the CFL condition and in this case is $\Delta_{\text{f}} = 0.0025$. The CFL time step for the coarse resolution is $\Delta_{\text{c}} = 0.02$. 

The initial reference fine resolution PDE trajectory was spun-up from the configuration
\begin{equation}
\begin{aligned}
\omega_{\text{spin}} & =\sin(8\pi x)\sin(8\pi y)+0.4\cos(6\pi x)\cos(6\pi y)\\
& \qquad+0.3\cos(10\pi x)\cos(4\pi y) +0.02\sin(2\pi y)+0.02\sin(2\pi x) \label{eq:omega_spin}
\end{aligned}
\end{equation}
until some energy equilibrium state, see \citet{cotter2018numerically}. We call the equilibrium state's corresponding time point the initial time $t_0$.

We use \emph{eddy turnover time} (abbrev. ett) as the time dimension for the PDE system. It describes the time scale of flow features correponding to a given length scale, and is defined by \begin{equation}
 \tau_{l}\triangleq \frac{l}{|\bar{\vecu}|} \label{eq:ett}
 \end{equation}
where $|\bar{\vecu}|$ is the magnitude of the \MODIFY{stabilised} mean velocity\footnote{Our PDE system is spun-up from \eqref{eq:omega_spin} to an energy stable state. \MODIFY{By \emph{stabilised mean velocity}, we mean the $L1$ norm of the velocity field that corresponds to the energy stable state.} Thus $|\bar{\vecu}|$ is constant in time.}, and $l\in[0,L]$ a length scale. Here $L=1$ corresponds to the axis length of the domain $D$. For our experiments, we choose $l=\frac{1}{2}$. It is estimated that $1$ ett roughly equals to $2.5$ numerical time units, or $1000$ (fine resolution) CFL numerical time steps. Since the SPDE is thought of as a stochastic parameterisation for the PDE, we shall use the same eddy turnover time dimension for the SPDE. Thus 1 ett is $125$ coarse resolution CFL numerical time steps.  
 
 For the SPDE model, we use the calibrated EOFs $\vecx{\xi}_i, i=1,\dots,N_{\xi},$ from \citet{cotter2018numerically} with $N_{\xi}$ corresponding to $50\%$ of the total spectrum. This choice is informed by uncertainty quantification tests and amounts to $N_{\xi} = 51$ when the SPDE is prescribed on a mesh of size $64\times64$ cells.

\MODIFY{For the numerical filtering experiments, we consider two scenarios for the observations.}
    \begin{enumerate}
    
        \item \MODIFY{ Perfect model: the observations correspond to a single path-wise solution of the SPDE. Thus there is no discrepancy between the model and the true state. In this scenario the theoretical filtering formulation applies directly. We treat this scenario as a test case for the filtering algorithm.}
        
        \item \MODIFY{ Imperfect model: the observations correspond to the solution of the PDE, i.e. \eqref{eq:obs} is changed to
        \[
            Y_t = h(X_t^\dagger) + \epsilon_t
        \]
        where $X_t^\dagger$ corresponds to the coarse grained PDE velocity field, see remark \ref{remark: coarse graining}. 
        Here there is mismatch between the truth  and the signal. As shown in  \citet{cotter2018numerically}, the law of the SPDE discretised on the chosen grid converges to the law of the PDE as the discretisation grid gets refined. Implicity also the law of the sequence of true observations $(Y_{0},\ldots Y_t)$ is close to the law of the model observations. As stated in  \eqref{eq:filtering_density}, $\pi_t$ is a continuous function of the law of the signal and the observations $(Y_{0},\ldots Y_t)$ so we expect a reasonable approximation of $\pi_t$ even when we don't use the true law of the signal\footnote{The true law of the signal is the push-forward of $\pi_0$, the initial distribution of the signal $X_0$. In the case when $X_0$ is deterministic then the distribution of the signal is a Dirac delta distribution.} but the model law.\footnote{For continuous time models, this property is called the robustness of the filter. See \citet{ClarkCrisan} for results in this direction.}   
    }
    \end{enumerate}

\MODIFY{\begin{remark}\label{remark: coarse graining}
In the imperfect model scenario, since the SPDE solution is meant to capture the large-scale features of the deterministic fine resolution dynamics that are resolvable at the coarse resolution, we should obtain observations from the coarse grained PDE solution. For coarse graining, we use the inverse Helmholtz operator \begin{equation}\label{eq:coarse_graining}
H \triangleq \big(\identity - \frac{1}{k^2}\Delta\big)^{-1}
\end{equation} 
and apply $H$ to the PDE stream function \eqref{eq:laplace_stream_is_vorticity} to average out its small scale features. The boundary condition we impose on the coarse grained stream function is the same Dirichlet condition as for \eqref{eq:laplace_stream_is_vorticity}. The value $k$ in the definition of $H$ corresponds to the coarse resolution, in this case $k=64$.  To obtain the coarse grained PDE velocity field, apply the linear operator $\nabla^\perp$ to the coarse grained stream function. The coarse grained PDE velocity field is then used to generate the observation data in the imperfect model filtering scenario.  It is important to note that this coarse graining procedure is only applied when we obtain observation data, the underlying fine resolution dynamics is unchanged. 
\end{remark}
}

In both scenarios the observations are defined as noisy point measurements of the truth's velocity field. The observation locations (thought of as ``weather stations'') are given by a uniform regular grid of dimension \MODIFY{$d_y$}; see section \ref{secfilterproblem} for the problem's mathematical formulation. \MODIFY{We investigate the impact of the number of weather stations using $d_y = 289$, $d_y = 1089$ and $d_y=81$ in some experiments}. For this paper we only consider fixed uniform geometry for the weather stations. Further the $81$ weather stations are a subset of the $289$ weather stations\MODIFY{, and the $289$ weather stations in turn are a subset of the $1089$ weather stations}.
Figure \ref{fig:obsillustration} visually illustrates a snapshot of the coarse grained numerical PDE solution velocity vector field overlaid with the positions of the $81$ weather stations. 

\begin{remark} \label{rem:computation_limitations}
	The dimension of the observation space compared to the dimension of the underlying truth is very small. Using $289$ weather stations amounts to $1.18\%$ of the overall degrees of freedom in the perfect model scenario, and $0.01837\%$ of the overall degrees of freedom in the imperfect model scenario. \MODIFY{Observation error size and/or observation data dimension affects the number of tempering and jittering steps.} \MODIFY{For our observation error size choice,} these parameter choices are the best we can do
	given our computational hardware, so that we can obtain numerical results in a reasonable amount of time.  \MODIFY{Figures \ref{figure:spde_statistics} and \ref{figure:pde_statistics} provide computation cost estimates for the numerical experiments, measured in terms of number of equation evaluations.} For reference, all numerical experiments for this paper  were run on a workstation equipped with two Intel Xeon CPUs totalling $32$ logical processors, and $64$GB of memory. \MODIFY{To evaluate SPDE ensembles, the ensemble members were run in parallel, in batches of 25.}
\end{remark}

\begin{remark}
\MODIFY{To illustrate the difference in computational cost between the fine resolution and the coarse resolution models, we ran a benchmark test. For a time interval of 0.1 time units, the fine resolution PDE (time step 0.0025 time units) took 24 seconds to run on our workstation. The coarse resolution SPDE (time step 0.02 time units) took 0.6 seconds to run on our workstation.}
\end{remark}

The observation error covariance \MODIFY{$\gamma$} (see \eqref{eq:obs} for definition) is calibrated by computing the standard deviation of the fine resolution PDE velocity field within coarse cells, and then averaged along the time axis. More precisely, let $\statespacepde$ denote the discretised PDE state space. Let $X_t \in \statespacepde$ denote a snapshot in time of the PDE velocity field. Let superscript indices denote vector component. Define $\bar{X}_t \in \statespacepde$ by
\[
\bar{X}_{t}^{i_k} = \frac{1}{\card(j : X_t^j \in \text{coarse cell $k$})}\sum_{j : X_t^j \in \text{coarse cell $k$}} X_t^j,\qquad i_k \in \{ i : X_t^i\in \text{coarse cell $k$}\}
\]
for coarse cells corresponding to the coarse resolution mesh. Thus $\bar{X}_t^i$ are the local coarse cell averages of $X_t$. Then we define \MODIFY{${\gamma(\lambda)}$} by
\begin{equation}\label{eq:obs_error_calib}
\gamma(\lambda)=\lambda \frac{1}{M}\sum_{i=1}^{M} \left( \left(X_{t_i}-\bar{X}_{t_i}\right)\otimes\left(X_{t_i}-\bar{X}_{t_i}\right)
\right)^{1/2}
\end{equation}
where $\lambda$ is a \MODIFY{scaling we use to control the magnitude of the observation error.} \MODIFY{We choose $\lambda = 0.6$ for our numerical experiments.}
The idea is $\gamma(\lambda)$ at the observation locations represent the local variability of the truth at the observation locations. 
The $\gamma(\lambda)$ computed \MODIFY{using \eqref{eq:obs_error_calib}} is a vector field defined on the fine resolution grid. It is evaluated at the observation locations when used in \eqref{eq:obs}. Figure \ref{fig:obserrillustration} visually illustrates the magnitude of $\gamma(1)$ overlaid with the observation locations. 
We use the same calibrated $\gamma(\lambda)$ in both problem scenarios.

\begin{figure}[!htbp]
	\centering
	\begin{minipage}{0.9\textwidth}
	\begin{subfigure}{0.5\textwidth}
		\includegraphics[width=\textwidth]{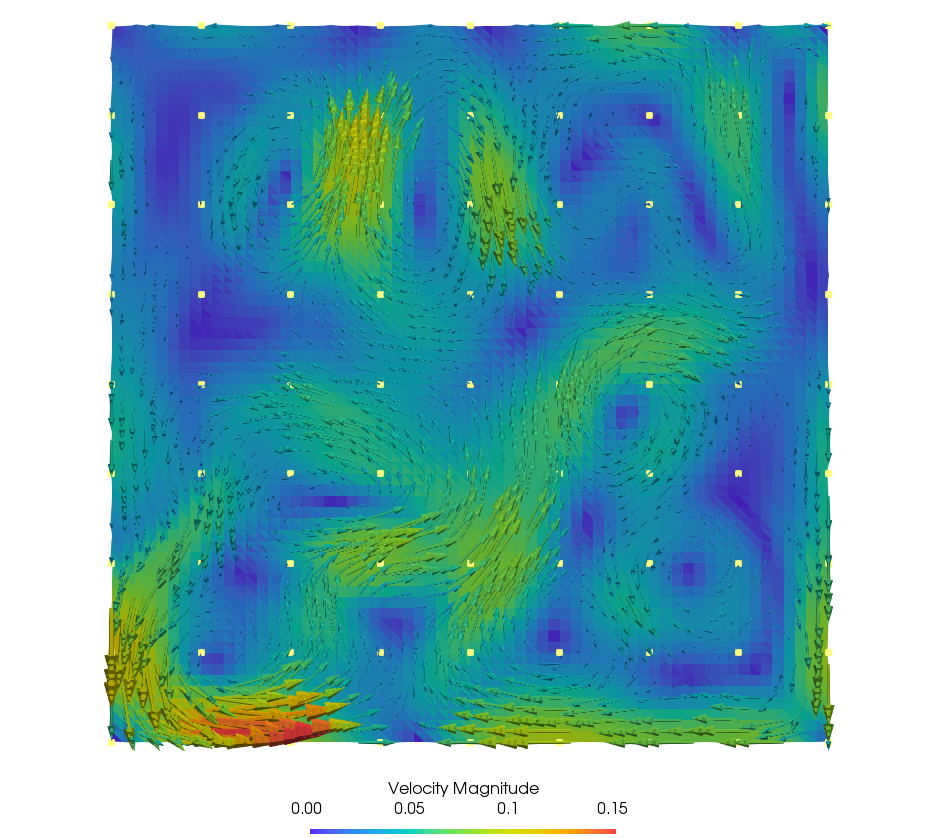}
		\caption{Snapshot of the coarse grained PDE velocity field at a $t\in\Lambda$ overlaid with observation locations (yellow dots) which are defined by a grid of $(8\times8)$ cells.}
		\label{fig:obsillustration}
	\end{subfigure}
	\quad
	\begin{subfigure}{0.5\textwidth}
		\includegraphics[width=\textwidth]{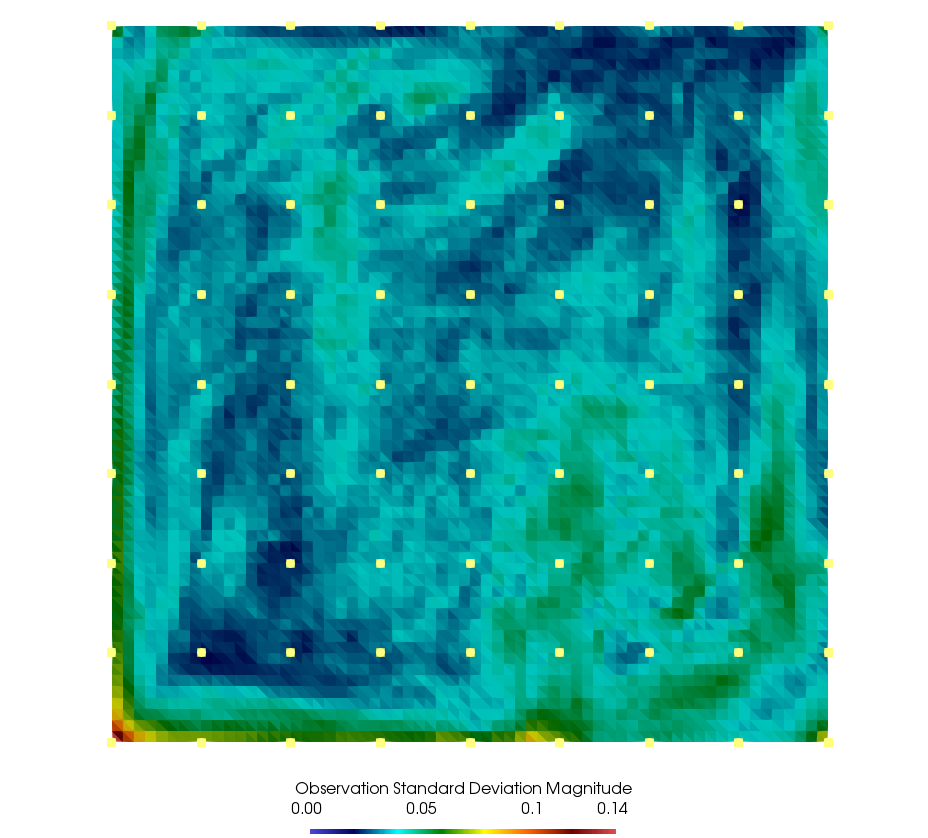}
		\caption{Magnitude of the calibrated observation error $\gamma(1)$ (see \eqref{eq:obs_error_calib}) overlaid with observation locations (yellow dots) which are defined by a grid of $(8\times 8)$ cells.}
		\label{fig:obserrillustration}
	\end{subfigure} 
	\caption{Observation locations and observation error magnitude.} 
\end{minipage}
\end{figure}

In the perfect model scenario, the truth from which we obtain the observations is a single simulated realisation of the SPDE. The initial condition for the SPDE truth is a particular sample from $\pi_0$. See figure \ref{fig:initialSPDEtruth} for a visualisation of the SPDE truth without observation noise at the initial time $t_0$. We discuss what $\pi_0$ is and how we sample from it in section \ref{sec:initialdist}.
In the imperfect model scenario the truth is the coarse grained PDE velocity field. Figure \ref{fig:initialPDEtruth} shows a visualisation of the PDE truth without observation noise at the initial time $t_0$.
\begin{figure}[!htbp]
    \centering
	\begin{minipage}{0.9\textwidth}
		\begin{center}
			\includegraphics[width=\textwidth]{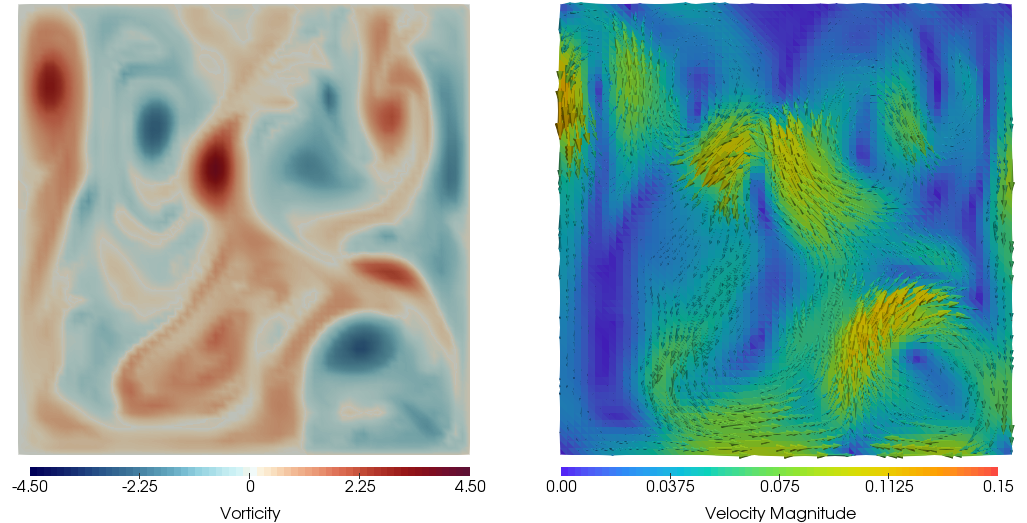}\par
		\end{center}
		\caption{Perfect model scenario. Visualisation of the  \emph{vorticity} scalar field (left) and the \emph{velocity} vector field (right) of the SPDE truth at the initial time $t_0$. The different colours of the vorticity field reflect the clockwise and anti-clockwise directions of the velocity vectors. The velocity vectors are visualised using arrows, the size of which reflect their magnitude. This initial condition is obtained by applying the \emph{deformation} procedure described in section \ref{sec:initialdist} to the PDE truth shown in figure \ref{fig:initialPDEtruth}, for $104$ fine resolution numerical time steps.}
		\label{fig:initialSPDEtruth}
	\end{minipage} 
\end{figure}
\begin{figure}[!htbp]
	\centering
	\begin{minipage}{0.9\textwidth}
		\begin{center}
		\includegraphics[width=\textwidth]{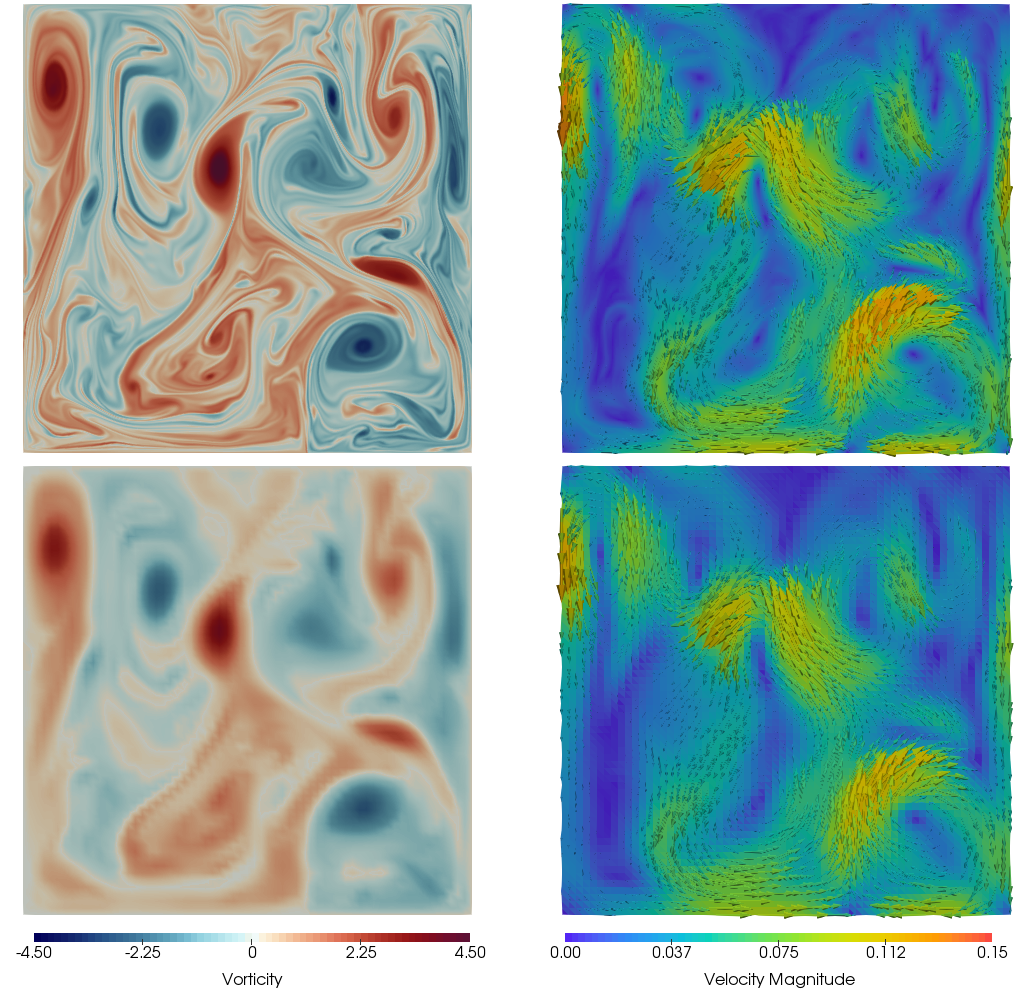}
		\par
		\end{center}
		\caption{Imperfect model scenario. Visualisation of the PDE truth at the initial time $t_0$. Here the top row correspond to the fine resolution visualisation of the \emph{vorticity} scalar field (left) and \emph{velocity} vector field (right). The bottom row correspond to the \emph{coarse grained} version of the top row using the coarse graining operator \eqref{eq:coarse_graining}. The different colours of the vorticity fields reflect the clockwise and anti-clockwise directions of the velocity vectors. The velocity vectors are visualised using arrows, the size of which reflect their magnitude.}
		\label{fig:initialPDEtruth}
	\end{minipage}
\end{figure}

We use an ensemble size of $N=100$ particles. Each particle's initial condition is a sample from the initial distribution $\pi_0$. Also we do not assimilate at $t_0$ since the initial distribution is assumed given, see section \ref{prob:filtering_problem}. 

\MODIFY{
Additionally, we introduce the following two distributions, whose ensemble approximations are utilised in our numerical experiments. We let
\begin{equation}
\pp_{i,j} := \mathbb{P}(X_{i+j} \in \cdot | Y_{0:i})
\end{equation}
be the \emph{$j$ step forecast distribution}. We let 
\begin{equation}
\qq_i := \mathbb{P}(X_i \in \cdot | X_{0})
\end{equation}
denote the \emph{prior distribution}.
In section \ref{subsec:smc}, we have used superscript $N$ to denote the $N$ particle approximation of a distribution. For notational convenience, throughout this section, we drop the superscript $N$ when referring to the ensembles.}
	
\MODIFY{To analyse the numerical results, we evaluate the following statistics.}
\begin{itemize}
	\item \MODIFY{The \emph{root mean square error}} (\MODIFY{rmse}) between the ensemble mean \MODIFY{of a $N$ particle measure and a verification. For example, the rmse between the $j$ step forecast ensemble mean $\bar{\pp}_{i,j}$ and the true system state $X^{\dagger}_j$} at time index $j$ is given by
	\MODIFY{\begin{equation}\label{eq:rmse_statistic}
	\rmse(\bar{\pp}_{i,j}, X^{\dagger}_j ) := \| \bar{\pp}_{i,j} - X^{\dagger}_j \|_{L^2}.
	\end{equation}}
	
	\item \MODIFY{The \emph{\MODIFY{ensemble spread}} ($\sprd$) of a $N$ particle measure. For example, the \MODIFY{ensemble spread} of the forecast distribution $\pp_{i,j}$ is given by} 
	\MODIFY{\begin{equation}
	\sprd(\pp_{i,j}) = \sqrt{\frac{1}{N-1}\sum_{X \in \pp_{i,j}} \| X - \bar{\pp}_{i,j} \|^2_{L^2}}. \label{eq:stddev_statistic}
	\end{equation}}
	
	\item The effective sample size (ess) statistic \eqref{eq:ess} for measuring the variance of the ensemble weights. \MODIFY{Throughout, We choose the $\ess$ threshold to be $80\%$ of the ensemble size, i.e. \[N_{\text{threshold}}=80\% \times N.\]}
	
	\item \emph{Rank histograms} for assessing the reliability of the particle filter, see \citet{brocker2018assessing,reich2015probabilistic}. This is a standard measure of ensemble reliability.  At any reference grid location, given the ensemble values $\{x_t^i\}_{i=1,\dots,N}$ that corresponds to the forecast distribution $p_t^{N}$ \eqref{eq:prediction}, and an observation value $y_t$, define the \emph{rank} function
	\begin{equation}\label{eq:rank}
	R(y_t, \{x^{i}_t\}_{i=1,\dots,N}) = 
	k \quad \text{if } x^j_t \leq y_t \text{ for }j < k, \text{ and } x^j_t > y_t \text{ for } j \geq k.
	\end{equation}
	The rank function $R$ takes values in $\{0,1,\dots,N\}$. If the ensemble forecast is reliable then $R$ is a uniform random variable, meaning the verification and the ensemble members are indistinguishable. Thus collecting the rank values over time $t\in \Lambda$, we should obtain a ``flat'' histogram plot if the particle filter gives reliable results.	
	Further it is shown in \citet{brocker2018assessing} that the rank statistic $R$ is of $\chi^2$ distribution with $N$ degrees of freedom.
\end{itemize}

\begin{table}[!htbp]
	\centering
	\begin{minipage}{0.9\textwidth}
		\begin{center}
\begin{tabular}{ll|l|l|}
\cline{3-4}
                                                                                                    &                       & $\rmse$ & $\sprd$              \\ \hline
\multicolumn{1}{|l|}{\multirow{2}{*}{\begin{tabular}[c]{@{}l@{}}Initial ens.\\ set 1\end{tabular}}} & Perfect model scen.   &   $1.8809\times 10^{-4}$   & \multirow{2}{*}{$1.23\times10^{-3}$} \\ \cline{2-3}
\multicolumn{1}{|l|}{}                                                                              & Imperfect model scen. &  $1.4614\times 10^{-4}$    &                   \\ \hline
\multicolumn{1}{|l|}{\multirow{2}{*}{\begin{tabular}[c]{@{}l@{}}Initial ens.\\ set 2\end{tabular}}} & Perfect model scen.   &  $1.2020\times 10^{-2}$    & \multirow{2}{*}{$2.46\times10^{-2}$} \\ \cline{2-3}
\multicolumn{1}{|l|}{}                                                                              & Imperfect model scen. &  $1.2017\times10^{-2}$    &                   \\ \hline
\end{tabular}

	\end{center}
	\caption{\MODIFY{Two sets of initial ensemble were generated using the deformation procedure described in section \ref{sec:initialdist}. Each set contains $100$ ensemble members. In this table we report the root mean square error and ensemble spread statistics. For each set, the rmse is computed using the true state without observation noise for the perfect model scenario and the imperfect model scenario. The ensemble spreads are the same in each case for both scenarios since  \eqref{eq:stddev_statistic} does not depend on the true state.}}
	\label{tab:initial_ensemble}	
	\end{minipage}
\end{table}

	\subsection{Initial distribution}\label{sec:initialdist}
	
	The initial distribution $\pi_0$ comes from the following construction which we call \emph{deformation}, see \citet{cotter2018numerically}. \MODIFY{The procedure can be understood as applying a random temporal scaling to a given field.}
	Let $\omega_\text{truth}$ be a fine resolution PDE vorticity field.
	Using the coarse graining operator $H$ (defined in \eqref{eq:coarse_graining}), define operator $\vortproj: \statespacepde \times \reals \rightarrow \statespacespde$ by
	\begin{equation}
	    \vortproj(\vecu, \beta) = \nabla^\perp H( \Delta^{-1} \omega^{\vecu, \beta})
	\end{equation}
	where $\omega^{\vecu, \beta}$ is the (vorticity) solution of the linear PDE
	\begin{align}
	\partial_t\omega+ \beta\vecu \cdot\nabla\omega&=0 \label{eq:initial_cond_deformation}\\
	\omega_0 &= \omega_\text{truth}
	\end{align}
	$\beta\sim\mathcal{N}(0,\epsilon),$ is a centered Gaussian weight with an apriori variance parameter $\epsilon,$ and $\vecu \sim \mathcal{U} \left(\statespacepde\right)$ is random draw from a uniform distribution on $\statespacepde$. $\beta$ and $\vecu$ are independent.
	Then 
	\begin{equation}\label{eq:prior}
	\pi_0(A ) \triangleq \prob \left(\vortproj(\vecu,\beta) \in A \right)\qquad A \in \mathcal{B}(\statespacespde)
	\end{equation}
	
	\begin{remark}\label{rem:initial_distribution}
		Practically, we randomly draw a vorticity field from the energy stable period prior to the initial data assimilation time point $t_0$. The drawn vorticity state is then used to compute its corresponding stream function by inverting the Laplacian and using the same Dirichlet condition as for \eqref{eq:laplace_stream_is_vorticity}. The  velocity field $\vecu$ in \eqref{eq:initial_cond_deformation} is then obtained from the stream function. Thus for the linear system \eqref{eq:initial_cond_deformation} the boundary condition is supplied via the sampled $\vecu$.
	\end{remark}
	
	In Hamiltonian mechanics, the conservation laws associated with relabelling symmetries are called Casimirs. In lemma \ref{lem:casimir_preserving} we show our choice for the prior distribution is physical in the sense that any sample generated by the procedure $K(\vecu,\beta)$ preserves the Casimirs of the truth $\omega_\text{truth}$. 
	\begin{definition}[Casimir, see \citet{gay2013selective}]
		For 2D incompressible ideal fluid motion, the Casimirs are 
		\[
		C_{\Phi} = \int_{\mathcal{D}} \Phi(\omega) dx
		\]for any $\Phi \in C^{\infty} (\reals, \reals) $.
	\end{definition}
	
	\begin{lemma}[Preservation of Casimirs]\label{lem:casimir_preserving} Let the domain $\mathcal{D}$ be bounded with piecewise smooth boundary. Assume the sampled vector field $\vecu \in \statespacepde$ is divergence free and $\vecu \cdot \hat{n} = 0$ with $\hat{n}$ being the normal to the boundary $\partial \mathcal{D}$, then $\omega^{\vecu,\beta}$ preserves the Casimir values of $\omega_{\text{truth}}$.
	\end{lemma}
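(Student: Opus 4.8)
The plan is to exploit the fact that a linear transport equation by a divergence-free, tangent-to-the-boundary vector field is equivalently a conservation law, and that this property is inherited by $\Phi(\omega)$ for every smooth $\Phi$. First I would observe that since $\vecu$ is divergence free, $\nabla\cdot(\beta\vecu\,\omega) = \beta\vecu\cdot\nabla\omega + \beta\omega\,\nabla\cdot\vecu = \beta\vecu\cdot\nabla\omega$, so \eqref{eq:initial_cond_deformation} can be rewritten in the conservative form $\partial_t\omega + \nabla\cdot(\beta\vecu\,\omega) = 0$. Next, applying the chain rule to $\Phi(\omega^{\vecu,\beta})$ and using \eqref{eq:initial_cond_deformation}, one gets $\partial_t\Phi(\omega^{\vecu,\beta}) = \Phi'(\omega^{\vecu,\beta})\,\partial_t\omega^{\vecu,\beta} = -\beta\,\Phi'(\omega^{\vecu,\beta})\,\vecu\cdot\nabla\omega^{\vecu,\beta} = -\beta\,\vecu\cdot\nabla\Phi(\omega^{\vecu,\beta})$, so $\Phi(\omega^{\vecu,\beta})$ solves the same transport equation and hence also admits the conservative form $\partial_t\Phi(\omega^{\vecu,\beta}) + \nabla\cdot\big(\beta\vecu\,\Phi(\omega^{\vecu,\beta})\big) = 0$ (again using $\nabla\cdot\vecu=0$).

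Then I would integrate this identity over $\mathcal{D}$ and apply the divergence theorem (legitimate since $\mathcal{D}$ is bounded with piecewise smooth boundary), obtaining
\[
\frac{d}{dt}\int_{\mathcal{D}}\Phi(\omega^{\vecu,\beta})\,dx = -\int_{\mathcal{D}}\nabla\cdot\big(\beta\vecu\,\Phi(\omega^{\vecu,\beta})\big)\,dx = -\beta\int_{\partial\mathcal{D}}\Phi(\omega^{\vecu,\beta})\,(\vecu\cdot\hat n)\,dS = 0,
\]
where the last equality is exactly the no-penetration hypothesis $\vecu\cdot\hat n=0$ on $\partial\mathcal{D}$. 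Integrating in time from $0$ and using the initial condition $\omega^{\vecu,\beta}_0=\omega_{\text{truth}}$ gives $C_\Phi(\omega^{\vecu,\beta}(t)) = \int_{\mathcal{D}}\Phi(\omega^{\vecu,\beta}(t))\,dx = \int_{\mathcal{D}}\Phi(\omega_{\text{truth}})\,dx = C_\Phi(\omega_{\text{truth}})$ for all $t$ and every $\Phi\in C^\infty(\reals,\reals)$, which is the claim.

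An equivalent route I might present instead (or as a remark) is via characteristics: the flow map $\eta_t$ generated by the time-independent vector field $\beta\vecu$ is volume-preserving because $\nabla\cdot\vecu=0$, and it maps $\mathcal{D}$ to itself because $\vecu$ is tangent to $\partial\mathcal{D}$; since $\omega^{\vecu,\beta}(t,\cdot) = \omega_{\text{truth}}\circ\eta_t^{-1}$, the change-of-variables formula with unit Jacobian yields $\int_{\mathcal{D}}\Phi(\omega^{\vecu,\beta}(t))\,dx = \int_{\mathcal{D}}\Phi(\omega_{\text{truth}})\,dx$ directly. I expect no genuine obstacle here; the only point requiring a word of care is regularity — one needs $\omega^{\vecu,\beta}$ and $\vecu$ smooth enough (or to argue by approximation) to justify the chain rule and the boundary trace in the divergence theorem, but this is guaranteed by the standing smoothness assumptions on the fields in question.
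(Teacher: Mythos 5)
Your proof is correct and follows essentially the same route as the paper's: apply the chain rule to $\Phi(\omega^{\vecu,\beta})$, substitute the transport equation, and kill the resulting term $\beta\,\vecu\cdot\nabla\Phi(\omega^{\vecu,\beta})$ by integration by parts (divergence theorem), using $\nabla\cdot\vecu=0$ and $\vecu\cdot\hat n=0$ on $\partial\mathcal{D}$. Your added detail on the conservative form, and the alternative characteristics argument, are fine but not needed beyond what the paper does.
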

	
	\begin{proof}
		We have
		\begin{align*}
		\frac{d}{dt} C_{\Phi} 
		& = \int_{\mathcal{D}} \frac{d}{dt} \Phi(\omega) dx \\
		& = \int_{\mathcal{D}} \Phi'(\omega)\partial_t\omega \ dx 
		= - \int_{\mathcal{D}}\Phi'(\omega) \beta\vecu \cdot\nabla\omega \ dx \\
		& =  - \int_{\mathcal{D}} \beta\vecu \cdot\nabla \Phi(\omega) \ dx
		= 0
		\end{align*}
		where the last equality follows from integration by parts and the conditions assumed on $\vecu$.
	\end{proof}
	
\begin{figure}[!htbp]
	\centering
	\begin{minipage}{0.9\textwidth}
		\begin{center}
			\includegraphics[width=0.65\textwidth]{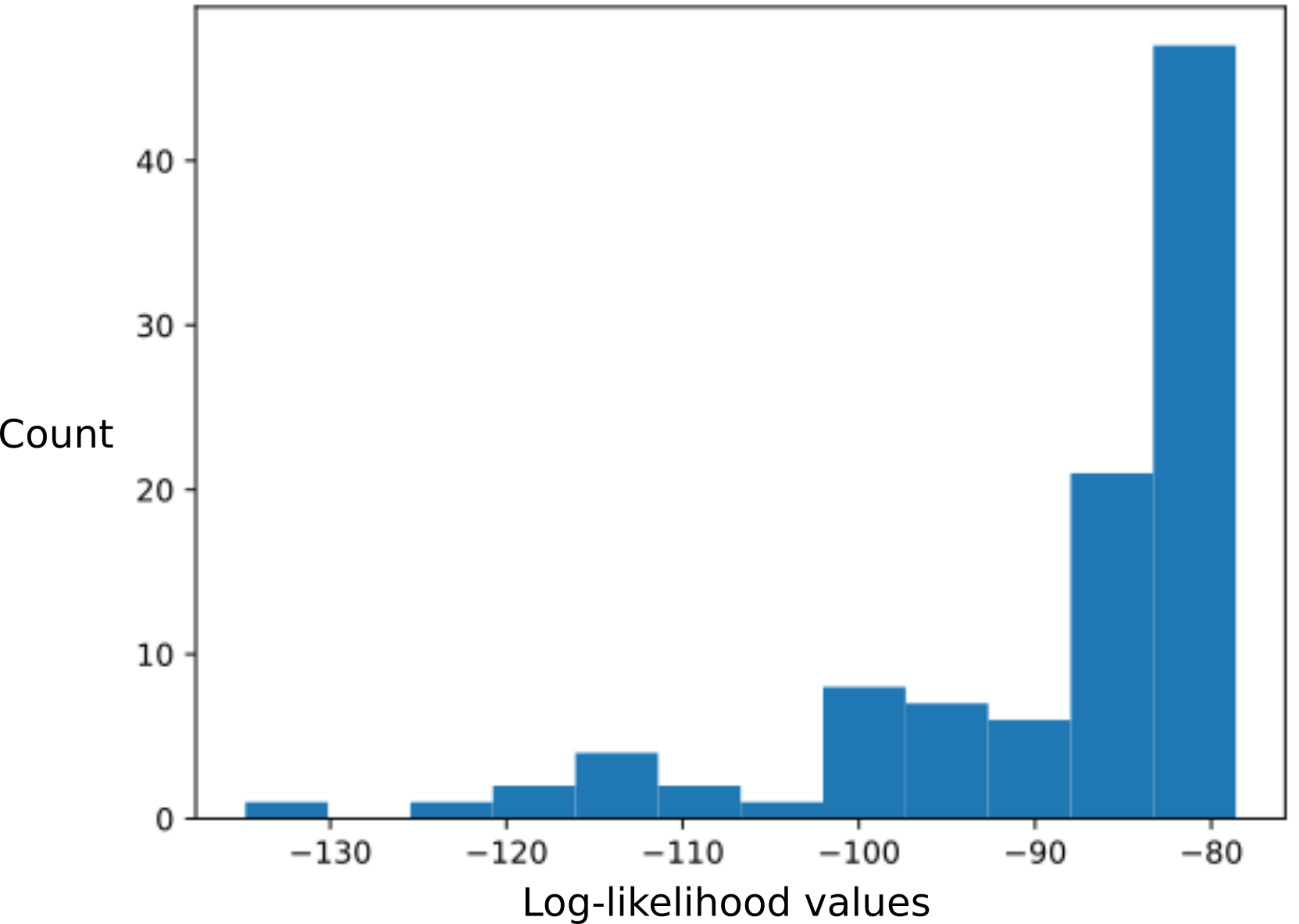}\par
		\end{center}
		\caption{Histogram showing log-likelihood $\ln g^{Y_t}(\cdot)$ weights for an ensemble of $100$ particles that define the forecast distribution at time $t_1$ with $t_1 - t_0 = 1$ eddy turnover time. The highly right skewedness of the bins and the numerical range of log-likelihood values demonstrate the singular nature of $\pi_{t_1}^{N}$ and the requirement of tempering and MCMC jittering techniques to make the basic bootstrap particle filter work. Otherwise resampling would lead to degenerate posteriors $\pi_t^{N}$.}
		\label{fig:loglikelihoodhistogram}
	\end{minipage} 
\end{figure}

\MODIFY{
To investigate the impact the initial distribution can have on the filtering experiment, 
we generate two different sets of initial ensemble. Each set contains $100$ ensemble members.
For both sets, $\omega_\text{truth}$ is taken to be the imperfect model observation's initial condition (see figure \ref{fig:initialPDEtruth} for visualisations), and we choose $\epsilon = 0.25$ for the random scaling parameter $\beta$. 
For the first set, equation \eqref{eq:initial_cond_deformation} is solved for $104$ fine resolution CFL time steps. For the second set,  equation \eqref{eq:initial_cond_deformation} is solved for $3000$ fine resolution CFL time steps. This way, we obtain two initial ensembles whose ensemble average rmse differ by two orders of magnitude, see table \ref{tab:initial_ensemble}.}


Before discussing experiment results, figure \ref{fig:loglikelihoodhistogram} shows a histogram of the log-likelihood $\ln g^{Y_t}(\cdot)$ values for an ensemble of $100$ particles that defines the forecast distribution $p_{t_1}^{N}$, with $t_1 - t_0 = 1 $ ett. It shows  straightforwardly the singular nature of $\pi_{t_1}^{N}$ and that without tempering and MCMC jittering, a plain bootstrap particle filter algorithm would fail in the sense that particle diversity would be lost very quickly, leading to degenerate posteriors $\pi_t^{N}$.

\subsection{Perfect model scenario} \label{sec:perfectmodelscenario}
\begin{figure}[!htbp]
	\centering
	\begin{minipage}{0.9\textwidth}
		\includegraphics[width=\textwidth]{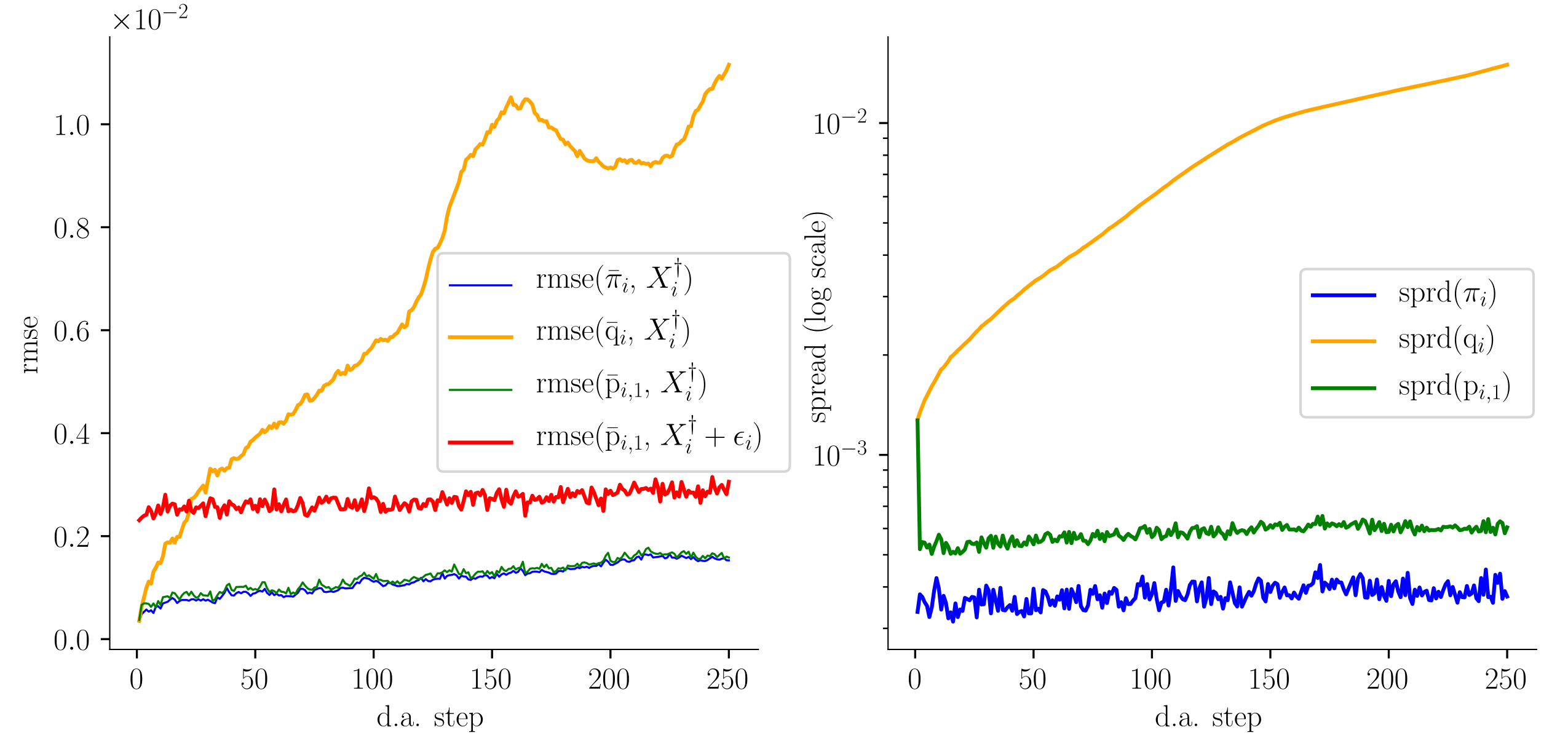}
		\caption{\MODIFY{Perfect model scenario (experiment using initial ensemble set 1). The left figure compares rmse values and the right figure compares the ensemble spreads. In the rmse figure, the blue plot shows the rmse between the posterior ensemble mean and true state; the green plot shows the rmse between the one step forecast ensemble mean and the true state; the red plot shows the rmse between the one step forecast ensemble mean and the true state plus observation noise; the orange plot shows the rmse between the prior ensemble mean and the true state. In the spread figure, the blue plot shows the spread of the posterior ensemble; the green plot shows the spread of one step forecast ensemble; the orange plot shows the spread of the prior distribution ensemble.} }
		\label{figure: rmse and spread initial ensemble set 1}
	\end{minipage}
\end{figure}

\begin{figure}[!htbp]
	\centering
	\begin{minipage}{0.9\textwidth}
		\includegraphics[width=\textwidth]{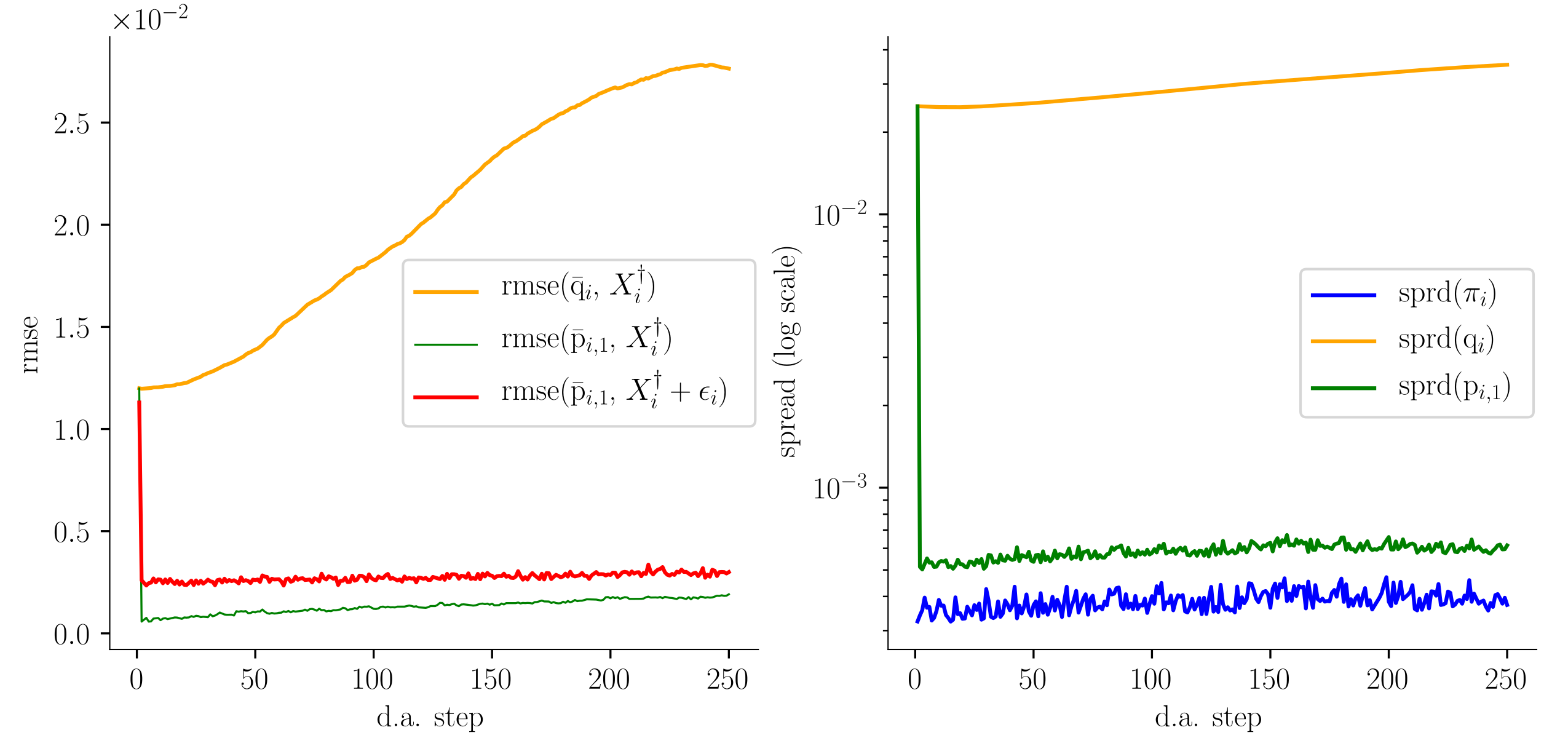}
		\caption{\MODIFY{Perfect model scenario (experiment using initial ensemble set 2). The left figure compares rmse values and the right figure compares the ensemble spreads. In the rmse figure, the green plot shows the rmse between the one step forecast ensemble mean and the true state; the red plot shows the rmse between the one step forecast ensemble mean and the true state plus observation noise; the orange plot shows the rmse between the prior ensemble mean and the true state. In the spread figure, the blue plot shows the spread of the posterior ensemble; the green plot shows the spread of one step forecast ensemble; the orange plot shows the spread of the prior distribution ensemble.}} 
		\label{figure: rmse and spread initial ensemble set 2}
	\end{minipage}
\end{figure}

\begin{figure}[!htbp]
	\centering
	\begin{minipage}{0.9\textwidth}
	\includegraphics[width=\textwidth]{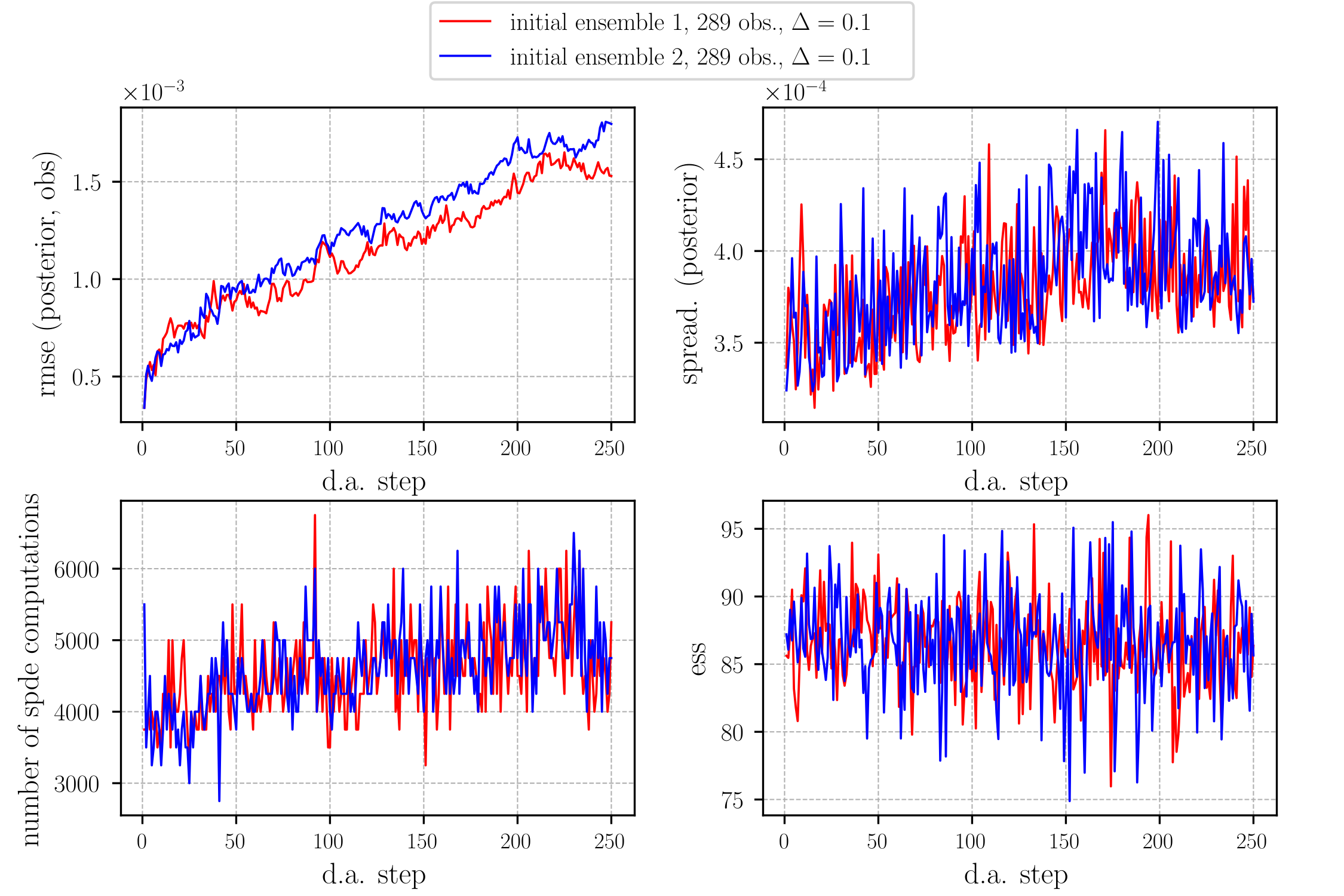}
	\caption{\MODIFY{Perfect model scenario. The four figures show the root mean square error between the posterior ensemble mean and true state, ensemble spread of the posterior, number of SPDE evaluations at each assimilation step, and ess. The red plots correspond to the experiment using initial ensemble set 1. The blue plots correspond to the experiment using initial ensemble set 2. For both experiments the assimilation interval of $\Delta=0.04$ ett ($0.1$ time units) and $289$ weather stations were used. Both experiments were run for a total of $10$ ett, which amounted to $250$ data assimilation steps.} }
	\label{figure:spde_statistics}
	\end{minipage}
\end{figure}

\begin{figure}[!htbp]
	\centering
	\begin{minipage}{0.9\textwidth}
	\includegraphics[width=\textwidth]{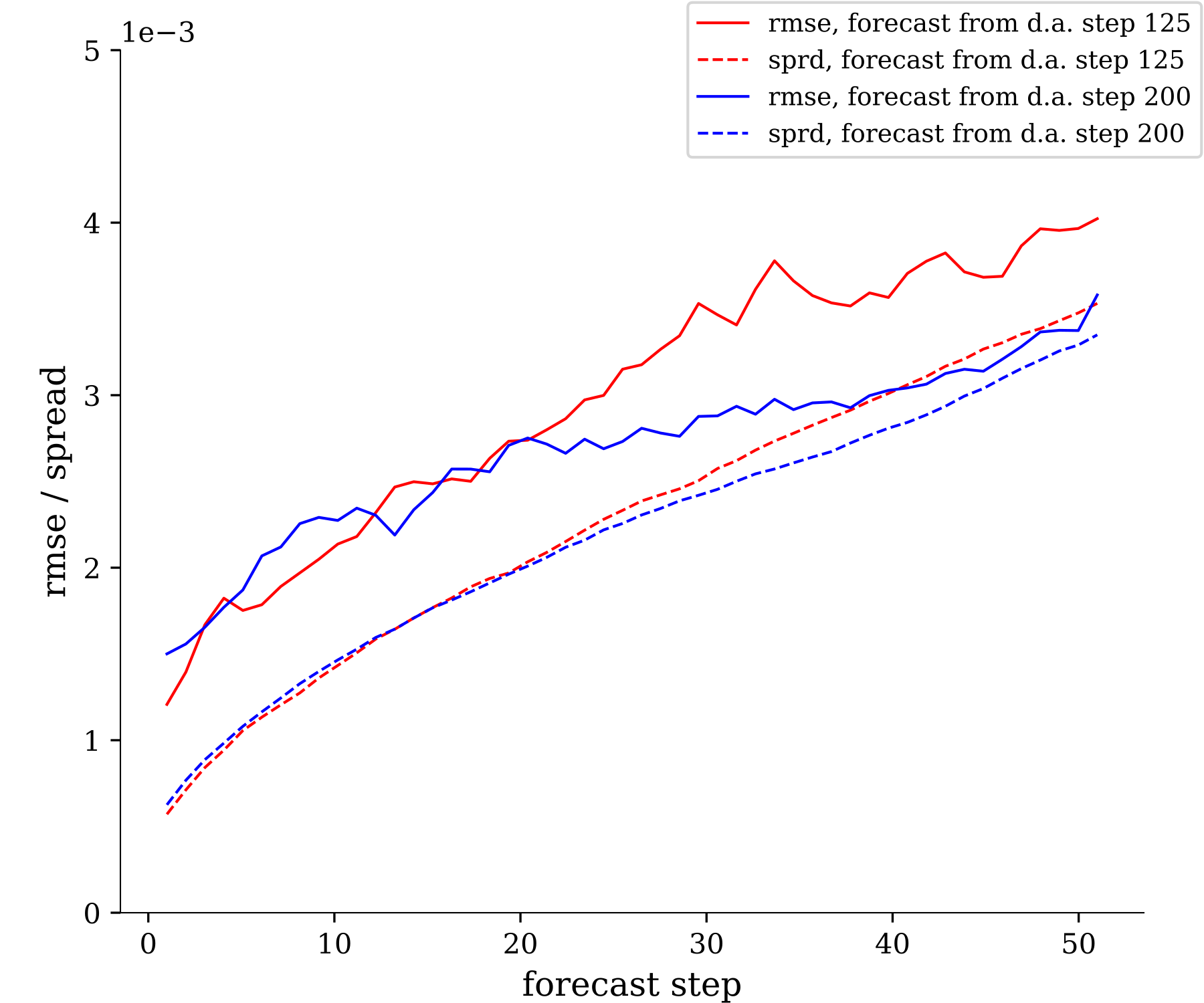}
	\caption{\MODIFY{Perfect model scenario (experiment using initial ensemble set 1). In this figure, we compare a 50 step forecast ensemble rmse with its spread, starting from two different posterior distributions. For red plots, we start from the posterior distribution at d.a. step $125$ and compute the forecast distribution for $50$ steps, i.e. $\pp_{125,j}$, for $j=1,\dots,50$, with $\pp_{125,0} = \ppi_{125}$. For the blue plots, we start from $\pp_{200,0} = \ppi_{200}$ and compute $\pp_{200,j}$ for $j=1,\dots,50$.} }
	\label{figure: rmse vs spread initial ensemble set 1}
	\end{minipage}
\end{figure}

\begin{figure}[!htbp]
	\centering
	\begin{minipage}{0.95\textwidth}
\begin{subfigure}{\textwidth}
		    \includegraphics[width=\textwidth]{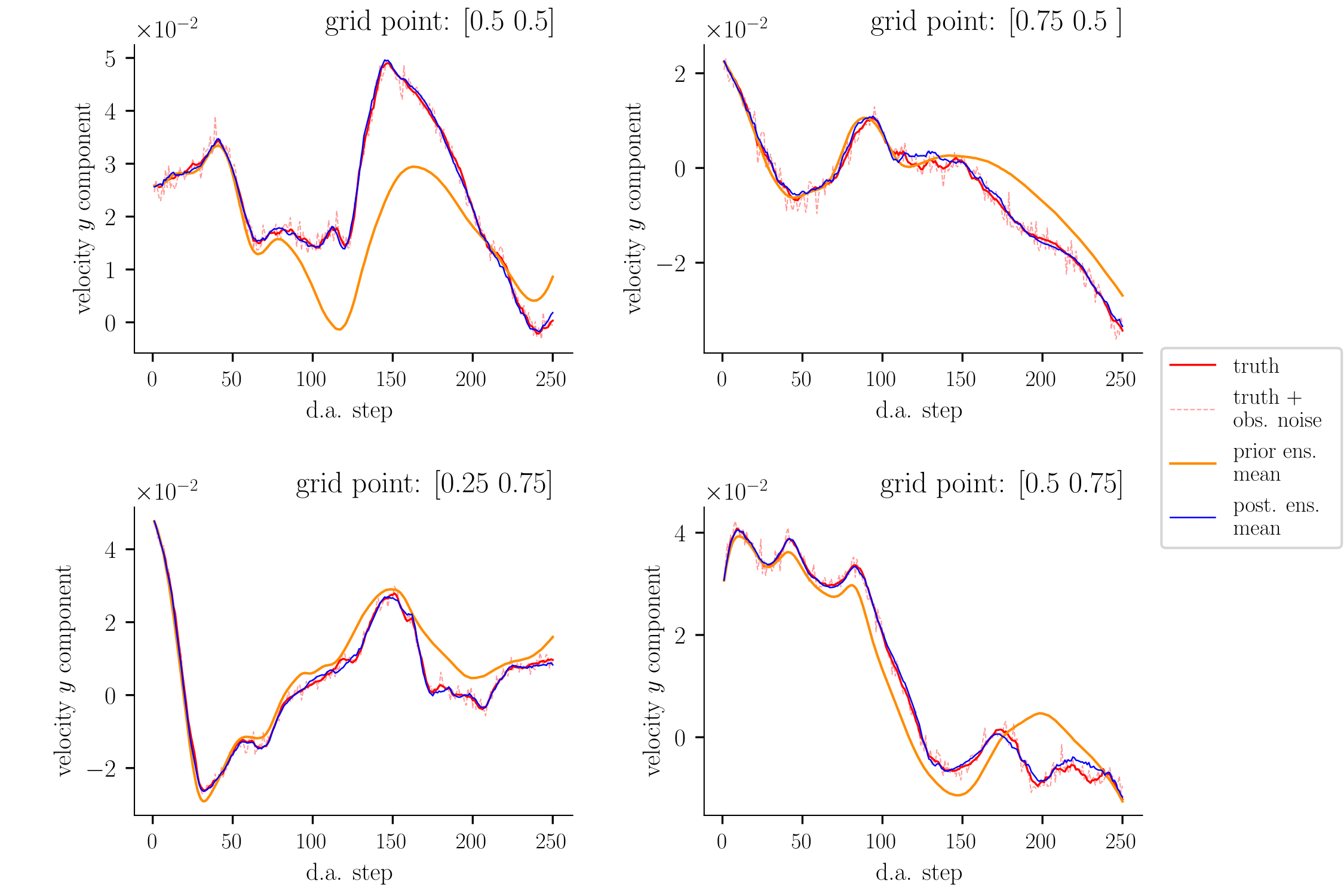}
    	    \caption{Ensemble mean trajectories} 
	        \label{figure:spde_trajectory initial ensemble 1}
	    \end{subfigure}
	    
	    \vspace{0.3cm}
	    
	    \begin{subfigure}{\textwidth}
    		\includegraphics[width=\textwidth]{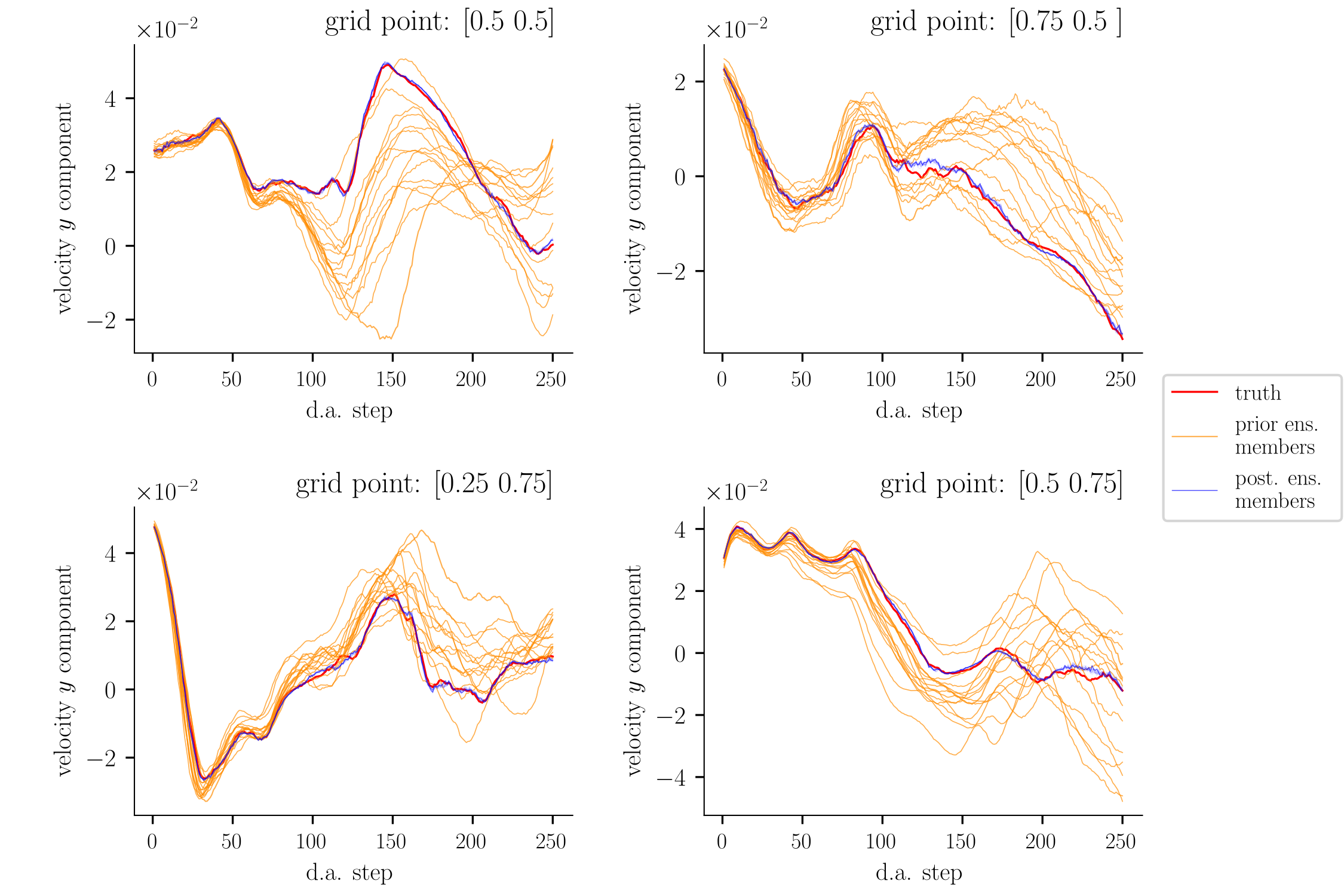}
    	    \caption{Ensemble member trajectories}
    	    \label{figure:spde spread initial ensemble 1}
	    \end{subfigure}

		\caption{\MODIFY{Perfect model scenario (experiment using initial ensemble set 1). In the first subfigure, we show the Eulerian trajectories of the truth (red), truth plus observation noise (dashed pink), prior ensemble mean (orange) and posterior ensemble mean (blue), at four grid points. In the second subfigure, Eulerian trajectories of 15 individual ensemble members are plotted.} }
		\label{figure:spde trajectory and spread initial ensemble 1}
	\end{minipage}
\end{figure}

\begin{figure}[!htbp]
	\centering
	\begin{minipage}{0.95\textwidth}
\begin{subfigure}{\textwidth}
		    \includegraphics[width=\textwidth]{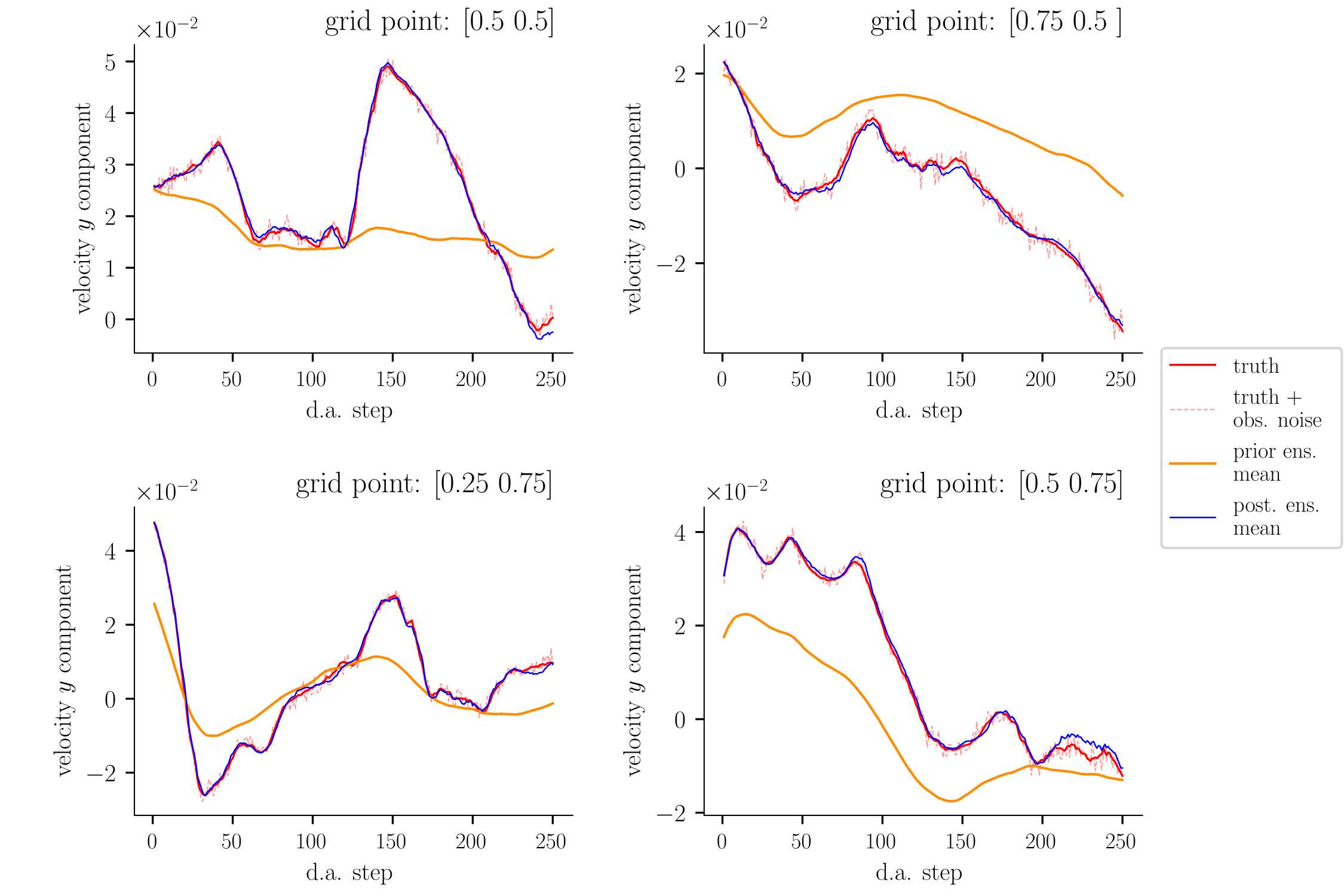}
    	    \caption{Ensemble mean trajectories} 
	        \label{figure:spde traj initial ensemble 2}
	    \end{subfigure}
	    
	    \vspace{0.3cm}
	    
	    \begin{subfigure}{\textwidth}
    		\includegraphics[width=\textwidth]{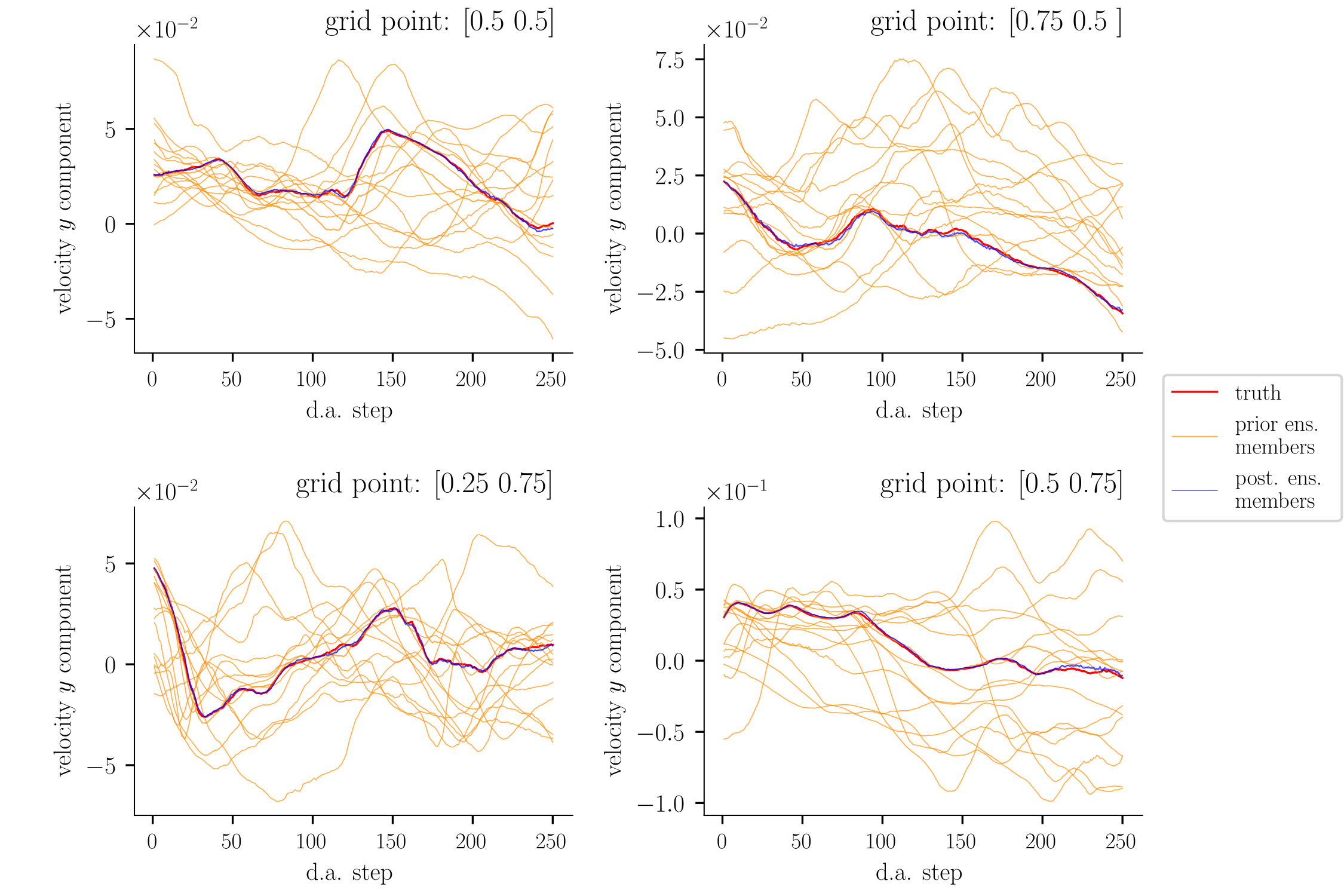}
    	    \caption{Ensemble member trajectories}
    	    \label{figure: spde spread initial ensemble 2}
	    \end{subfigure}
	\caption{\MODIFY{Perfect model scenario (experiment using initial ensemble set 2). In the first subfigure, we show the Eulerian trajectories of the truth (red), truth plus observation noise (dashed pink), prior ensemble mean (orange) and posterior ensemble mean (blue), at four grid points. In the second subfigure, Eulerian trajectories of 15 individual ensemble members are plotted.} }
	\label{figure:spde trajectory spread initial ensemble set 2}
	\end{minipage}
\end{figure}

\begin{figure}[!htbp]
	\centering
	\begin{minipage}{0.9\textwidth}
		\includegraphics[width=\textwidth]{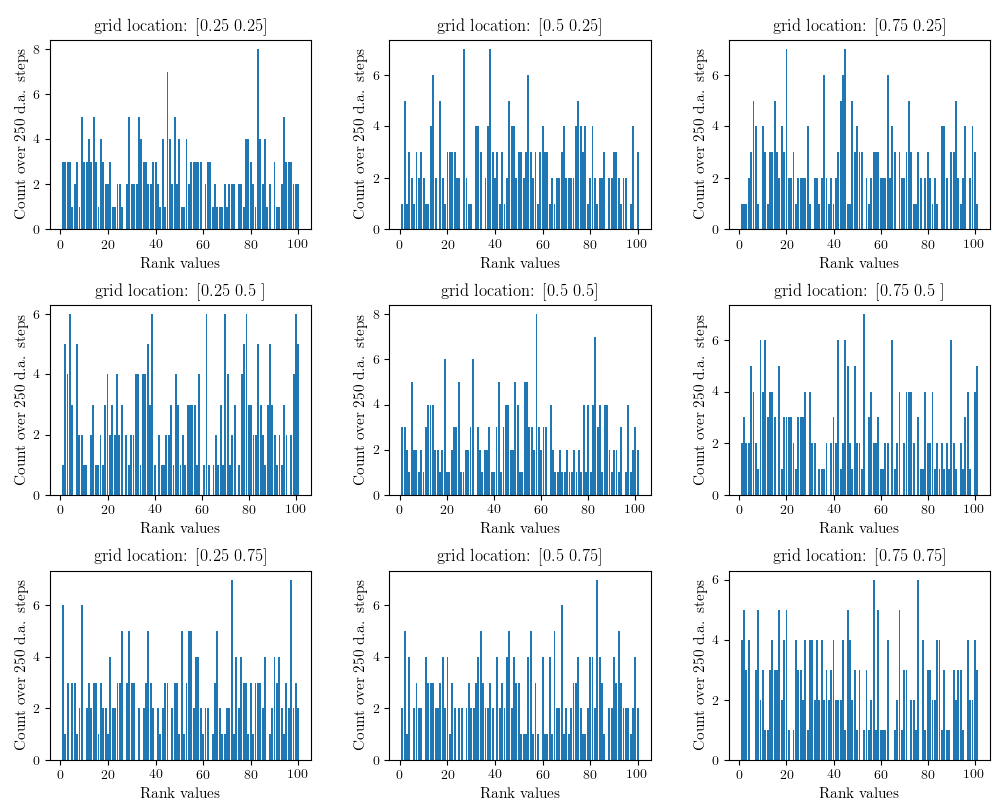}
		\caption{\MODIFY{Perfect model scenario (initial ensemble set 1). Forecast reliability rank histogram plots at nine grid locations, for a single run using the parameters: $100$ particles, assimilation period $\Delta = 1/25$ ett, observation noise scaling $\lambda=10$ and $289$ weather stations. Experiment period: $10$ ett. Grid locations are shown above the plots.}}
		\label{figure:spde_rankhistorgram_ux}
	\end{minipage}
\end{figure}

\begin{figure}[!htbp]
	\centering
	\begin{minipage}{0.9\textwidth}
		\includegraphics[width=\textwidth]{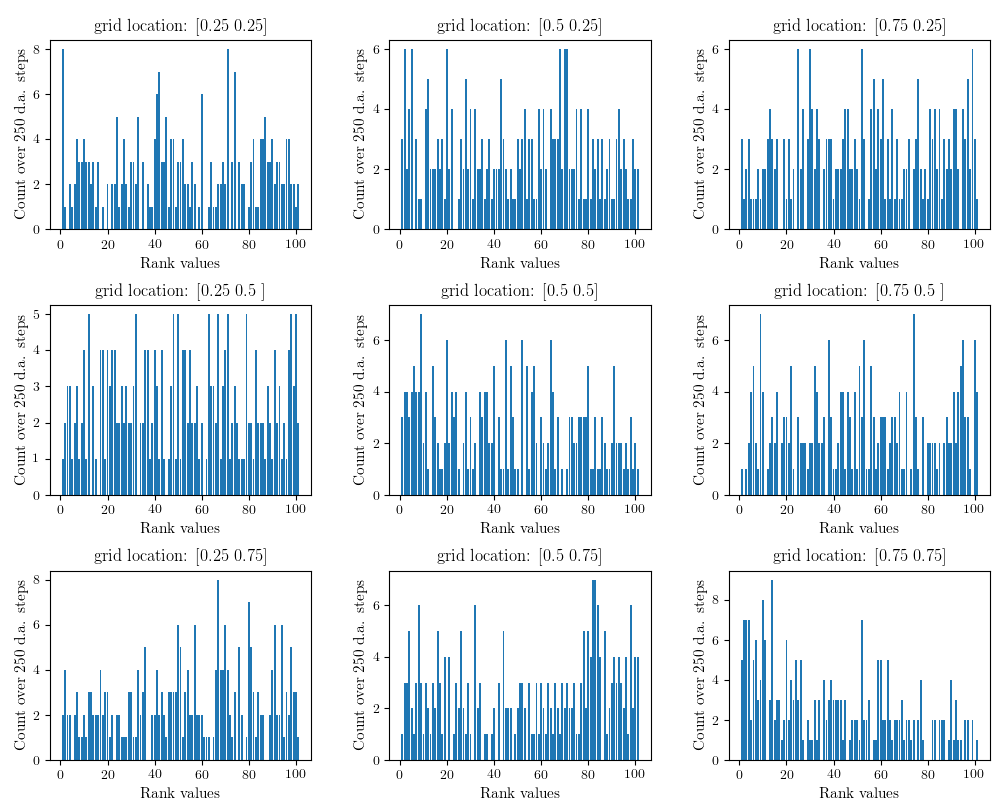}
		\caption{\MODIFY{Perfect model scenario (initial ensemble set 1). Velocity x-component rank histogram plots at nine grid locations, for a single run using the parameters: $100$ particles, assimilation period $\Delta = 1/25$ ett, observation noise scaling $\lambda=10$ and $81$ weather stations. Experiment period: $10$ ett. Grid locations are indicated above the plots.}}
		\label{figure:spde_rankhistorgram_ux_msh8}
	\end{minipage}		
\end{figure}

\MODIFY{A single realisation of the SPDE was used as the truth for the experiments in this scenario. The data assimilation experiments are defined by the following parameters: time interval between assimilations $\Delta=0.04$ ett (every $5$ coarse time steps), number of weather stations $d_y=289$, observation error scaling $\lambda=0.6$.
We ran the same experiment setup for each of the two initial ensembles (see table \ref{tab:initial_ensemble}), using ensemble size $100$ and for a total experiment period of $10$ ett.
Note that $10$ ett is equivalent to $1250$ coarse resolution time steps. For our assimilation interval choice, $10$ ett amounts to $250$ data assimilation steps.
The experiments were run independently of each other.}

\MODIFY{This scenario serves as an important test case for the filtering algorithm because there is no discrepancy between the model and the true state. We want to see a stable rmse error between the posterior ensemble mean and the truth. This is an important indicator to show that the filter does not lose track of the signal over the experiment period. If the results do not show this, then it would be very unlikely that the filtering algorithm can be made to work  with the PDE to SPDE model reduction.}

\MODIFY{The left subplot in figure \ref{figure: rmse and spread initial ensemble set 1} shows comparisons of the rmse between the posterior ensemble mean and the true state (in blue) \[\rmse(\bar{\ppi}_{i}, X^\dagger_i),\] the rmse between the one step forecast ensemble mean and the true state (in green) \[\rmse(\bar{\pp}_{i,1},X^\dagger_i),\] the rmse between the prior ensemble mean and the true state (in orange) \[\rmse(\bar{\qq}_i, X_i^\dagger),\] and lastly the rmse between the one step forecast ensemble mean and the true state plus observation noise (in red) \[\rmse(\bar{\pp}_{i,1},X_i^\dagger + \epsilon_i).\] 
In figure \ref{figure: rmse and spread initial ensemble set 2}, the rmse subplot shows comparisons of $\rmse(\bar{\pp}_{i,1},X^\dagger_i)$ (in green), $\rmse(\bar{\pp}_{i,1},X^\dagger_i)$ (in orange) and
$\rmse(\bar{\pp}_{i,1},X_i^\dagger + \epsilon_i)$ (in red).
}

\MODIFY{In both rmse subplots of figure \ref{figure: rmse and spread initial ensemble set 1} and figure \ref{figure: rmse and spread initial ensemble set 2}, the rmse between the one step forecast ensemble mean and the true state plus observation noise $\rmse(\bar{\pp}_{i,1},X_i^\dagger + \epsilon_i)$ (in red) is \emph{stable}. It is (roughly) a few times larger than the rmse between the one step forecast ensemble mean and the truth without observation noise $\rmse(\bar{\pp}_{i,1},X_i^\dagger)$ (in green). This suggests that the size of the observation noise is dominating. 
Because of this, we treat the positive trend in $\rmse(\bar{\ppi}_i, X^\dagger_i)$ (in blue) and in $\rmse(\bar{\pp}_{i,1},X_i^\dagger)$ (in green) as minimal -- the positive trend is within the ``accuracy tolerance" measured by the likelihood function. Thus  $\rmse(\bar{\ppi}_i, X^\dagger_i)$ is sufficiently stable.
This is further supported by a comparison with the increase in the rmse between the prior ensemble mean and the truth (in orange).  Therefore, the data we assimilated, albeit low dimensional relative to the SPDE degree of freedom, gives sufficient information to be able to offer a reasonably accurate approximation of the signal.}

\MODIFY{Another feature to note in the rmse subplots in figure \ref{figure: rmse and spread initial ensemble set 1} is, the rmse between the one step forecast ensemble mean and the truth (in green) is slightly larger than the rmse between the posterior ensemble mean and the truth (in blue). 
This feature is due to the resampling step at each assimilation time. 
The same reason applies to the differences between the ensemble spreads of the posterior (in blue) and one step forecast (in green), shown in the right subplot in figures \ref{figure: rmse and spread initial ensemble set 1} and \ref{figure: rmse and spread initial ensemble set 2}. 
One can also see that the posterior ensemble spreads are reasonably stable. However, in the absence of data assimilation corrections, for the prior distribution (in orange) we see a continuous increase in both the rmse and spread.}

\MODIFY{Figure \ref{figure:spde_statistics} compares the effect of the two initial ensembles on the experiments.
In the figure, the four subplots compare $\rmse(\bar{\ppi}_i, X^\dagger_i)$, $\sprd(\ppi_i)$, number of SPDE computations and ess. The two different experiments produced very close results. }

\MODIFY{As shown in table \ref{tab:initial_ensemble}, the two initial ensembles produce very different initial rmse and ensemble spread. The rmse values of initial ensemble 1's ensemble mean are two orders of magnitude smaller than initial ensemble 2's rmse values. The ensemble spread of initial ensemble 1 is also one order of magnitude smaller than the ensemble spread of initial ensemble 2. Despite these relatively large initial differences, after one data assimilation step is completed, the spread and rmse of the corresponding ensembles become comparable. The reason is that all unlikely particles are immediately eliminated whilst the diversity of the ensemble is kept high through the tempering procedure. This is also visualised in figure \ref{figure: rmse and spread initial ensemble set 2}, indicated by the initial sharp decrease in the one step forecast rmse, $\rmse(\bar{\pp}_{i,1},X^\dagger_i)$, and one step forecast ensemble spread, $\sprd(\pp_{i,1})$ (both plotted in green).}

\MODIFY{The bottom left subfigure of figure \ref{figure:spde_statistics} shows the amount of computation taken at each assimilation time, measured in terms of the number of SPDE evaluations. The values used to obtain these plots can be accurately estimated from the number of tempering steps. 
For each tempering step, we have to solve $100$ SPDEs (number of ensemble members), followed by a fixed number of jittering steps (we use a fixed value of $5$ jittering steps, see step 4 of algorithm \ref{alg:mcmc_jittering}) for each duplicate resampled ensemble member. We assume a duplicate rate of $30\%$\footnote{The duplicate rate is more or less the average number of duplicates per assimilation step from our numerical experiments.}. Thus the computational cost in terms of number of SPDE evaluations can be estimated by
\[
\text{(no. of tempering steps)}\times(N+5*30\% N).
\]}

\MODIFY{Finally, the ess value in figure \ref{figure:spde_statistics} shows that the tempering procedure is successful in keeping the ess values near the chosen threshold of $80\%$.}

\MODIFY{The uncertainty quantification results in \cite{cotter2018numerically} focused on the prior ensemble. In figure \ref{figure: rmse vs spread initial ensemble set 1}, we compare forecast rmse with forecast ensemble spread, i.e.
\[
\rmse(\bar{\pp}_{i,j}, X_j^\dagger)
\quad\text{ with }\quad
\sprd(\pp_{i,j})\quad j=1,\dots,50, \quad i=125, 200
\]
taking $\pp_{i,0} = \ppi_{i}$. The plots show that the forecast rmse and forecast ensemble spread are comparable. Further, the difference between corresponding rmse and spread, whether starting with $\ppi_{125}$ or $\ppi_{200}$,
are more or less equal. 
Both features indicate the filter is keeping track of the signal. Otherwise it is unlikely that the difference between forecast rmse and forecast spread is maintained starting from different posterior distributions. Note that there are 75 assimilation steps between $\ppi_{125}$ and $\ppi_{200}$, which amounts to $3$ ett.}

\MODIFY{In figures \ref{figure:spde trajectory and spread initial ensemble 1} and \ref{figure:spde trajectory spread initial ensemble set 2} we show the Eulerian trajectories of the velocity y-component at four spatial locations. Figure \ref{figure:spde trajectory and spread initial ensemble 1} corresponds to the experiment using initial ensemble 1. Figure \ref{figure:spde trajectory spread initial ensemble set 2} corresponds to the experiment using initial ensemble 2. In each figure, the two subfigures correspond to the same experiment at the same grid locations.}

\MODIFY{In the subfigures \ref{figure:spde_trajectory initial ensemble 1} and \ref{figure:spde traj initial ensemble 2}, we plot the true state (in red), the true state plus observation noise (in dashed pink), the posterior ensemble mean (in blue) and the prior ensemble mean (in orange). 
Since initial ensemble 1 start very close to the initial truth, we see in subfigure \ref{figure:spde_trajectory initial ensemble 1} that the prior ensemble mean's initial deviation from the truth is small. But the deviation become much more pronounced after assimilation step 100 at grid location $[0.5, 0.5]$. The posterior ensemble mean (in blue) stays close to the observed truth (in pink) at all four grid locations. In subfigure \ref{figure:spde spread initial ensemble 1}, trajectory of individual ensemble members are plotted. It shows how the prior ensemble spread increases as time goes on, but the posterior ensemble spread seems stable. This supports the features shown in figures \ref{figure:spde_statistics}, \ref{figure: rmse and spread initial ensemble set 1} and \ref{figure: rmse and spread initial ensemble set 2}. In this scenario though, because there is no model error, we see in subfigure \ref{figure:spde spread initial ensemble 1} that the truth does not deviate from the spread of the prior ensemble. The assimilated data allowed the posterior ensemble to offer a reasonably accurate approximation of the truth, whilst reducing the approximation uncertainty.}

\MODIFY{Initial ensemble 2 start alot farther from the initial truth compared to initial ensemble 1. In subfigures \ref{figure:spde traj initial ensemble 2} and \ref{figure: spde spread initial ensemble 2}, we see that the prior ensemble mean deviates from the truth more greatly than shown in subfigures \ref{figure:spde_trajectory initial ensemble 1} and \ref{figure:spde spread initial ensemble 1}. This is further evidence to support the observation that the filtering algorithm was able to eliminate the unlikely particle positions whilst maintaining ensemble diversity, to reasonably approximate the truth.}

\MODIFY{Lastly, in figures \ref{figure:spde_rankhistorgram_ux} and \ref{figure:spde_rankhistorgram_ux_msh8} we show rank histogram plots of the Eulerian velocity x-component, at nine grid locations. Figure \ref{figure:spde_rankhistorgram_ux} corresponds to the experiment using initial ensemble 1, $289$ observations, assimilation period $0.04$ ett and observation noise scaling $\lambda=10$.
The plots do not show features of strong bias, under-dispersion or over-dispersion over the experiment period of $10$ ett. 
Figure \ref{figure:spde_rankhistorgram_ux_msh8} corresponds to the same repeated experiment but using a fewer number of weather stations ($81$ observations). We observe that, assimilating less data leads to more pronounced features of skew, e.g. at grid locations $[0.75, 0.75]$, $[0.25, 0.75]$ and $[0.5, 0.5]$.
}

\subsection{Imperfect model scenario}\label{sec:imperfectmodelresults}

\begin{figure}[!htbp]
	\centering
	\begin{minipage}{0.9\textwidth}
		\includegraphics[width=\textwidth]{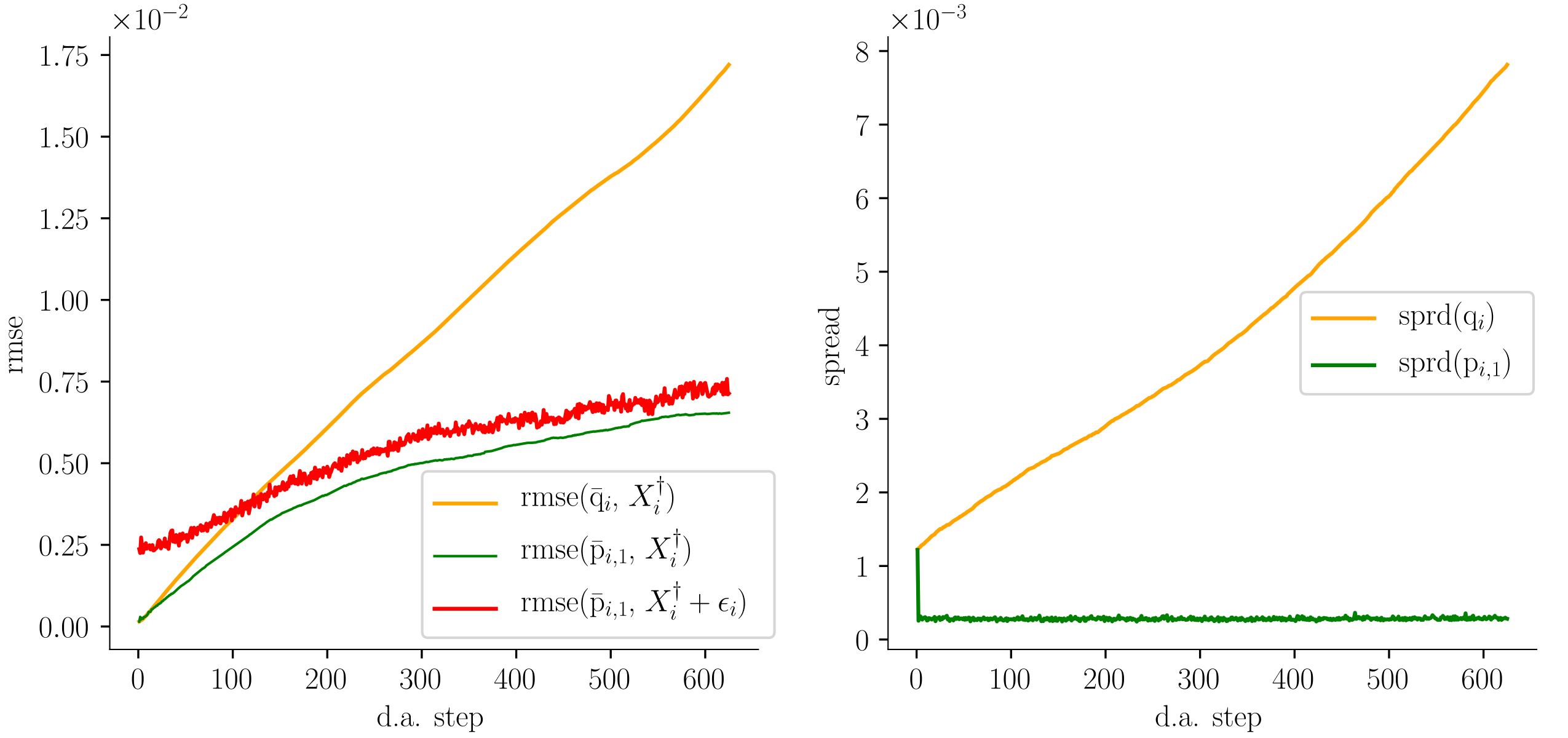}
		\caption{\MODIFY{Imperfect model scenario (experiment using initial ensemble set 1, and $289$ observations). The left figure compares rmse values and the right figure compares the ensemble spreads. In the rmse figure, the green plot shows the rmse between the one step forecast ensemble mean and the true state; the red plot shows the rmse between the one step forecast ensemble mean and the true state plus observation noise; the orange plot shows the rmse between the prior ensemble mean and the true state. In the spread figure, the green plot shows the spread of one step forecast ensemble; the orange plot shows the spread of the prior distribution ensemble.} }
		\label{figure: pde rmse and spread initial ensemble set 1}
	\end{minipage}
\end{figure}

\begin{figure}[!htbp]
	\centering
	\begin{minipage}{0.9\textwidth}
		\includegraphics[width=\textwidth]{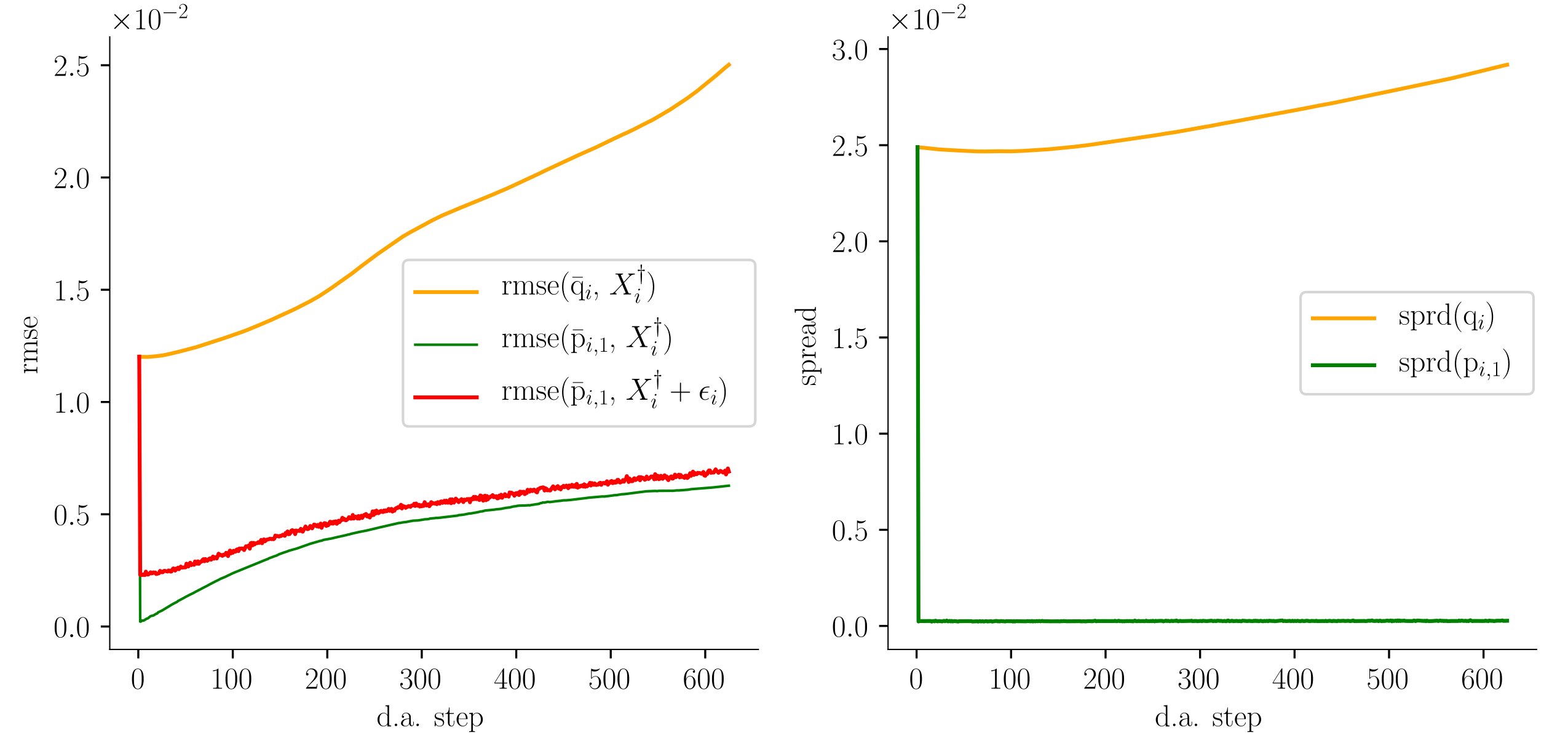}
		\caption{\MODIFY{Imperfect model scenario (experiment using initial ensemble set 2, and $1089$ observations). The left figure compares rmse values and the right figure compares the ensemble spreads. In the rmse figure, the green plot shows the rmse between the one step forecast ensemble mean and the true state; the red plot shows the rmse between the one step forecast ensemble mean and the true state plus observation noise; the orange plot shows the rmse between the prior ensemble mean and the true state. In the spread figure, the green plot shows the spread of one step forecast ensemble; the orange plot shows the spread of the prior distribution ensemble.} }
		\label{figure: pde rmse and spread initial ensemble set 2}
	\end{minipage}
\end{figure}

\begin{figure}[!htbp]
	\centering
	\begin{minipage}{0.9\textwidth}
		\includegraphics[width=\textwidth]{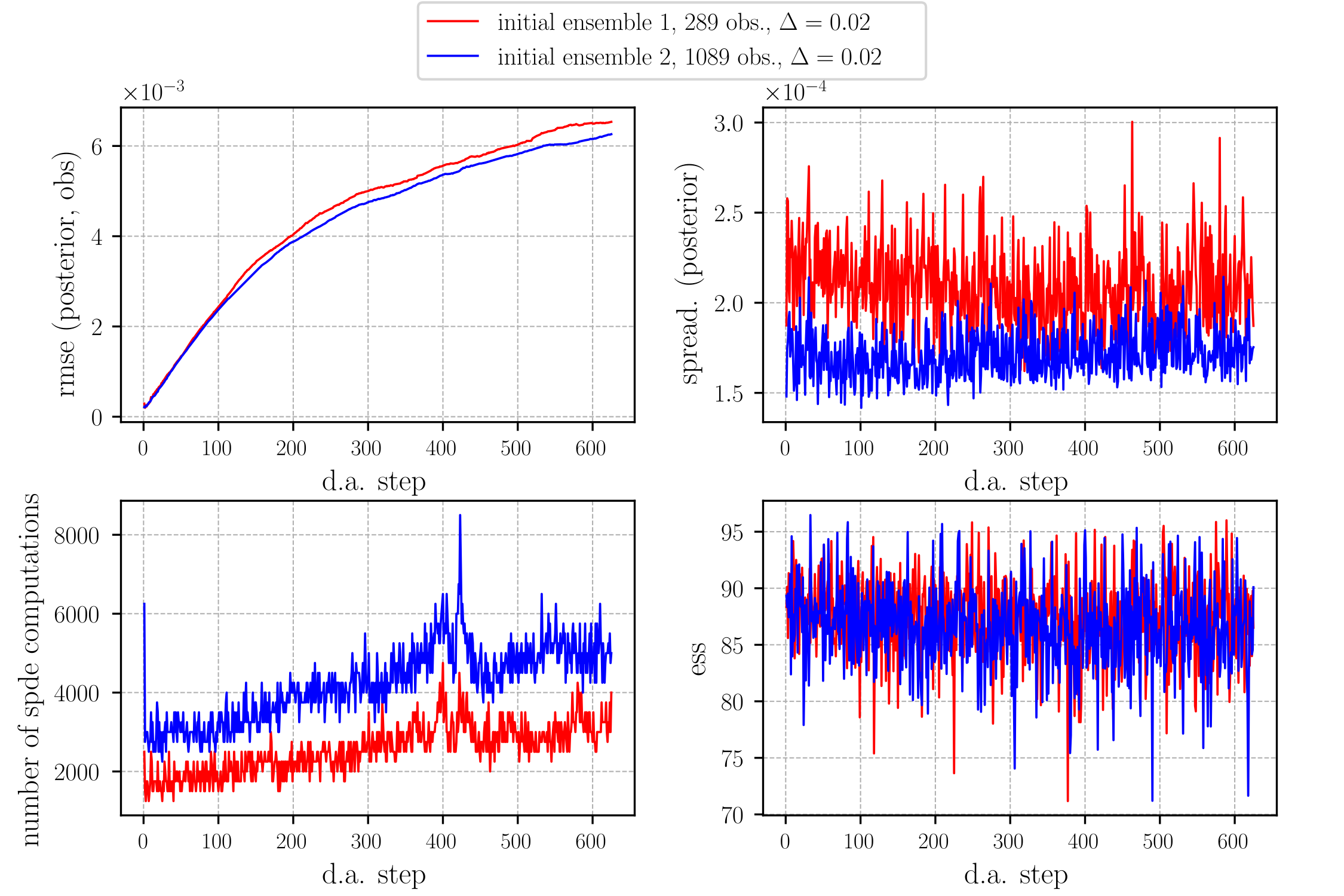}
		\caption{\MODIFY{Imperfect model scenario. The four figures show the root mean square error between the posterior ensemble mean and true state, ensemble spread of the posterior, number of SPDE evaluations at each assimilation step and ess. The red plots corresponds to the experiment using initial ensemble set 1 and $289$ observations. The blue plots corresponds to the experiment using initial ensemble set 2 and $1028$ observations. For both experiments the assimilation interval $\Delta = 0.08$ ett ($0.02$ time units) was used. Both experiments were run for a total of $5$ ett, which amounted to $625$ data assimilation steps.} }
		\label{figure:pde_statistics}
	\end{minipage}
\end{figure}

\begin{figure}[!htbp]
	\centering
	\begin{minipage}{0.9\textwidth}
		\includegraphics[width=\textwidth]{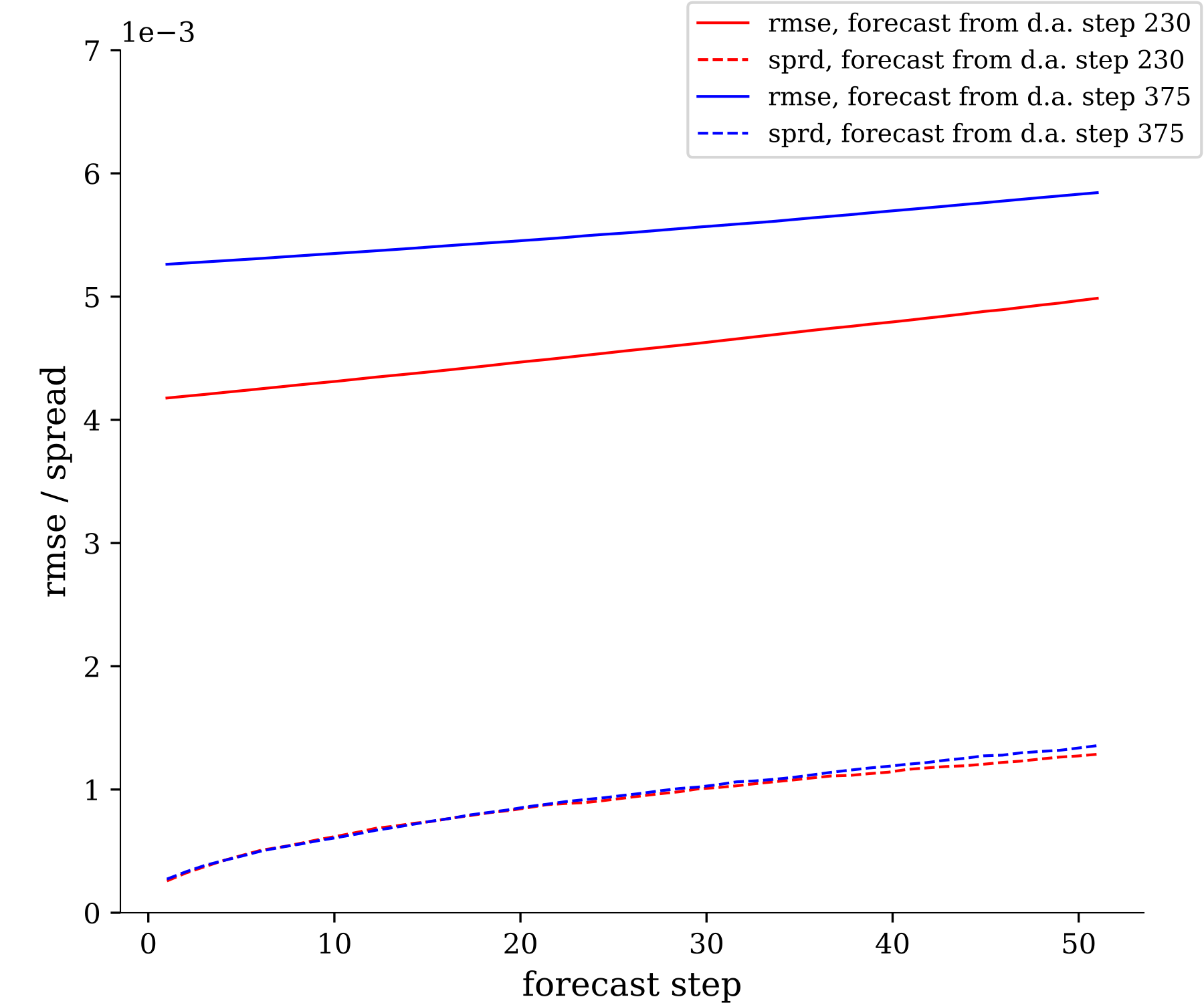}
		\caption{\MODIFY{Imperfect model scenario (experiment using initial ensemble set 1, and $289$ observations). In this figure, we compare a 50 step forecast rmse with ensemble spread starting from two different posterior distributions. For the red plots, we start from the posterior distribution at d.a. step $125$ and compute the forecast distribution for $50$ steps, i.e. $\pp_{125,j}$, for $j=1,\dots,50$, with $\pp_{125,0} = \ppi_{125}$. For the blue plots, we start from $\pp_{200,0} = \ppi_{200}$ and compute $\pp_{200,j}$ for $j=1,\dots,50$.} }
		\label{figure: pde rmse vs spread initial ensemble set 1}
	\end{minipage}
\end{figure}

\begin{figure}[!htbp]
	\centering
	\begin{minipage}{0.95\textwidth}
		\begin{subfigure}{\textwidth}
			\includegraphics[width=\textwidth]{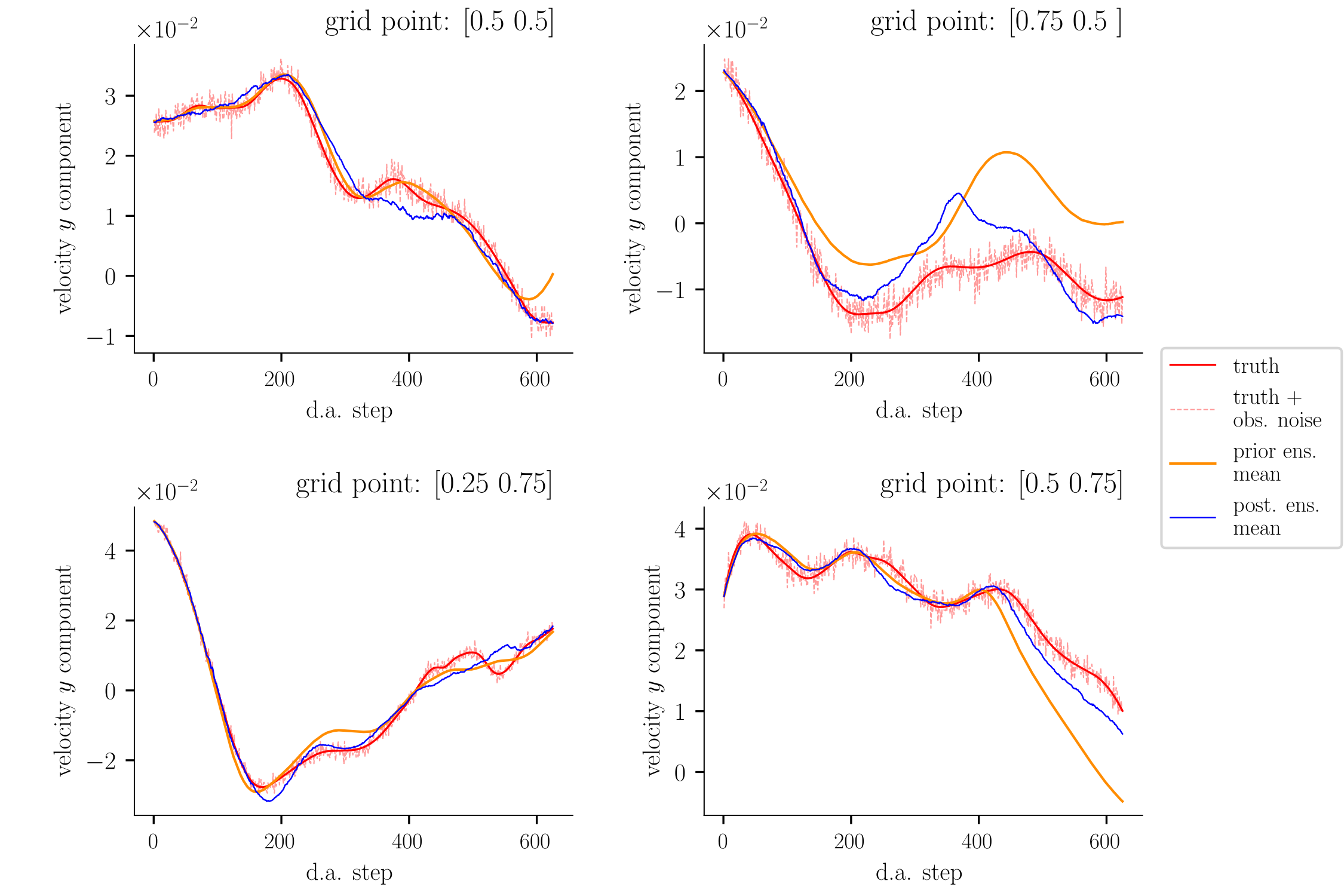}
			\caption{Ensemble mean trajectories} 
			\label{figure:pde traj initial ensemble 1}
		\end{subfigure}
		
		\vspace{0.3cm}
		
		\begin{subfigure}{\textwidth}
			\includegraphics[width=\textwidth]{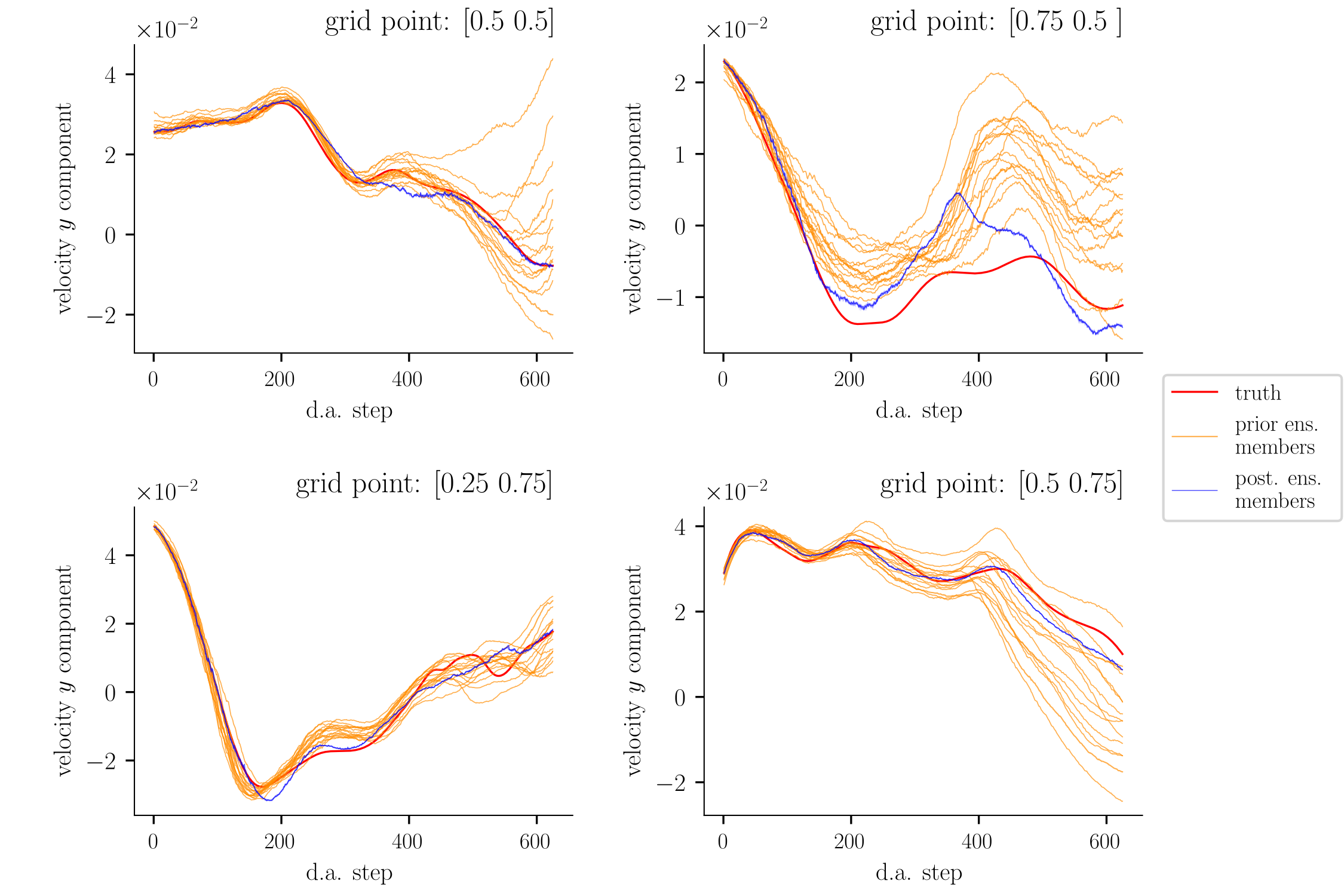}
			\caption{Ensemble member trajectories}
			\label{figure:pde spread initial ensemble 1}
		\end{subfigure}
		
		\caption{\MODIFY{Imperfect model scenario (experiment using initial ensemble set 1, and $289$ observations). In the first subplot, we show the Eulerian trajectories of the truth (red), truth plus observation noise (dashed pink), prior ensemble mean (orange) and posterior ensemble mean (blue). In the second subplot, Eulerian trajectories of 15 individual ensemble members are plotted.} }
		\label{figure:pde trajectory and spread initial ensemble 1}
	\end{minipage}
\end{figure}

\begin{figure}[!htbp]
	\centering
	\begin{minipage}{0.95\textwidth}
		\begin{subfigure}{\textwidth}
			\includegraphics[width=\textwidth]{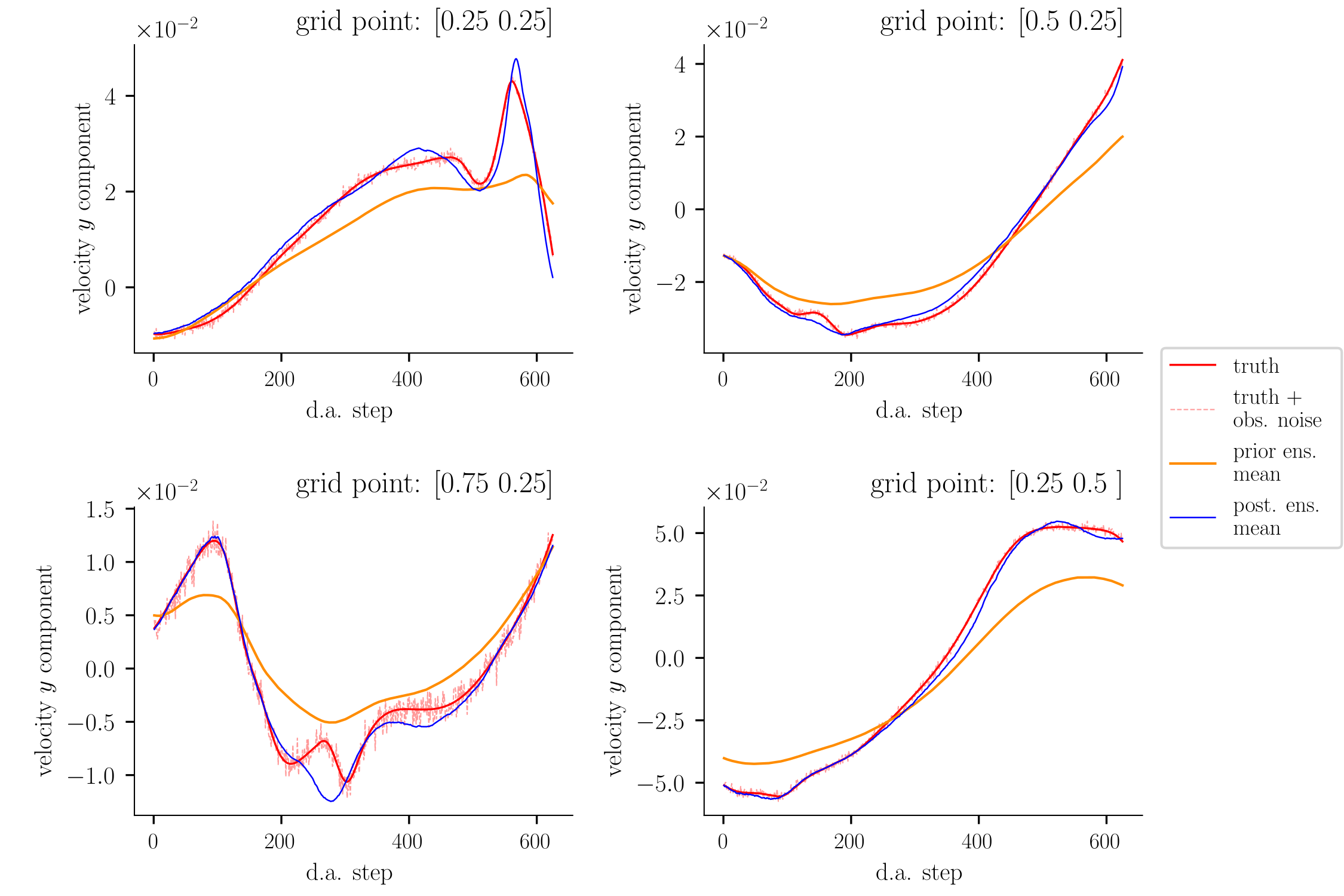}
			\caption{Ensemble mean trajectories} 
			\label{figure:pde traj initial ensemble 2}
		\end{subfigure}
		
		\vspace{0.3cm}
		
		\begin{subfigure}{\textwidth}
			\includegraphics[width=\textwidth]{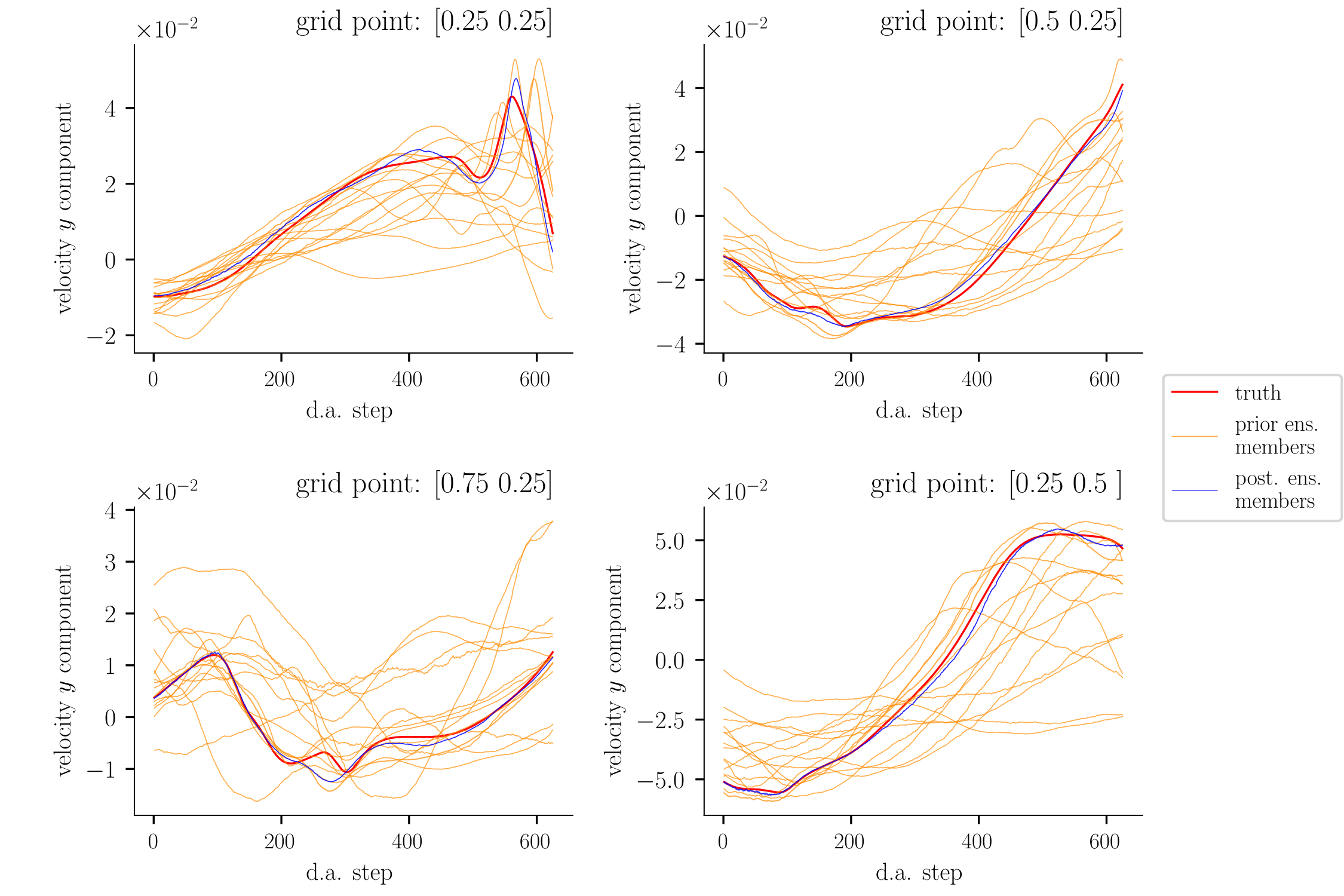}
			\caption{Ensemble member trajectories}
			\label{figure: pde spread initial ensemble 2}
		\end{subfigure}
		\caption{\MODIFY{Imperfect model scenario (experiment using initial ensemble set 2, and $1089$ observations). In the first subplot, we show the Eulerian trajectories of the truth (red), truth plus observation noise (dashed pink), prior ensemble mean (orange) and posterior ensemble mean (blue). In the second subplot, Eulerian trajectories of 15 individual ensemble members are plotted.} }
		\label{figure:pde trajectory spread initial ensemble set 2}
	\end{minipage}
\end{figure}

\begin{figure}[!htbp]
	\centering
	\begin{minipage}{\textwidth}
		\includegraphics[width=\textwidth]{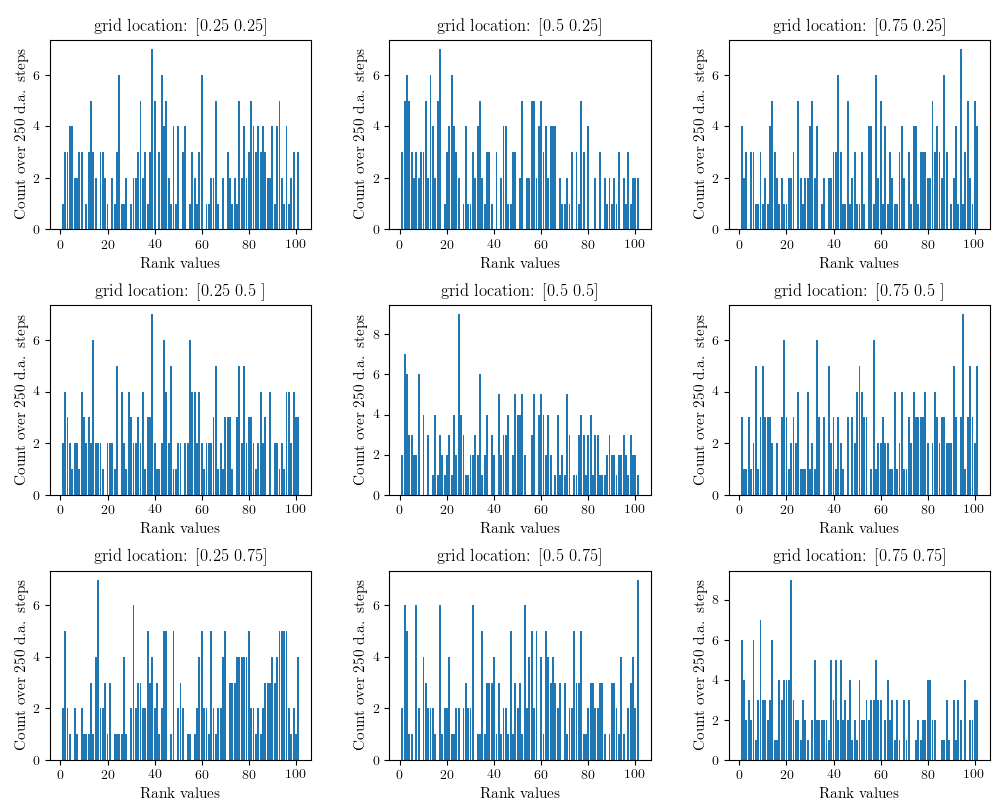}
		\caption{\MODIFY{Imperfect model scenario (using initial ensemble set 1). Forecast reliability rank histogram plots at nine grid locations, for a single run using the parameters: $100$ particles, assimilation period $\Delta = 0.04$ ett, observation noise scaling $\lambda=10$ and $289$ weather stations. Experiment period: $10$ ett. Grid locations are shown above the plots.}}
		\label{figure:pde_rankhistorgram_ux}
	\end{minipage}
\end{figure}

\begin{figure}[!htbp]
	\centering
	\begin{minipage}{0.9\textwidth}
		\includegraphics[width=\textwidth]{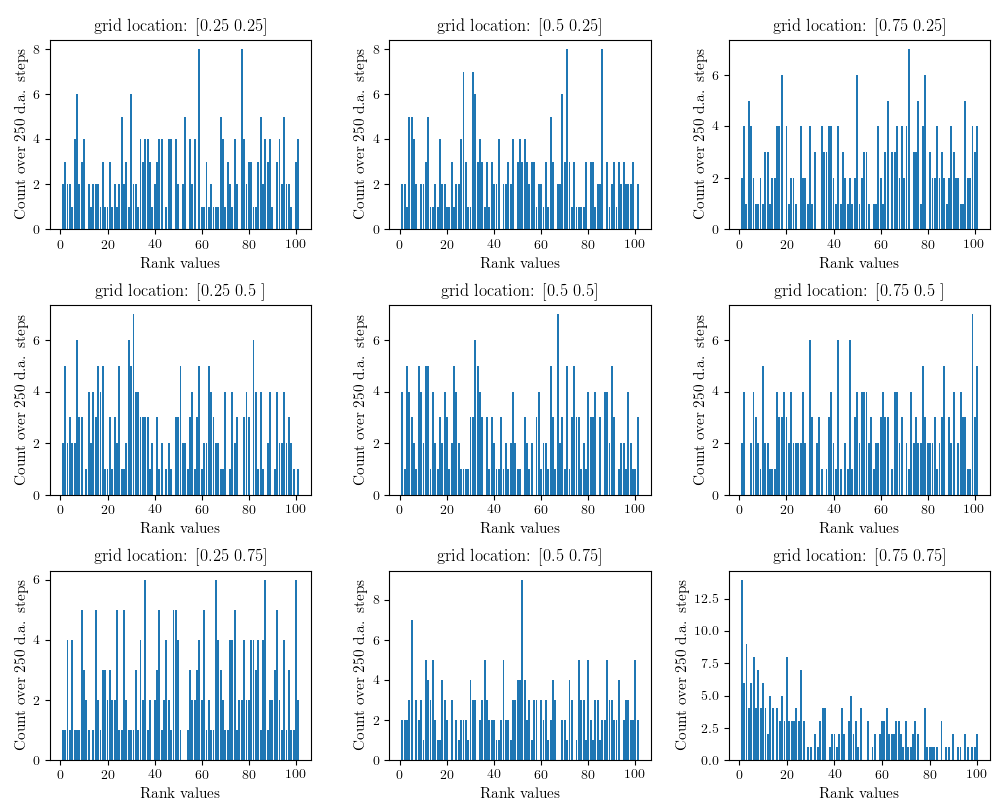}
		\caption{\MODIFY{Imperfect model scenario (using initial ensemble set 1). Velocity x-component rank histogram plots at nine grid locations, for a single run using the parameters: $100$ particles, assimilation period $\Delta = 0.04$ ett, observation noise scaling $\lambda=10$ and $81$ weather stations. Experiment period: $10$ ett. Grid locations are indicated above the plots.}}
		\label{figure:pde_rankhistorgram_ux_msh8}
	\end{minipage}		
\end{figure}

\MODIFY{The coarse grained fine resolution PDE solution is used as the true state in this experiment scenario, see remark \ref{remark: coarse graining}. As explained in earlier sections, the calibrated SPDE is a result of model reduction applied to the fine resolution PDE. Given the adequate results shown in section \ref{sec:perfectmodelscenario}, this scenario tests the feasibility of combining model reduction with the filtering algorithm.}

\MODIFY{We ran two different experiments independently of each other, for a total experiment time of $5$ ett:}
\begin{enumerate}
	\item \MODIFY{time interval between assimilations $\Delta=0.008$ ett (every coarse time step), observation error scaling $\lambda=0.6$, initial ensemble set 1, number of weather stations $d_y = 289$;}
	
	\item \MODIFY{time interval between assimilations $\Delta=0.008$ ett, observation error scaling $\lambda=0.6$, initial ensemble set 2, number of weather stations $d_y = 1089$.}
\end{enumerate}
\MODIFY{For both, the assimilation interval choice amounts to a total of $625$ data assimilation steps.}

\MODIFY{It is to be expected that the results would not be comparable to those from the previous subsection. In this scenario the truth is from a different dynamical system (PDE) to the signal process (SPDE), see the discussion around \eqref{eq:filtering_density}. Additional sources of error are thus  introduced into the particle filter algorithm, see \citet{ClarkCrisan}. 
Further, using $289$ weather stations amounts to observing $0.01837\%$ of the truth state space in this scenario. This percentage is improved to $0.06924\%$ when using $1089$ weather stations in the experiment using initial ensemble 2. In either case, we have two orders of magnitude less information about the truth state than the experiments in section \ref{sec:perfectmodelscenario}.}

\MODIFY{Figure \ref{figure: pde rmse and spread initial ensemble set 1} corresponds to the experiment using initial ensemble 1 and $289$ observations. Figure \ref{figure: pde rmse and spread initial ensemble set 2} corresponds to the experiment using initial ensemble 2 and $1089$ observations. The left subplots in figures \ref{figure: pde rmse and spread initial ensemble set 1} and \ref{figure: pde rmse and spread initial ensemble set 2} show comparisons of the rmse between the one step forecast ensemble mean and the true state (in green) \[\rmse(\bar{\pp}_{i,1},X^\dagger_i),\] the rmse between the prior ensemble mean and the true state (in orange) \[\rmse(\bar{\qq}_i, X_i^\dagger),\] and lastly the rmse between the one step forecast ensemble mean and the true state plus observation noise (in red) \[\rmse(\bar{\pp}_{i,1},X_i^\dagger + \epsilon_i).\] }

\MODIFY{In both rmse subplots in figures \ref{figure: pde rmse and spread initial ensemble set 1} and \ref{figure: pde rmse and spread initial ensemble set 2}, $\rmse(\bar{\pp}_{i,1},X_i^\dagger + \epsilon_i)$ (in red) is \emph{not stable} but show increasing trends. Thus the size of the observation noise is not the dominating factor (unlike in the perfect model scenario). The reason is most likely due to the difference between the SPDE and the PDE solution submanifolds. Better calibration, and additional data assimilation techniques such as nudging maybe help to improve this, see \citet{Cotter2020}. In figure \ref{figure: pde rmse and spread initial ensemble set 1}, the increase in $\rmse(\bar{\pp}_{i,1},X_i^\dagger + \epsilon_i)$ at assimilation step $625$ is about $1.5$ times its initial value. In figure \ref{figure: pde rmse and spread initial ensemble set 2}, the increase in $\rmse(\bar{\pp}_{i,1},X_i^\dagger + \epsilon_i)$ at assimilation step $625$ is no more than $1.5$ times its value after the first assimilation 
(after the jump). The rmse values are of order $10^{-3}$, therefore are comparable to those shown in the perfect model scenario.}

\MODIFY{In both figures, increasing trends are also observed for the rmse between the one step forecast ensemble mean and the truth without observation noise $\rmse(\bar{\pp}_{i,1},X_i^\dagger)$ (in green).  
We note that the green rmse values are all of order $10^{-3}$ and therefore
remain comparable to those shown in the perfect model scenario.}

\MODIFY{Comparing the evolution of the rmse of the forecast ensemble (in green) with that of the rmse of the prior ensemble (in orange), we see the former has considerably smaller increase. 
Taking this into account as well as the order of the rmse values,
the data we assimilated, albeit very low dimensional relative to the PDE degree of freedom, still provides sufficient information to control the posterior ensemble. 
Therefore we judge that the posterior ensemble mean offers a reasonably accurate approximation of the signal.}

\MODIFY{In the ensemble spread subplots in figures \ref{figure: pde rmse and spread initial ensemble set 1} and \ref{figure: pde rmse and spread initial ensemble set 2}, we see that due to the resampling step at each assimilation time, there is a small difference between the ensemble spreads of the posterior (in blue) and the one step forecast (in green). 
One can also see that the posterior ensemble spreads are reasonably stable. However, in the absence of data assimilation corrections, for the prior distribution (in orange) we see a continuous increase in both the rmse and spread.}

\MODIFY{Figure \ref{figure:pde_statistics} compares the experiment which used initial ensemble 1 and $289$ observations, with the experiment which used initial ensemble 2 and $1089$ observations.
In the figure, the four subplots compare $\rmse(\bar{\ppi}_i, X^\dagger_i)$, $\sprd(\ppi_i)$, number of SPDE computations and ess.}

\MODIFY{As shown in table \ref{tab:initial_ensemble}, the two initial ensembles produce very different initial rmse and ensemble spread. The rmse values of initial ensemble 1's ensemble mean are two orders of magnitude smaller than initial ensemble 2's rmse values. The ensemble spread of initial ensemble 1 is also one order of magnitude smaller than the ensemble spread of initial ensemble 2. Despite these relatively large initial differences, after one data assimilation step is completed, the rmse of the corresponding ensembles become comparable. The reason is that all unlikely particles are immediately eliminated whilst the diversity of the ensemble is kept high through the tempering procedure. This is also visualised in figure \ref{figure: pde rmse and spread initial ensemble set 2}, indicated by the initial sharp decrease in the one step forecast rmse, $\rmse(\bar{\pp}_{i,1},X^\dagger_i)$, and one step forecast ensemble spread, $\sprd(\pp_{i,1})$ (both plotted in green).}

\MODIFY{From the rmse subplot in figure \ref{figure:pde_statistics}
one can also see that, overall, the blue rmse plot show slight improvement on the red rmse plot. 
This is most likely because more data were assimilated. The dimension of the assimilated data also impacts the ensemble spread and computational cost. 
We see from the ensemble spread subplot in figure \ref{figure:pde_statistics} that, increasing the observed data dimension led to smaller posterior ensemble spread. }

\MODIFY{The computational cost shown in the bottom left subfigure of figure \ref{figure:pde_statistics} was computed in the same way as in section \ref{sec:perfectmodelscenario}. We see that, increasing the observed data dimension from $289$ to $1089$, led to around $1.5$ -- $2$ times computational increase. It is also interesting to note that the computation cost show an increasing trend. This perhaps reflects the increase in the rmse of the posterior ensemble.}

\MODIFY{Finally, the ess value in figure \ref{figure:pde_statistics} shows that the tempering procedure is successful in keeping the ess values near the chosen threshold of $80\%$.}

\MODIFY{In figure \ref{figure: pde rmse vs spread initial ensemble set 1}, we compare forecast rmse with forecast ensemble spread, i.e.
\[
\rmse(\bar{\pp}_{i,j}, X_j^\dagger)
\quad\text{ with }\quad
\sprd(\pp_{i,j})\quad j=1,\dots,50, \quad i=230, 375
\]
taking $\pp_{i,0} = \ppi_{i}$. Note that there are 145 assimilation steps between $\ppi_{230}$ and $\ppi_{375}$, which amounts to $1.16$ ett.
In this scenario, the plots show that the forecast rmse and forecast ensemble spread are less comparable (rmse seem to be about 5 times larger than the spread), than in the perfect model scenario, but the quantities remain the same order $O(10^{-3})$. Further, the difference between corresponding rmse and spread is not quite maintained -- 
the distance between the blue solid line and blue dashed line, is a little larger than the distance between the red solid line and the red dashed line.
These features reflect the evolution of the forecast ensemble rmse and spread, shown in figures \ref{figure: pde rmse and spread initial ensemble set 1} and \ref{figure: pde rmse and spread initial ensemble set 2}, as well as the evolution of the posterior ensemble rmse and spread show in figure \ref{figure:pde_statistics}. }

\MODIFY{In figures \ref{figure:pde trajectory and spread initial ensemble 1} and \ref{figure:pde trajectory spread initial ensemble set 2} we show the Eulerian trajectories of the velocity y-component at four spatial locations. Figure \ref{figure:pde trajectory and spread initial ensemble 1} corresponds to the experiment using initial ensemble 1 and $289$ observations. Figure \ref{figure:pde trajectory spread initial ensemble set 2} corresponds to the experiment using initial ensemble 2 and $1089$ observations. In each figure, the two subfigures correspond to the same experiment at the same grid locations.}

\MODIFY{In the subfigures \ref{figure:pde traj initial ensemble 1} and \ref{figure:pde traj initial ensemble 2}, we plot the true state (in red), the true state plus observation noise (in dashed pink), the posterior ensemble mean (in blue) and the prior ensemble mean (in orange). 
Since initial ensemble 1 start very close to the initial truth, we see in subfigure \ref{figure:pde traj initial ensemble 1} that the prior ensemble mean's initial deviation from the truth is small. But the deviation become much more pronounced after assimilation step 100 at grid location $[0.75, 0.5]$. The posterior ensemble mean (in blue) largely stays close to the observed truth (in pink) at all four grid locations. However at grid point [0.75, 0.5], we see the blue line lost track of the truth before recovering. Also at grid point [0.5, 0.75], we see the blue line gradually deviating from the truth after step 400.}

\MODIFY{In subfigure \ref{figure:pde spread initial ensemble 1}, trajectory of individual ensemble members are plotted. It shows how the prior ensemble spread increases as time goes on, but the posterior ensemble spread seems stable. This supports the features shown in figures \ref{figure:pde_statistics}, \ref{figure: pde rmse and spread initial ensemble set 1} and \ref{figure: pde rmse and spread initial ensemble set 2}. In this scenario though, because there is model error, we see in subfigure \ref{figure:pde spread initial ensemble 1}, at grid point [0.75, 0.5], that the truth deviated from the spread of the prior ensemble.}

\MODIFY{Initial ensemble 2 start alot farther from the initial truth compared to initial ensemble 1. In subfigures \ref{figure:pde traj initial ensemble 2} and \ref{figure: pde spread initial ensemble 2}, we see that the prior ensemble mean deviates from the truth more greatly than shown in subfigures \ref{figure:pde traj initial ensemble 1} and \ref{figure:pde spread initial ensemble 1}. For example, at grid point [0.25, 0.5]. But the posterior ensemble mean (in blue) stay reasonably close to the observed truth at all grid points. At grid point [0.75, 0.25], the posterior lost track of the truth at between step 200 and 300, before recovering.
Overall, despite the filter performing less well in this scenario compared to the perfect model scenario, the results provide good evidence to support the observation that the filtering algorithm was able to eliminate the unlikely particle positions whilst maintaining ensemble diversity, to reasonably approximate the truth.}

\MODIFY{Lastly, in figures \ref{figure:pde_rankhistorgram_ux} and \ref{figure:pde_rankhistorgram_ux_msh8} we show rank histogram plots of the Eulerian velocity x-component, at nine grid locations. Figure \ref{figure:pde_rankhistorgram_ux} corresponds to the experiment using initial ensemble 1, $289$ observations, assimilation period $0.04$ ett and observation noise scaling $\lambda=10$.
The plots show some features of skew, e.g. at grid locations [0.5, 0.5], [0.75, 0.75] and [0.5, 0.25].
Figure \ref{figure:pde_rankhistorgram_ux_msh8} corresponds to the same repeated experiment but using a fewer number of weather stations ($81$ observations). We observe that, assimilating less data led to more pronounced features of skew, e.g. at grid locations $[0.75, 0.75]$.
}

\section{Conclusion}\label{sec:conclusion}
In this work we used a particle filter which included three additional procedures (model reduction, tempering and jittering) in a high dimensional data assimilation (DA) case study. We interpreted the task as solving a filtering problem with a continuous time signal via  discrete observations. 

\MODIFY{In the main numerical experiments, which we called ``imperfect model scenario",} the ``truth" was modelled by a highly resolved numerical solution of a damped and forced incompressible 2D Euler equation which had roughly $3.1\times10^6$ degrees of freedom. The data consisted of a time series of \MODIFY{$625$} discrete observations of the fluid velocity measured on a sparse spatial grid which varied in size from \MODIFY{$289$ to $1089$}. The model reduction involved the addition of a stochastic parametrisation of the above equation solved on a coarser grid of about $4.9\times10^4$ degrees of freedom.  

\MODIFY{For our chosen parameters, the numerical results show, the combination of the stochastic model reduction with the tempering based particle filter algorithm produced a posterior ensemble of 100 particles which, despite some model bias, approximated the signal reasonably accurately for a total experiment period of $5$ eddy turnover times ($5000$ fine resolution time steps). All computations were done using modest computational hardware, employing all of the additional procedures (model reduction, tempering and jittering).} 
\MODIFY{We also tested the reliability of the assimilated ensemble system. There, the numerical results show some features of bias and under-dispersion, due to model discrepancies. Nevertheless, our results show the combined algorithm is sufficiently robust.}

In a sequel to this work we aim to incorporate additional procedures (nudging, space-time data assimilation) to the ones discussed here, to further refine the performance of the \MODIFY{combined} particle filter algorithm discussed here. 

\subsubsection*{Code and data availability}
All Python implementation code and experiment data files are available from the corresponding author upon request.

\section*{Acknowledgements}
The authors thank The Engineering and Physical Sciences Research Council (EPSRC) for their support of this work through the grant EP/N023781/1. The authors also thank Nikolas Kantas, Peter Korn, Sebastian Reich, Paul-Marie Grollemund for the many useful, constructive discussions held with them throughout the preparation of this work. 

\bibliographystyle{plainnat}
\bibliography{2DEulerDataAssimilation}

\begin{thebibliography}{31}
\providecommand{\natexlab}[1]{#1}
\providecommand{\url}[1]{\texttt{#1}}
\expandafter\ifx\csname urlstyle\endcsname\relax
  \providecommand{\doi}[1]{doi: #1}\else
  \providecommand{\doi}{doi: \begingroup \urlstyle{rm}\Url}\fi

\bibitem[Bain and Crisan(2009)]{bain2009fundamentals}
A.~Bain and D.~Crisan.
\newblock \emph{Fundamentals of stochastic filtering}, volume~3.
\newblock Springer, 2009.

\bibitem[Beskos et~al.(2011)Beskos, Crisan, and Jasra]{beskosCrisanjasra}
A.~Beskos, D.~Crisan, and A.~Jasra.
\newblock On the stability of sequential monte carlo methods in high
  dimensions.
\newblock \emph{The Annals of Applied Probability}, 24, 03 2011.
\newblock \doi{10.1214/13-AAP951}.

\bibitem[Beskos et~al.(2014)Beskos, Crisan, Jasra, and
  Whiteley]{beskos2014error}
A.~Beskos, D.~Crisan, A.~Jasra, and N.~Whiteley.
\newblock Error bounds and normalising constants for sequential {Monte Carlo}
  samplers in high dimensions.
\newblock \emph{Advances in Applied Probability}, 46\penalty0 (1):\penalty0
  279--306, 2014.

\bibitem[Beskos et~al.(2017)Beskos, Crisan, Jasra, Kamatani, and
  Zhou]{beskos2017stable}
A.~Beskos, D.~Crisan, A.~Jasra, K.~Kamatani, and Y.~Zhou.
\newblock A stable particle filter for a class of high-dimensional state-space
  models.
\newblock \emph{Advances in Applied Probability}, 49\penalty0 (1):\penalty0
  24--48, 2017.

\bibitem[Broecker(2018)]{brocker2018assessing}
J.~Broecker.
\newblock Assessing the reliability of ensemble forecasting systems under
  serial dependence.
\newblock \emph{Quarterly Journal of the Royal Meteorological Society},
  144\penalty0 (717):\penalty0 2666--2675, October 2018.
\newblock \doi{10.1002/qj.3379}.
\newblock URL \url{http://centaur.reading.ac.uk/78201/}.

\bibitem[Buizza et~al.(1999)Buizza, Milleer, and Palmer]{buizza1999stochastic}
R.~Buizza, M.~Milleer, and T.N. Palmer.
\newblock Stochastic representation of model uncertainties in the {ECMWF}
  ensemble prediction system.
\newblock \emph{Quarterly Journal of the Royal Meteorological Society},
  125\penalty0 (560):\penalty0 2887--2908, 1999.

\bibitem[Chern et~al.(1999)Chern, Chen, and Lam]{chernchenlam}
S.S. Chern, W.H. Chen, and K.S. Lam.
\newblock \emph{Lectures on Differential Geometry}.
\newblock WORLD SCIENTIFIC, 1999.
\newblock \doi{10.1142/3812}.
\newblock URL \url{https://www.worldscientific.com/doi/abs/10.1142/3812}.

\bibitem[Clark and Crisan(2005)]{ClarkCrisan}
J.M.C. Clark and D.~Crisan.
\newblock On a robust version of the integral representation formula of
  nonlinear filtering.
\newblock \emph{Probability Theory and Related Fields}, 133\penalty0
  (1):\penalty0 43--56, Sep 2005.
\newblock ISSN 1432-2064.
\newblock \doi{10.1007/s00440-004-0412-5}.
\newblock URL \url{https://doi.org/10.1007/s00440-004-0412-5}.

\bibitem[Cotter et~al.(2018)Cotter, Crisan, Holm, Pan, and
  Shevchenko]{cotter2018modelling}
C.~Cotter, D.~Crisan, D.D. Holm, W.~Pan, and I.~Shevchenko.
\newblock Modelling uncertainty using circulation-preserving stochastic
  transport noise in a 2-layer quasi-geostrophic model.
\newblock \emph{arXiv preprint arXiv:1802.05711}, 2018.

\bibitem[Cotter et~al.(2019)Cotter, Crisan, Holm, Pan, and
  Shevchenko]{cotter2018numerically}
C.~Cotter, D.~Crisan, D.~Holm, W.~Pan, and I.~Shevchenko.
\newblock Numerically modeling stochastic {L}ie transport in fluid dynamics.
\newblock \emph{Multiscale Modeling \& Simulation}, 17\penalty0 (1):\penalty0
  192--232, 2019.
\newblock \doi{10.1137/18M1167929}.
\newblock URL \url{https://doi.org/10.1137/18M1167929}.

\bibitem[Cotter et~al.(2020)Cotter, Crisan, Holm, Pan, and
  Shevchenko]{Cotter2020}
C.~Cotter, D.~Crisan, D.~Holm, W.~Pan, and I.~Shevchenko.
\newblock Data assimilation for a quasi-geostrophic model with
  circulation-preserving stochastic transport noise.
\newblock \emph{Journal of Statistical Physics}, March 2020.
\newblock \doi{10.1007/s10955-020-02524-0}.
\newblock URL \url{https://doi.org/10.1007/s10955-020-02524-0}.

\bibitem[Cotter et~al.(2017)Cotter, Gottwald, and Holm]{CoGaHo2017}
C.J. Cotter, G.A. Gottwald, and D.D. Holm.
\newblock Stochastic partial differential fluid equations as a diffusive limit
  of deterministic {L}agrangian multi-time dynamics.
\newblock \emph{Proceedings of the Royal Society A: Mathematical, Physical and
  Engineering Sciences}, 473\penalty0 (2205):\penalty0 20170388, 2017.
\newblock \doi{10.1098/rspa.2017.0388}.
\newblock URL
  \url{https://royalsocietypublishing.org/doi/abs/10.1098/rspa.2017.0388}.

\bibitem[{Crisan} and {Doucet}(2002)]{crisanDoucet2002}
D.~{Crisan} and A.~{Doucet}.
\newblock A survey of convergence results on particle filtering methods for
  practitioners.
\newblock \emph{IEEE Transactions on Signal Processing}, 50\penalty0
  (3):\penalty0 736--746, 2002.

\bibitem[Crisan and Lang(2019)]{crisan2019well}
D.~Crisan and O.~Lang.
\newblock Well-posedness for a stochastic {2D Euler} equation with transport
  noise.
\newblock \emph{arXiv preprint arXiv:1907.00451}, 2019.

\bibitem[{Crisan} and {Míguez}(2018)]{crisanmiguez2018}
D.~{Crisan} and J.~{Míguez}.
\newblock \emph{Bernoulli}, 24\penalty0 (4A):\penalty0 3039--3086], publisher =
  {Bernoulli Society for Mathematical Statistics and Probability}, title =
  {{}Nested particle filters for online parameter estimation in discrete--time
  state--space Markov models}, 11 2018.
\newblock \doi{10.3150/17-BEJ954}.

\bibitem[Crisan et~al.(2018{\natexlab{a}})Crisan, Flandoli, and
  Holm]{crisan2017solution}
D.~Crisan, F.~Flandoli, and D.D. Holm.
\newblock Solution properties of a {3D} stochastic {E}uler fluid equation.
\newblock \emph{Journal of Nonlinear Science}, Oct 2018{\natexlab{a}}.
\newblock ISSN 1432-1467.
\newblock \doi{10.1007/s00332-018-9506-6}.
\newblock URL \url{https://doi.org/10.1007/s00332-018-9506-6}.

\bibitem[Crisan et~al.(2018{\natexlab{b}})Crisan, Lopez-Yela, and
  Miguez]{crisan2018stable}
D.~Crisan, A.~Lopez-Yela, and J.~Miguez.
\newblock Stable approximation schemes for optimal filters.
\newblock \emph{arXiv preprint arXiv:1809.00301}, 2018{\natexlab{b}}.

\bibitem[Dashti and Stuart(2017)]{dashti2017bayesian}
M.~Dashti and A.~M. Stuart.
\newblock The {B}ayesian approach to inverse problems.
\newblock \emph{Handbook of Uncertainty Quantification}, pages 311--428, 2017.

\bibitem[Evensen(2009)]{evensen2009data}
G.~Evensen.
\newblock \emph{Data assimilation: the ensemble {K}alman filter}.
\newblock Springer Science \& Business Media, 2009.

\bibitem[Gay-Balmaz and Holm(2013)]{gay2013selective}
F.~Gay-Balmaz and D.D. Holm.
\newblock Selective decay by {C}asimir dissipation in inviscid fluids.
\newblock \emph{Nonlinearity}, 26\penalty0 (2):\penalty0 495, 2013.

\bibitem[Holm(2015)]{Holm20140963}
D.D. Holm.
\newblock Variational principles for stochastic fluid dynamics.
\newblock \emph{Proceedings of the Royal Society of London A: Mathematical,
  Physical and Engineering Sciences}, 471\penalty0 (2176), 2015.
\newblock ISSN 1364-5021.
\newblock \doi{10.1098/rspa.2014.0963}.
\newblock URL
  \url{http://rspa.royalsocietypublishing.org/content/471/2176/20140963}.

\bibitem[Kantas et~al.(2014)Kantas, Beskos, and Jasra]{kantas2014sequential}
N.~Kantas, A.~Beskos, and A.~Jasra.
\newblock Sequential {Monte Carlo} methods for high-dimensional inverse
  problems: A case study for the {Navier--Stokes} equations.
\newblock \emph{SIAM/ASA Journal on Uncertainty Quantification}, 2\penalty0
  (1):\penalty0 464--489, 2014.

\bibitem[Ljung(1979)]{ljung1979asymptotic}
L.~Ljung.
\newblock Asymptotic behavior of the extended {K}alman filter as a parameter
  estimator for linear systems.
\newblock \emph{IEEE Transactions on Automatic Control}, 24\penalty0
  (1):\penalty0 36--50, 1979.

\bibitem[Majda et~al.(1999)Majda, Timofeyev, and Eijnden]{majda1999models}
A.J. Majda, I.~Timofeyev, and E.V. Eijnden.
\newblock Models for stochastic climate prediction.
\newblock \emph{Proceedings of the National Academy of Sciences}, 96\penalty0
  (26):\penalty0 14687--14691, 1999.

\bibitem[Majda et~al.(2001)Majda, Timofeyev, and V.E.]{majda2001mathematical}
A.J. Majda, I.~Timofeyev, and Eric V.E.
\newblock A mathematical framework for stochastic climate models.
\newblock \emph{Communications on Pure and Applied Mathematics: A Journal
  Issued by the Courant Institute of Mathematical Sciences}, 54\penalty0
  (8):\penalty0 891--974, 2001.

\bibitem[M\'emin(2014)]{Me2014}
E.~M\'emin.
\newblock Fluid flow dynamics under location uncertainty.
\newblock \emph{Geophys. Astro. Fluid}, 108:\penalty0 119--146, 2014.

\bibitem[Neal(2001)]{neal2001annealed}
R.M. Neal.
\newblock Annealed importance sampling.
\newblock \emph{Statistics and computing}, 11\penalty0 (2):\penalty0 125--139,
  2001.

\bibitem[Palmer(2018)]{palmer2018ecmwf}
T.~Palmer.
\newblock The {ECMWF} ensemble prediction system: Looking back (more than) 25
  years and projecting forward 25 years.
\newblock \emph{Quarterly Journal of the Royal Meteorological Society}, 2018.

\bibitem[Reich and Cotter(2015)]{reich2015probabilistic}
S.~Reich and C.~Cotter.
\newblock \emph{{Probabilistic forecasting and {B}ayesian data assimilation}}.
\newblock Cambridge University Press, 2015.

\bibitem[{Stuart, A.M.}(2010)]{StuartAM2010IpAB}
{Stuart, A.M.}
\newblock {Inverse problems: A {B}ayesian perspective}.
\newblock \emph{Acta numerica.}, 19:\penalty0 451--559, 2010.
\newblock ISSN 0962-4929.

\bibitem[van Leeuwen et~al.(2019)van Leeuwen, K{\"{u}}nsch, Nerger, Potthast,
  and Reich]{VanLeeuwen2019}
P.J. van Leeuwen, H.R. K{\"{u}}nsch, L.~Nerger, R.~Potthast, and S.~Reich.
\newblock {Particle filters for high-dimensional geoscience applications: A
  review}.
\newblock \emph{Quarterly Journal of the Royal Meteorological Society}, 2019.
\newblock ISSN 1477870X.
\newblock \doi{10.1002/qj.3551}.

\end{thebibliography}

\end{document}